\newcommand{\ds}{\displaystyle}
\newcommand{\mc}{\mathcal}
\DeclareMathOperator{\argmax}{argmax}
\DeclareMathOperator{\supp}{supp}
\DeclareMathOperator{\argsup}{argsup}
\newcommand{\bbm}{\begin{bmatrix}}
\newcommand{\bpm}{\begin{pmatrix}}
\newcommand{\ebm}{\end{bmatrix}}
\newcommand{\epm}{\end{pmatrix}}
 \newcommand{\del}[2]{\frac{\partial #1}{\partial #2}}
 \newcommand{\dsdel}[2]{\displaystyle\frac{\partial #1}{\partial #2}}
\newcommand{\dsddx}[2]{\displaystyle\frac{\mathrm{d} #1}{\mathrm{d} #2}}
\newcommand{\dsddt}[1]{\displaystyle\frac{\mathrm{d} #1}{\mathrm{d}t}}
\newcommand{\dsint}{\mathrm{d}s}
\newcommand{\dx}{\mathrm{d}x}
\newcommand{\dy}{\mathrm{d}y}
\newcommand{\holder}{Hölder  \hspace{0.05mm}}
\renewcommand{\abstractname}{Abstract}
\numberwithin{equation}{section}
\title{\large{The Replicator Dynamics for Multilevel Selection in Evolutionary Games}}
\author[1]{Daniel B. Cooney}
\affil[1]{Program in Applied and Computational Mathematics, Princeton University}
\newcommand{\myindent}{\hspace{10mm}}
\begin{document}

\renewcommand{\baselinestretch}{1.1}
\newtheorem{definition}{Definition}[section]
\newtheorem{theorem}{Theorem}[section]
\newtheorem{lemma}[theorem]{Lemma}
\newtheorem{corollary}[theorem]{Corollary}
\newtheorem{claim}[theorem]{Claim}
\newtheorem{fact}[theorem]{Fact}
\newtheorem{proposition}{Proposition}[section]
\newtheorem{remark}{Remark}[section]
\newtheorem{example}{Example}[section]

% to get nice proofs ...
\newcommand{\qedsymb}{\mbox{ }~\hfill~{\rule{2mm}{2mm}}}
\newenvironment{proof1}{\begin{trivlist}
\item[\hspace{\labelsep}{\bf\noindent Proof: }]
}{\qedsymb\end{trivlist}}

\newenvironment{hackyproof}{\begin{trivlist}
\item[\hspace{\labelsep}{\bf\noindent Proof: }]
}{%\qedsymb
\end{trivlist}}

\maketitle

\begin{abstract}
\begin{spacing}{0.1}
We consider a stochastic model for evolution of group-structured populations in which interactions between group members correspond to the Prisoner's Dilemma or the Hawk-Dove game. Selection operates at two organization levels: individuals compete with peer group members based on individual payoff, while groups also compete with other groups based on average payoff of group members. 
In the Prisoner's Dilemma, this creates a tension between the two levels of selection, as defectors are favored at the individual level, whereas groups with at least some cooperators outperform groups of defectors at the between-group level.  In the limit of infinite group size and infinite number of groups, we derive a non-local PDE that describes the probability distribution of group compositions in the population. For special families of payoff matrices, we characterize the long-time behavior of solutions of our equation, finding a threshold intensity of between-group selection required to sustain density steady-states and the survival of cooperation.
When all-cooperator groups are most fit, the average and most abundant group compositions at steady-state range from featuring all-defector groups when individual-level selection dominates to featuring all-cooperator groups when group-level selection dominates. When the most fit groups have a mix of cooperators and defectors, then the average and most abundant group compositions always feature a smaller fraction of cooperators than required for the optimal mix, even in the limit where group-level selection is infinitely stronger than individual-level selection.   
In such cases, the conflict between the two levels of selection cannot be decoupled, and cooperation cannot be sustained at all in the case where between-group competition favors an even mix of cooperators and defectors.
\end{spacing}
\end{abstract}

\singlespacing
{\hypersetup{linkbordercolor=black, linkcolor = black}
\begin{spacing}{0.01}
\renewcommand{\baselinestretch}{0.1}\normalsize
\tableofcontents
\addtocontents{toc}{\protect\setcounter{tocdepth}{2}}
%\addtocontents{toc}{~\vspace{-3\baselineskip}}
\end{spacing}
%\clearpage

%\doublespacing
%\onehalfspacing
\section{Introduction}
%\addtocontents{toc}{\protect\enlargethispage{\baselineskip}}
Across many complex biological systems, ecological and evolutionary dynamics can best be understood by considering forces of competition and selection acting at multiple organizational levels. Competition at multiple levels is a driving force for the evolution of virulence, where pathogen strains which replicate quickly within-host can outcompete rival strains, but are also more likely to kill the host, preventing further transmission of the strain \citep{gilchrist2004optimizing,gilchrist2006evolution,coombs2007evaluating}. The balance of these two dynamics has been theoretically demonstrated to select for the evolution of strains with intermediate virulence, and has been posited as an explanation for the diminished virulence of myxomatosis in rabbit populations in Australia \citep{levin1981selection,dwyer1990simulation}. %

\myindent Multilevel selection is also considered to be a factor in prebiotic evolution and the evolution of multicellularity \citep{szathmary1987group,hogeweg2003multilevel, takeuchi2009multilevel, takeuchi2012evolutionary, tarnita2013evolutionary,mathis2017prebiotic}. In an attempt to answer the question ``Which came first, the protein or the nucleic acid?''\citep{eigen1971selforganization}, Eigen used the quasispecies equation to show that competition of self-replicating genes leads to the ultimate dominance of a single type, implying the inability of quasispecies competition to generate genomes of greater biological complexity \citep{eigen1971selforganization,szathmary1995major}. To overcome this hurdle, Szathmary and Demeter introduced the ``stochastic corrector'' model, which places component elements into separated vesicles which are themselves capable of self-replication \citep{szathmary1987group}. Considering two constituent self-replicating quasispecies, Szathmary and Demeter show that the coupling of selection within vesicles and between vesicles allow for the coexistence of competing quasispecies that cannot coexist in the absence of vesicle structure \citep{szathmary1987group}. %

\myindent These hierarchical biological phenomena can often be viewed in the framework of game theory, where a population of individuals is subdivided into groups, and the strategic behavior of individuals has effects both on competition for resources or reproduction between individuals in the same group and in higher-level competition where groups compete with other groups based on their respective strategic compositions. These ideas have been explored in simulation studies for the evolution of protocells and the origins of life \citep{markvoort2014computer,hogeweg2003multilevel}. Similar phenomena have been observed in field and laboratory research on the ant species {\it Pogonomyrmex californicus} \citep{shaffer2016foundress}, whose colonies can be founded by multiple unrelated queens, and where characteristics such as density of ant hills can serve as a mechanism for mediating the individual advantage for aggressive, solitary ant queens and the greater capacity for cooperative queens to establish shared ant hills. 

\myindent Models of multilevel selection have been studied in the population genetics literature, starting with the Kimura equation with deme-level selection events \citep{kimura1955solution}, which uses a diffusion expansion to predict the fixation probabilities of two alleles, one which dominates in within-group competition, while the other type dominates in between-group selection. Ogura and Shimakura then studied the Kimura equation, characterizing invariant families of solutions and demonstrating convergence to steady-state distributions of alleles \citep{ogura1987stationary,ogura1987stationary2}. Fontanari and Serva further generalize the Kimura model to allow for nonlinear group-level reproduction functions to describe the evolution of protocells, and show that group reproduction functions that favor coexistence of an even mix of cooperators and defectors can result in steady-states with groups of coexisting cooperators and defectors \citep{fontanari2014effect,fontanari2014nonlinear,fontanari2013solvable}.

\myindent In the context of evolutionary games, Traulsen and Nowak introduced a model of multilevel selection in finite populations, in which within-group selection follows the frequency-dependent Moran process and groups can also undergo fission when they reach a maximum size \citep{traulsen2006evolution,traulsen2008analytical}. A similar approach was followed by Böttcher and Nagler, who explored the frequency-independent donation game and added the possibility of group-level extinction events that depend on average group payoff \citep{bottcher2016promotion}. In both models, the authors derive conditions to favor the fixation of a single cooperator in a resident population of defectors relative to the fixation of a single defector in a resident population of cooperators. For games in which within-group or between-group selection favors coexistence of cooperation and defection, it becomes important to study the dynamics for the full distribution of group compositions, as fixation is not necessarily the predicted long-time behavior.

\myindent In this spirit, Luo introduced a stochastic ball-and-urn process model for multilevel selection with birth and death events for both individuals and groups \citep{luo2014unifying}. In this model, the population is subdivided into groups; and agents can be of one of two types: those with an advantage at the within-group level of selection (type-$I$) and those who contribute to success at the between-group level of selection (type-$G$). Selection in this model is frequency-independent, as type-$G$ individuals reproduce at rate $1$ and type-$I$ individuals reproduce at faster rate $1+s$, regardless of the composition of their group. Groups reproduce at rates proportional to their fraction of type-$G$ individuals, so the two levels of selection act in direct opposition.%, so type-$I$ individuals outperform type-$G$ individuals in within-group selection, but groups with many type-$G$ individuals outperform groups with many type-$I$ individuals. 
In the limit of a large number of groups and large group size, Luo obtains deterministic population dynamics governed by a non-local (or integro-) PDE 
\begin{equation} \label{eq:luomattingly}
\dsdel{f(t,x)}{t} = \dsdel{}{x}\left[x(1-x) f(t,x) \right] + \lambda f(t,x) \left[x - \int_0^1 y f(t,y) \dy \right]
\end{equation}
where $f(t,x)$ is the probability density for a group with a fraction $x$ of type-$I$ individuals at time $t$ and $\lambda$ governs the relative intensities of selection at the two levels.
The Luo model was extended by van Veelen et al to explore this multilevel selection model in the context of the debate between kin selection and group selection \citep{van2014simple}. Luo and Mattingly rigorously proved the weak convergence of the ball-and-urn process to a measure-valued PDE in the deterministic limit as $M,N \to \infty$ %and to a martingale-valued Fleming-Viot process in the scaling limit where $M$ and $N$ tend to infinity while maintaining a fixed ratio,
and characterize the long-time behavior of Equation \ref{eq:luomattingly} based on the initial distribution of groups in the population and the relative selection strength $\lambda$ \citep{luo2017scaling}. McLoone et al also study a similar two-level Moran process for a special case of the Hawk-Dove game, showing in simulations that multilevel selection produces higher levels of cooperation than the traditional well-mixed model \citep{mcloone2018stochasticity}. Similar models were introduced by Simon et al, which include more heterogeneity in group size and a broader variety of group-level events such as group fission, fusion, and extinction \citep{simon2010dynamical,simon2012numerical,simon2013towards,simon2016group,puhalskii2017large}. %While the added heterogeneity of these models makes the limiting PDEs tougher to study analytically,
Numerical analysis has shown that group fission and extinction effects can facilitate the emergence of cooperation \citep{simon2016group}.%, and the limiting PDEs have been derived from weak convergence from finite population stochastic processes \citep{puhalskii2017large}.

\myindent Notably, the Luo model was, in part, designed as a minimal model illustrating the direct competition between two-forces of selection: within-group selection that favors defectors, and between-group selection which favors groups with many cooperators \citep{luo2014unifying}. In evolutionary games, one or both of individual payoff and average group payoff can either facilitate the dominance of one type or promote coexistence. At the individual level, Prisoner's Dilemma games promote the dominance of defection, whereas Hawk-Dove games promote coexistence between cooperators and defectors. At the between group level, both Prisoner's Dilemmas and Hawk-Dove games can either exhibit maximal group fitness for all cooperator groups or provide maximal benefit to groups with an intermediate fraction of cooperators and defectors. An interesting question to ask is whether generating individual and group level selection from payoffs of an evolutionary game produces different qualitative behaviors than the frequency-independent Luo model.

%An interesting property of these evolutionary games is whether average payoff of a group is maximized with only cooperators in the group, or whether average group payoff can achieve a maximum when cooperation and defection coexist with a group. 
\myindent The question of achieving maximal average payoff in a group has been recently studied in theoretical and empirical contexts. One mechanism by which group payoff can decrease for high levels of cooperation is subadditivity of payoffs, a characteristic feature of nonlinear public goods game, in which the amount of public good produced can either be a step-function  \citep{pacheco2009evolutionary,souza2009evolution} or a smoothed Fermi function \citep{archetti2011coexistence} of the fraction of cooperators. Maclean et al explore experimentally the population fitness in yeast groups composed of individuals of a ``cooperator'' type, which pays a metabolic cost to produce a protein invertase which decomposes complex carbohydrates, and a ``cheater'' type which consumes decomposed sugars but freerides on the efforts of the cooperators. They show that, under certain conditions, population fitness is maximized by groups which have a combination of cooperators and cheaters  \citep{maclean2010mixture}. Boza and Sz{\'a}mad{\'o} also consider optimal provision of public goods in a simulation of multilevel selection in animal groups. They show that such selection favors convergence to the optimal fraction of cooperators, and refer to the defectors that help to achieve this optimal cooperation as ``beneficial laggards'' \citep{boza2010beneficial}.

\myindent Here, we extend Luo's ball-and-urn model to allow selection at both the within-group and between-group levels to depend on payoffs from a two-player cooperative dilemma. By formulating between-group selection as dependent on the average payoff of group members, we allow for the misalignment of individual and group interests to emerge from the payoff matrices of the Prisoner's Dilemma and the Hawk-Dove game. We then derive a limiting PDE description of this multilevel selection process, which can be understood as coupling the replicator dynamics for within-group selection and a ``group-level replicator dynamics'' that characterizes birth and death events at the group level. We then study the steady-states of our multilevel PDE for both the Prisoner's Dilemma and Hawk-Dove game, and we determine the most abundant group composition at steady-state.

\myindent We characterize the relative levels of between-group and within-group selection strength at which cooperation can coexist with defection, and show that whether the limit of weak within-group selection produces steady-states whose most abundant groups have the highest average payoff depends on whether group average payoff is maximized at full cooperation or if group average payoff achieves its optimum with an intermediate fraction of cooperators. 
In Section \ref{sec:Derivation}, we define our evolutionary game and two-level Moran process and use them to derive a limiting PDE. In Section \ref{sec:PD}, we focus on the Prisoner's Dilemma, proving long-time behavior for a special family of payoff matrices, and then studying the steady-states for more general PD games. In Section \ref{sec:HD}, we perform a similar analysis for the Hawk-Dove game. We postpone some detailed calculations and present them in the Appendix. 

\section{Derivation of Differential Equations} \label{sec:Derivation}

In this paper, we extend the framework of Luo and Mattingly to discuss multilevel selection where the underlying individual and group-level reproductive fitness is generated by two-player evolutionary games. We consider games with two strategies: cooperation ($C$) and defection ($D$).  Our population consists of $M$ groups with $N$ individuals per group. Individuals obtain reproductive fitness by playing the following symmetric two-strategy game against the  other $N-1$ members of its group
\begin{equation} \label{eq:generalpayoffmatrix}
\begin{blockarray}{ccc}
& C & D \\
\begin{block}{c(cc)}
C & R & S \\
D & T & P \\
\end{block}
\end{blockarray}
\end{equation}
where  $R$ is the reward for mutual cooperation, $T$ is the temptation to defect against a cooperator, $P$ is the punishment for mutual defection, and $S$ is the sucker payoff for cooperating with a defector. We define a Prisoner's Dilemma as a game with the payoff rankings $T > R > P > S$, whereas the Hawk-Dove (or Snowdrift) game is defined by $T > R > S > P$ \citep{nowak2006five}. Because payoff will determine birth rates for individuals and groups, we assume for convenience that $R,S,T,P \geq 0$. 

In a group with $i$ cooperators, a cooperator and defector receive average payoffs from interactions of
\begin{align}
F_i^C &= \frac{1}{N-1} \left[ (i-1)R + (N - i) S \right] \\
F_i^D &= \frac{1}{N-1} \left[ i T + (N-i-1) P \right]
\end{align}
Then the average payoff of all members of an $i$-cooperator group is 
\begin{align} G_i &= \frac{ i F_i^C + (N-i) F_i^D}{N} %\nonumber 
= \frac{i \left[S+T-2P\right]}{N-1} + \frac{i^2 \left[ R - S - T + P \right]}{N(N-1)} + \frac{N^2 P}{N(N-1)} + \frac{i(P-R) - NP}{N(N-1)}  \end{align} 
Denoting $x = \frac{i}{N}$, we see that $ G_{x = \frac{i}{N}} \to G(x) = (S+T-2P) x + (R-S-T+P)x^2 + P$ as $N \to \infty$.
Within-group dynamics follow a continuous-time Moran process. Individuals of type $X$ in a group with $i$ cooperators are chosen to give birth at rate $1 + w_I F_i^X$, where $w_I$ is the strength of selection at the within-group or individual level. The offspring replace a randomly chosen individual in the same group, including possibly its parent. 

Denoting by $f_i(t)$ the fraction of groups with $i$ cooperators, we see that $f_i(t)$ increases by $\frac{1}{M}$ due to within-group competition if one of two events happens
\begin{itemize}
\item A new cooperator replaces a defector in a group with $i-1$ cooperators, which happens with rate
$ M f_{i-1}(t) (i-1)\left(1 + w F_{i-1}^C\right) \left(1 - \frac{i-1}{N}\right)  $
\item A new defector replaces a cooperator in a group with $i+1$ cooperators, which happens with rate
$M f_{i+1}(t) (N - (i+1)) \left(1 + w F_{i+1}^D \right) \left(\frac{i+1}{N} \right) $

\end{itemize}

Similarly, $f_i(t)$ decreases by $\frac{1}{M}$ due to within-group competition when 

\begin{itemize}

\item A new cooperator replaces a defector in a group with $i$ cooperators, which happens with rate
$ M f_{i}(t) (i)\left(1 + w F_{i}^C\right) \left(1 - \frac{i}{N}\right)  $
\item A new defector replaces a cooperator in a group with $i$ cooperators, which happens with rate
$M f_{i}(t) (N - i) \left(1 + w F_{i}^D \right) \left(\frac{i}{N} \right) $

\end{itemize}
%and $f_i(t)$ remains unchanged 
and within-group events in which a cooperator replaces a cooperator or a defector replaces a defector leave $f_i(t)$ unchanged.

The dynamics of between-group competition follow a process analogous to a continuous-type Moran process. A group with $i$ cooperators is chosen to produce a copy of itself at rate $\Lambda \left(1 + w_G G_i\right)$, where $w_G$ represents the selection strength at the between-group level and $\Lambda$ modulates the relative rate of within-group and between-group replication events. The offspring group replaces a randomly chosen group in the population.

 The fraction of $i$ cooperator groups $f_i(t)$ increases by $\frac{1}{M}$ in the between-group dynamics when a group of $i$ cooperators is selected to reproduce and a group with a different number of cooperators is selected to die. This occurs with rate  $\Lambda M f_i(t) \left( 1 + w G_i \right)(1- f_i(t))$. %,
The between-group dynamics result in a decrease  of $\frac{1}{M}$ for $f_i(t)$ when a group with $i$ cooperations is chosen to die and a group of different size is selected to replace it, which occurs with rate $\Lambda M f_i(t) \left(\sum_{j \ne i} f_j(t) \left( 1 + w G_j \right) \right)$.

Now we follow the heuristic derivation of Luo \citep{luo2014unifying} and of Van Veelen et al \citep{van2014simple}, obtaining limiting ODE and PDE descriptions of our multilevel system by first taking the limit as the number of groups $M \to \infty$, and then taking the limit as group size $N \to \infty$. As an aside, we note that first taking the limit of group size $N \to \infty$ would yield a system of $M$ ODEs for evolutionary game dynamics in $M$ infinitely large subpopulations, such as studied by Young and Belmonte \citep{young2018fast}.  Luo and Mattingly also show that the multilevel process weakly converges to a deterministic measure-valued process and, in a different scaling limit, can weakly converge to a martingale-valued process known as a Fleming-Viot process\citep{luo2017scaling,dawson2013multilevel}
. A similar rigorous derivation is given by Puhalskii et al for for Simon's multilevel model with group-level fission and extinction events \citep{puhalskii2017large}.

The infinitesimal mean for this continuous-time stochastic process is given by 
\begin{align*} \hspace{-5mm} M_{\Delta t} :=  \mathrm{E}\left[f_i(t+\Delta t) - f_i(t) \right] &= \frac{1}{M} \left[ P\left(f_i(t + \Delta t) - f_i(t) = \frac{1}{M}\right) -  P\left(f_i(t + \Delta t) - f_i(t) = \frac{-1}{M}\right) \right]   \\ &= \frac{1}{M} \left( M f_{i-1}(t) (i-1)\left(1 + w_I F_{i-1}^C\right) \left(1 - \frac{i-1}{N}\right) \right) \Delta t  \\ &+ \frac{1}{M} \left(M f_{i+1}(t) (N - (i+1)) \left(1 + w_I F_{i+1}^D \right) \left(\frac{i+1}{N} \right)\right) \Delta t \\ &- \frac{1}{M} \left(
M f_{i}(t) (i)\left(1 + w_I F_{i}^C\right) \left(1 - \frac{i}{N}\right) \right) \Delta t \\ &- \frac{1}{M} \left( M f_{i+1}(t) (N - (i+1)) \left(1 + w_I F_{i+1}^D \right) \left(\frac{i+1}{N} \right) \right) \Delta t \\ &+ \frac{1}{M} \left( \Lambda M f_i(t) \left( 1 + w_G G_i \right)(1- f_i(t)) - \Lambda M f_i(t) \left(\ds\sum_{j \ne i} f_j(t) \left( 1 + w_G G_j \right) \right)  \right) \Delta t \end{align*} 
We can rearrange this to say that 
\begin{align*} \frac{ \mathrm{E}\left[f_i(t+\Delta t) - f_i(t) \right]}{\Delta t} &= \frac{1}{N} D_2\left(f_i(t) \frac{i}{N} \left(1 - \frac{i}{N}\right) \right) + \Lambda w_G f_i(t)  \left( G_i - \ds\sum_{j=0}^N G_j f_j(t) \right)  \\ &+ w_I \left[D_1^+ \left(f_i(t) \frac{i}{N}(1 - \frac{i}{N}) F_i^D \right) - D_1^- \left(f_i(t) \frac{i}{N} (1- \frac{i}{N}) F_i^C \right)  \right] \end{align*}
where $D_2(\cdot)$, $D_1^+(\cdot)$ and $D_1^-(\cdot)$ are second-order, first-order forward, and first-order backward difference quotients given by the formulas
\[D_1^+\left(u\left(\tfrac{i}{N}\right) \right) := \frac{u(\frac{i+1}{N}) - u(\frac{i}{N})}{\frac{1}{N}} \: \:, \: \: D_1^-\left(u(\tfrac{i}{N})\right) := \frac{u(\frac{i}{N}) - u(\frac{i-1}{N})}{\frac{1}{N}} \] \[D_2\left(u(\tfrac{i}{N}) \right) := \frac{u(\frac{i+1}{N}) - 2 u(\frac{i}{N}) + u(\frac{i-1}{N})}{\frac{1}{N^2}} \] 

Dividing both sides by $w_I$, taking the limit $\Delta t \to 0$, and rescaling time as $\tau = \frac{t}{w_I}$, we have \begin{align*} \dsddx{E[f_i(t)]}{\tau} &= \frac{1}{N} D_2\left(f_i(t) \frac{i}{N} (1 - \frac{i}{N}) \right) + \lambda f_i(t)  \left( G_i - \ds\sum_{j=0}^N G_j f_j(t) \right)  \\ &+  D_1^+ \left(f_i(t) \frac{i}{N}(1 - \frac{i}{N}) F_i^D \right) - D_1^- \left(f_i(t) \frac{i}{N} (1- \frac{i}{N}) F_i^C \right) \end{align*}
where we have introduced the parameter $\lambda := \frac{\Lambda w_G}{w_I}$ to jointly measure the relative rate of between-group and within-group selection events ($\Lambda$) and the relative strengths of within-group ($w_I$) and between-group ($w_G$) selection.
By similar calculations, we see that the infinitesimal variance of $f_i(t)$ is % 
\begin{align*} V_{\Delta t} &:= E \left[ \left(f_i(t + \Delta t) - f_i(t) \right)^2   \right]  = \frac{1}{M^2} P\left(f_i(t + \Delta t) - f_i(t) = \frac{1}{M}\right) + \frac{1}{M^2}  P\left(f_i(t + \Delta t) - f_i(t) = \frac{-1}{M}\right) \\ &= \frac{1}{M^2} \left[ O(M) \right] \to 0 \: \: \mathrm{as} \: \: M \to \infty \end{align*} As demonstrated by Luo \citep{luo2014unifying}, the vanishing infinitesimal variance in the large $M$ limit tells us that $f_i(t)$ evolves deterministically and takes on the constant value %
$E[f_i(t)]$. In this limit, our multilevel system takes on the deterministic description given by the system of ODEs 
\begin{align*} \dsddt{f_i(t)} &= \frac{1}{N} D_2\left(f_i(t) \frac{i}{N} (1 - \frac{i}{N}) \right) + \lambda f_i(t)  \left( G_i - \ds\sum_{j=0}^N G_j f_j(t) \right) \\ &+ \left[D_1^+ \left(f_i(t) \frac{i}{N}(1 - \frac{i}{N}) F_i^D \right) - D_1^- \left(f_i(t) \frac{i}{N} (1- \frac{i}{N}) F_i^C \right)  \right] \end{align*}
From the ODE limit, we take the limit as group membership $N \to \infty$ to obtain the limiting PDE \begin{align}  \label{eq:generalPDE} \dsdel{f(t,x)}{t} &= \dsdel{}{x}\left[ x (1-x) \left((P-S) - (R - S -T + P) x \right) f(t,x) \right] \\ &+ \lambda f(t,x) \left[ (S+T - 2P) \left( x - \ds\int_0^1 y f(t,y) \dy \right) + (R-S-T+P) \left( x^2 - \ds\int_0^1 y^2 f(t,y) \dy \right) \right] \nonumber  \end{align}
where $f(t,x)$ denotes the probability density of groups with composition of fraction $x$ cooperators and $1-x$ defectors. We note that the term with 2nd-order difference quotient $ \frac{1}{N} D_2\left(f_i(t) \frac{i}{N} (1 - \frac{i}{N}) \right)$ vanishes when $N \to \infty$, so their is no diffusive effect in the PDE limit as their is in the ODE limit with infinite $M$ and finite $N$. We note that this is a consequence of the scaling of $M$ and $N$, which allows us to compare deterministic effects of within-group and between-group competition, although others have explored multilevel selection in limits that preserve a diffusion term \citep{ogura1987stationary,ogura1987stationary2,fontanari2014effect,fontanari2014nonlinear,fontanari2013solvable} or in the stochastic Fleming-Viot limit \citep{luo2017scaling,dawson2013multilevel,blancas2017representation}.

\subsection{Interpretation of PDE}

Going forward, we will write this more compactly by denoting $\alpha = R - S - T + P$, $\beta = S-P$, $\gamma = S + T - 2P$, and $M_j^f = \int_0^1 y^j f(t,y) \dy$, the moments of $f(t,x)$, to give us \begin{equation} \label{eq:pdeparam} \dsdel{f(t,x)}{t} = - \dsdel{}{x} \left[ x(1-x) (\beta + \alpha x ) f(t,x) \right] + \lambda f(t,x) \left[ \gamma x + \alpha x^2 - \left( \gamma M_1^f + \alpha M_2^f \right) \right] \end{equation}
which corresponds to the dynamics of our multilevel system with interactions corresponding to a payoff matrix of the form 
\begin{equation} \label{eq:parameterpayoffmatrix}
\begin{blockarray}{ccc}
& C & D \\
\begin{block}{c(cc)}
C &  \gamma + \alpha + P &   - \beta + P \\
D &  \gamma - \beta + P & P \\
\end{block}
\end{blockarray}
\end{equation}
We choose to retain the punishment payoff $P$ from the original payoff matrix of Equation \ref{eq:generalpayoffmatrix} and to express the payoffs $R$, $S$, and $T$ in terms of $P$, $\alpha$, $\beta$, and $\gamma$ because $P$ directly shows up in the function for average group payoff $G(x) = P + \gamma x + \alpha x^2$. Under suitable rescaling of time, we can normalize $P$ to an aribitrary non-negative value. Further, because the multilevel dynamics given by Equation \ref{eq:pdeparam} depend on the payoff matrix through $\alpha$, $\beta$, and $\gamma$, we can understand the role of the payoff matrix through the relative values of these three parameters.

The term $\alpha = R - S - T + P$ is called the ``gains from switching'' and determines whether the average group payoff $G(x)$ is concave or convex, and can be positive, negative, or zero for the Prisoner's Dilemma and is always negative for the Hawk-Dove game. The term $\beta = S - P$ characterizes whether it is worse to be a cooperator or defector when interacting with a defector, characterizing the main difference between the interactions of the Hawk-Dove game ($\beta > 0$) and the Prisoner's Dilemma ($\beta < 0$).  The parameter $\gamma = S + T - 2P$ tells us whether the total payoff for an interaction between a cooperator and a defector exceeds the total payoff of an interaction between two defectors. It is always positive for the Hawk-Dove game, and can be negative, positive, or zero for the Prisoner's Dilemma. 

\begin{example} Consider a frequency-dependent Prisoners' Dilemma with the following payoff matrix 
\begin{equation} \label{eq:bcpayoffmatrix}
\begin{blockarray}{ccc}
& C & D \\
\begin{block}{c(cc)}
C &  b - c &   -c \\
D &  b & 0 \\
\end{block}
\end{blockarray}
\end{equation}
which has the interpretation that cooperators pay a cost $c$ to confer a benefit $b$ to their coplayer, while defectors pay no cost and confer no benefit to their coplayer. This game is also called the donation game and is a two-player version of a linear public goods game \citep{nowak} and was mentioned as an example application for Luo's multilevel framework \citep{luo2014unifying}. For this game, the multilevel dynamics are given by the equation \begin{equation} \label{eq:pdfreqind} \ds\del{f(t,x)}{t} = c \dsdel{}{x}\left[ x(1-x) f(t,x) \right] + \lambda (b-c) f(t,x) \left[ x - M_1^f \right]  \end{equation} which is a rescaled version of Luo and Mattingly's multilevel equation \citep{luo2014unifying,luo2017scaling}. \end{example}
\begin{example}
A standard class of Hawk-Doves games, dating back to the ideas of Maynard Smith and Price, is characterized by the family of payoff matrices of the form 
\begin{equation} \label{eq:VCpayoffmatrix}
\begin{blockarray}{ccc}
& C & D \\
\begin{block}{c(cc)}
C &  \ds\frac{V}{2} &   0 \\
D &  V & \ds\frac{V-C}{2} \\
\end{block}
\end{blockarray}
\end{equation}
\citep{smith1973logic,smith1982evolution, nowak}. Here, a resource of value $V$ is to be divided beteween two individuals who can either cooperate (``doves'') or defect (``hawks''). A pair of doves split the resource in half and each receive $\frac{V}{2}$, while an interaction between a hawk and a dove results in the full resource $V$ for the hawk and nothing for the dove. When a pair of hawks meet, they fight over the resource, and in expectation receive half of the resource but also pay a cost $\frac{C}{2}$ to fight, obtaining a total payoff of $\frac{V-C}{2}$. It is assumed that $C > V$, so that the expected outcome for a fight between hawks is worse that the outcome for a dove that allows a hawk to take the full resource. For this family of payoff matrices, the multilevel dynamics are given by the equation 
\begin{equation} \frac{2}{C} \label{eq:HDCVpde} \ds\del{f(t,x)}{t} =  \dsdel{}{x}\left[ x(1-x)\left( \frac{C-V}{C} - x\right) f(t,x) \right] + \lambda f(t,x) \left[ 2x - x^2  -\left( 2 M_1^f - 2 M_2^f \right) \right]  \end{equation}
where within-group dynamics push for coexistence of $\frac{C-V}{C}$ cooperators and $\frac{V}{C}$ defectors and where between-group dynamics are most favorable to all-cooperator groups.
\end{example}
For general two-player, two-strategy games, the PDE for our multilevel system given by Equation \ref{eq:pdeparam} is a first-order equation with characteristic curves given by the replicator dynamics \begin{equation} \label{eq:replicator} \dot{x}(t) = x (1-x) \left( \beta + \alpha x \right) \end{equation} For the Prisoner's Dilemma, the replicator dynamics has a stable equilibrium at $0$ and an unstable equilibrium at 1, while for the Hawk-Dove game, there are unstable equilibria at 0 and 1 and a stable interior equilibrium at \[x^{eq}_{int} = - \ds\frac{\beta}{\alpha} = \ds\frac{S - P}{S - P + T - R} \in(0,1)\]. 

The advection term $ - \left[ x(1-x) (\beta + \alpha x ) f(t,x) \right]_x$ corresponds to the within-group population dynamics due to births and deaths of individuals. In the absence of group selection ($\lambda = 0$), equation \ref{eq:pdeparam} reduces to $f_t(t,x) =  \left[x(1-x) (\beta + \alpha x) f(t,x) \right]_x$, which can be seen as an Eulerian version of the replicator dynamics for a metapopulation without between-group interactions \citep{chalub2014frequency}. In this case, given density initial data $f(0,x) = f_0(x)$, $f(t,x)$ converges to a delta-distribution at the stable equilibrium at $t \to \infty$.

\par The nonlocal term $\lambda f(t,x) [(\gamma x + \alpha x^2) - (\gamma M_1^f + \alpha M_2^f )]$ describes the %
dynamics % 
due to group-level extinction and replication events,%.
and can be interpreted as a version of the replicator dynamics for between-group competition. In the absence of within-group selection, the population dynamics can be rewritten as $f_t(t,x) = \lambda f(t,x) [(\gamma x + \alpha x^2) - (\gamma M_1^f + \alpha M_2^f)]$, a function-valued ODE reminiscent of the replicator dynamics for continuous-strategy games studied by Bomze, Oechssler and Riedel, and others \citep{bomze1990dynamical,oechssler2001evolutionary,oechssler2002dynamic}. In this framework, the typical long-time behavior is concentration of $f(t,x)$ upon a delta-distribution at the maximizer of average group payoff $G(x)$ in $[0,1]$. 

\par The combination of the advection and nonlocal terms in Equation \ref{eq:pdeparam} characterizes the conflict between the tendency of within-group competition to favor defection and the tendency of between-group competition to favor groups with a majority of cooperators. In particular, we are interested in exploring the balance between these effects for varying levels of the relative rate of within-group and between-group birth and death events, $\Lambda$, as well as the relative strengths of selection at the two levels, $w_I$ and $w_G$. 

\subsection{Properties of PDE}

Because we derive the PDE of the multilevel system from a ball-and-urn process, it important to check that the limiting differential equation preserves normalization of the probability density over population states. Assuming that $\int_0^1 f(t,x) \dx = 1$ at a given time $t$, integrating both sides of Equation \ref{eq:pdeparam} with respect to $x$ from $0$ to $1$ yields \begin{align*} \dsdel{}{t} \left[\ds\int_0^1 f(t,x) \dx \right] %
&= \int_0^1 \left\{ \dsdel{}{x} \left[ x(1-x) (\beta + \alpha x ) f(t,x) \right] + \lambda f(t,x) \left[ \gamma x + \alpha x^2 - \left( \gamma M_1^f + \alpha M_2^f \right) \right]\right\} \dx  \\ &= %
x(1-x) (\beta + \alpha x ) f(t,x)\bigg|_0^1 + \lambda \left[ \left( \gamma M_1^f - M_2^f\right) - \left( \gamma M_1^f - M_2^f\right)  \int_0^1 f(t,x) \dx \right]  = 0 \end{align*}
and we can deduce that $\int_0^1 f(t,x) \dx = \int_0^1 f(0,x) \dx$ for all $t \geq 0$. In other words, if $f(0,x)$ is normalized, then solutions $f(t,x)$ of equation \ref{eq:pdeparam} remain normalized at all later times $t$. 

To study the behavior of Equation \ref{eq:pdeparam}, it helps to introduce the associated PDE \begin{equation} \label{eq:pdehgeneral} \dsdel{f(t,x)}{t} = \dsdel{}{x} \left[ x(1-x) (\beta + \alpha x ) f(t,x) \right] + \lambda f(t,x) \left[ \gamma x + \alpha x^2 -  h(t) \right]  \end{equation} where $h(t)$ is a general function of time. In analogy with Lemma 6 of Luo and Mattingly \citep{luo2017scaling}, we see that the only function $h(t)$ for which solutions of Equation \ref{eq:pdehgeneral} remain normalized is $h(t) = \gamma M_1^f(t) + \alpha M_2^f(t)$. Using the method of characteristics, we obtain the %
representation formula for solutions of Equation \ref{eq:pdehgeneral}
\begin{equation} \label{eq:solutionalongcharacteristics} f(t,x) = f_0(x_0(t,x)) \exp \left[\beta t + \int_0^1 \left\{ \left(\lambda \gamma - 2 (\alpha + \beta) \right) x(s)  + \left( \lambda + 3\right) \alpha x(s)^2 \right\} ds - \int_0^1 h(s) \dsint \right] \end{equation}
where $x_0 := x_0(t,x)$ can be found by solving the replicator dynamics backwards in time. 
From this representation formula, we see that non-negativity of the initial distribution $f(0,x) = f_0(x)$  on $[0,1]$ implies non-negativity of $f(t,x)$ on $[0,1]$ at subsequent times $t$ for which our solution in Equation \ref{eq:solutionalongcharacteristics} exists. 

Next we can use a contraction-mapping argument analogous to that of Dawidowicz and Loskot for nonlocal transport equations on spatial domain $[0,\infty)$ \citep{dawidowicz1986existence} to demonstrate that local existence of solutions to Equation \ref{eq:pdehgeneral} implies local existence to solutions of Equation \ref{eq:pdeparam}. By denoting $f^0(t,x)$ the solution of  \ref{eq:solutionalongcharacteristics}, we can construct a subsequent $f^1(t,x)$ as a solution via the method of characteristics to Equation \ref{eq:pdehgeneral} with $h(t)$ chosen as $h(t) = \gamma \int_0^1 y f^0(t,y) \dy + \alpha \int_0^1 y^2 f^0(t,y) \dy$. Extending this scheme, we can make use of the Banach fixed point theorem to guarantee that $f^n(t,x) \to f(t,x)$ as $n \to \infty$, where $f(t,x)$ is the unique non-negative solution to Equation \ref{eq:pdeparam}.  %
Then, using our representation formula with $h(s) = \gamma M_1^f(s) + \alpha M_2^f(s)$ and observing that $\beta - 2 (\alpha + \beta) x(s) + 3 \alpha x(s)^2 \leq |\beta| + 2 |\alpha + \beta| + 3 |\alpha|$, $\gamma (x(s) - M_1^f(s)) \leq |\gamma|$ and $ \alpha (x(s)^2 - M_2^f(s)) \leq |\alpha|$, we can estimate that solutions of Equation \ref{eq:pdeparam} with bounded and density-valued initial data satisfy  \begin{equation} \label{eq:ftxbound} f(t,x) \leq  \left(\sup_{x_0 \in [0,1]} f_0(x_0) \right) \exp\left[ \left\{(5 + \lambda) |\alpha| +  3 |\beta| +\lambda |\gamma| \right\}t \right]  < \infty \end{equation} for $x \in [0,1]$ and for all finite $t$, so our solutions obtained via the method of characteristics exist and remain density-valued for finite times $t$. 

\subsubsection{Weak, Measure-Valued Formulation}

In the absence of between-group competition, the expected behavior of Equation \ref{eq:pdeparam} is for all of the probability to accumulate as a delta-function near the stable steady-state of the within-group replicator dynamics. We formalize this intuition by considering a weak, measure-valued formulation of our PDE, and use this to prove some of our main results on the time-dependent behavior of the multilevel system.

Multiplying both sides of Equation \ref{eq:pdeparam} by test function $\psi(x) \in C^1[0,1]$ and integrating, we get
$$ \dsddt \ds\int_0^1 \psi(x) f(t,x) \dx = \ds\int_0^1 \psi(x)  \left\{ \dsdel{}{x} \left[x(1-x) (\beta + \alpha x) f(t,x) \right]  + \lambda f(t,x) \left[\gamma x + \alpha x^2 - \left(\gamma M_1^f + \alpha M_2^f\right)  \right]  \right\} \dx $$
Integrating the first term by parts, we have that 
\begin{align*} \dsddt \ds\int_0^1 \psi(x) f(t,x) \dx &= - \ds\int_0^1 \left\{\dsdel{\psi(x)}{x}  \left[x(1-x) (\beta + \alpha x)\right] +  \lambda   \psi(x) \left[\gamma x + \alpha x^2 - \left(\gamma M_1^f + \alpha M_2^f\right)  \right] \right\} f(t,x) \dx \end{align*}
 For the cumulative distribution function $F(t,x)$ corresponding to the density $f(t,x)$, we note that $f(t,x) \dx = F_x(t,x) \dx = d F(t,x)$. In the sense of a Stietjes integral, we can then write
\begin{align*} \dsddt \ds\int_0^1 \psi(x) \mathrm{d}F(t,x) &= - \ds\int_0^1 \left\{\dsdel{\psi(x)}{x}  \left[x(1-x) (\beta + \alpha x)\right] +  \lambda   \psi(x) \left[\gamma x + \alpha x^2 - \left(\gamma M_1^F + \alpha M_2^F\right)  \right] \right\} \mathrm{d}F(t,x) \end{align*} 
Denoting the probability measure associated with $f(t,x)$ and $F(t,x)$ by $\mu_t(dx)$, we have the equation
 \begin{equation} \label{eq:measurepde} \dsddt \ds\int_0^1 \psi(x) \mu_t(dx) =  \ds\int_0^1 \left\{- \dsdel{\psi(x)}{x}  \left[x(1-x) (\beta + \alpha x)\right] + \lambda  \psi(x) \left[\gamma x + \alpha x^2 - \left(\gamma M_1^{\mu} + \alpha M_2^{\mu}\right)  \right] \right\} \mu_t(\dx) \end{equation}
 where we have denoted the $j$th moments of the probability distribution by $M_j^f$, $M_j^F$, and $M_j^{\mu}$ to match the description of the probability distribution in terms of its descriptions in terms of a density $f(t,x)$, cumulative distribution function $F(t,x)$ or measure $\mu_t(dx)$. 

 We say that the flow of measures $\{\mu_t(\dx)\}_{t \in [0,T]}$ is a weak solution of \ref{eq:measurepde} with given initial data $\mu(\dx)$ if for every test function $\psi(x) \in C^1([0,1])$, $\langle \psi, \mu_t(\dx) \rangle$ is differentiable in time, satisfies Equation \ref{eq:measurepde} for all $t \in [0,T]$, and fulfills the initial condition $\mu_0(dx) = \mu(\dx)$.
 \begin{remark} With this measure-valued formulation, we can see that the measures $\delta(x)$, $\delta(x-1)$, and $\delta(x - \tfrac{\beta}{\alpha})$, the delta-functions at the steady-states for within-group dynamics, are also steady-states for Equation \ref{eq:measurepde}. For the first term, we note that $x(1-x)(\beta + \alpha x)$ vanishes at these points, and for the second term, with any measure $\mu_t(\dx) = \delta(x-a)$ (with corresponding $M_1^f = a $ and $M_2^f = a^2$) satisfies $$  \ds\int_0^1 \psi(x) \left[\gamma x + \alpha x^2 - \left(\gamma M_1^f + \alpha M_2^f\right)  \right] \mu_t(\dx)  = \psi(a) \left[ \left(\gamma a +\alpha a^2\right) - \left(\gamma M_1^f + \alpha M_2^f \right) \right] = 0 $$   \end{remark}
 
We can break down the dynamics of our multilevel system into changes caused by within-group competition and those caused by between-group events. Individual-level  competition within a single group is governed by the replicator dynamics given by Equation \ref{eq:replicator}, whose solution given time $t$ and initial-condition $x_0$ we call $\phi_t(x_0) := x(t,x_0)$. We can describe the effect of within-group competition in the group-structured population using the push-forward measure of the initial distribution $\mu_0(\dx)$ under the dynamics $\phi_t(x)$ \citep{luo2017scaling,evers2016mild} using the following equivalent notation  \[ P_t \mu_0(\dx) = \phi_t \# \mu_0(\dx) = \left[ \mu_0 \circ \phi_t^{-1} \right](\dx)  \]
We can illustrate the effects of within-group dynamics by testing the push-forward measure against the function $\psi(x)$, which yields
\[ \int_0^1 \psi(x) P_t \mu_0(\dx) = \int_0^1 \psi(x) \left[\mu_0 \circ \phi_t^{-1} \right] (\dx) = \int_0^1 \psi(\phi_t(x)) \mu_0(\dx) \]
In particular, this representation tells us how the probability distribution evolves as $\mu_t(\dx) = P_t \mu_0(\dx)$ in the case where $\lambda = 0$ in Equation \ref{eq:measurepde}, when the multilevel system is governed by within-group competition alone.

\myindent We can also describe the effect of between-group competition using the formula below
\begin{equation} \label{eq:wtxgeneral} w_t(x) = \exp \left( \lambda \int_0^t \left[ \left( \gamma x(s,x_0) + \alpha x(s,x_0)^2\right) - \left( \gamma M_1^{\mu} + \alpha M_2^{\mu} \right) \right] \dsint   \right) \end{equation}
Combining the effects of within-group and between-group competition, we can represent the evolution of the probability distribution using the implicit formula 
\begin{equation} \label{eq:mutdx} \mu_t(\dx) = w_t(x) P_t \mu_0(\dx) =  \exp \left( \lambda \int_0^t \left[ \left( \gamma x(s,x_0) + \alpha x(s,x_0)^2\right) - \left( \gamma M_1^{\mu} + \alpha M_2^{\mu} \right) \right] \dsint   \right)  P_t \mu_0(\\dx) \end{equation} 
where we note that this is an implicit description because the moments $M_1^{\mu}$ and $M_2^{\mu}$ depend on the distribution $\mu_t(\dx)$. In future work, we will address the existence and uniqueness of measure-valued solutions $\mu_t(\dx)$ to Equation \ref{eq:measurepde} given initial distribution $\mu_0(\dx) = \mu(\dx)$, and we justify the use of this representation formula to describe the evolution of $\mu_t(\dx)$. Given this formula, we can describe the behavior of our multilevel system by testing $\psi(x)$ against $\mu_t(\dx)$ as follows
\begin{equation} \label{eq:mutpushforward} \int_0^1 \psi(x) \mu_t(dx) = \int_0^1 \psi(x) w_t(x) P_t \mu_0(\dx) = \int_0^1 \psi(x) w_t(x) \left[ \mu_0 \circ \phi_t^{-1}\right] (\dx) = \int_0^1 \psi(\phi_t(x)) w_t(\phi_t(x)) \mu_0(\dx) \end{equation}
\subsubsection{Possible Group-Level Reproduction Functions}

An important factor for the long-time behavior of Equation \ref{eq:pdeparam} is where the group-level reproduction rate $G(x) = \gamma x + \alpha x^2$ is maximized. If $x^* = \argmax_{x \in [0,1]}\left(G(x)\right) = 1$, then all-cooperator groups are most favored at the between-group level, and would be selected to dominate the population in the absence of within-group selection. If $x^* \in (0,1)$, then between-group selection instead most favors groups with proportions of $x^*$ cooperators and $1-x^*$ defectors, and acts to promote coexistence between cooperators and defectors in the long-run. 

For the Hawk-Dove game, %
$\gamma  = (S-P) + (T-P) > 0$ and %
$\alpha = (T-R) + (S-P) < 0$. Because $G(x)$ has a local maximum  at $x^* = - \frac{\gamma}{2 \alpha}$ when $G'(x) = \gamma + 2 \alpha = 0$ , we see from $\gamma > 0$ and $\alpha < 0$ that 
 \begin{displaymath}
   x^* = \left\{
     \begin{array}{cr}
       - \ds\frac{\gamma}{2 \alpha} & :  \gamma < - 2 \alpha \\
       1 & : \gamma \geq- 2 \alpha
     \end{array}
   \right. \Longrightarrow
   x^* = \left\{
     \begin{array}{cr}
        \ds\frac{S + T - 2P}{2 (-R + S + T - P)} & :  2R < T + S \\
       1 & :  2R \geq T  + S
     \end{array}
   \right. 
\end{displaymath} 
For the Prisoner's Dilemma, we can't make a similar definitive statement about the signs of $\gamma$ and $\alpha$. Instead, we consider two special cases in which $\gamma > 0$ and either $\alpha = -1$ or $\alpha = 0$. In these two special cases, we see that the group fitness function $G(x)$ has the following behavior
\begin{itemize}
\item[Case I:] For $\gamma > 0$ and $\alpha = -1 $,  then $G(x) = \gamma x - x^2$ and we have a similar situation to the Hawk-Dove game, where $x^* = \left\{
     \begin{array}{rr}
       - \ds\frac{\gamma}{2} & :  \gamma < 2 \\
       1 & : \gamma \geq  2 
     \end{array}
   \right.$.
\item[Case II:] If $\gamma > 0$ and $\alpha = 0$, we have $G(x) = \gamma x$ and recover a scaled version of the Luo-Mattingly model governed by Equation \ref{eq:luomattingly}. For this case, we have $x^* = 1$, $\forall \gamma > 0$. 
\end{itemize}
Having explored the value of $x^*$ for the Hawk-Dove games and the Prisoner's Dilemmas we consider in this paper, we will see in subsequent sections that whether $x^* = 1$ or $x^* \in (0,1)$ will determine a change in qualitative behavior in the limit when $\lambda = \frac{w_G}{w_I} \Lambda \to \infty$. Notably, this is the limit in which $w_I \ll w_G$, selection strength in between-group competition is weak relative to selection strength for within-group competition.
\section{Prisoner's Dilemma} \label{sec:PD}

In this section, we consider the multilevel dynamics when strategic interactions consist of Prisoner's Dilemma games. %
 In Section \ref{sec:PDsolvable}, we consider a special family of payoff matrices for the Prisoner's Dilemma for which the replicator dynamics are analytically solvable. With solvable replicator dynamics, we use the method of characteristics to study the long-time behavior of our multilevel system, as used for the frequency-independent case by Luo and Mattingly \citep{luo2017scaling}. In Section \ref{sec:PDsteadystates}, we characterize the steady-states for our multilevel PDE and explore the behavior of steady-states in the limit of large $\lambda$. 

\subsection{Prisoner's Dilemma with Solvable Replicator Dynamics} \label{sec:PDsolvable}

If we consider a Prisoner's Dilemma with $T = R + 2$, $P = S + 1$, then we have that $\alpha = -1$ and $\beta = 1$, and our payoff matrix corresponds to a Case I PD. This family of payoff matrices takes the form
\begin{equation} \label{eq:specialpdpayoffmatrix}
\begin{blockarray}{ccc}
& C & D \\
\begin{block}{c(cc)}
C & \gamma + P - 1 &    P - 1 \\
D & \gamma + P + 1 & P \\
\end{block}
\end{blockarray}
\end{equation}
where $\gamma > 1$ guarantees that the corresponding game is actually a PD (so that $R > P$ or $\gamma + P - 1 > P$). We note that, in this case, $G(x) = \gamma x - x^2$, which is maximized at $x^* = \min(\tfrac{\gamma}{2},1) \in (\tfrac{1}{2},1]$ for $\gamma > 1$. In other words, group payoff is maximized by groups with a majority of cooperators for $\gamma > 1$ and is maximized by all-cooperator groups for $\gamma \geq 2$. Cooperators and defectors become perfect complements for average group payoff $G(x) = x(1-x)$, when $\gamma = 1$ and the game is no longer a Prisoner's Dilemma.

The replicator dynamics for this game is $$ \label{eq:specialPDrep} \dsddt{x(t)} = - x (1-x) (1+x)  $$ which can be solved forwards in time for $x(t)$ given initial condition $x_0$ and $t$ as \begin{equation} \label{eq:specialPDchar} x(t,x_0) = \phi_t(x_0)  = \frac{x_0}{\sqrt{(1-x_0)^2 e^{2t} + x_0^2}} \end{equation} %
We note that $\lim_{t \to \infty} x(t) = 0$, so the characteristics converge to the Nash equilibrium for the Prisoner's Dilemma. We can also solve the replicator dynamics backwards in time for $x_0(x,t)$ given current $t$ and $x(t)$ with
\begin{equation} \label{eq:specialPDbackwardschar} x_0(t,x(t)) = \phi_t^{-1}(x) = \frac{x}{\sqrt{(1-x^2)e^{-2t} + x^2}}  \end{equation}
Because the replicator dynamics are exactly solvable for payoff matrices of this form, we can use the method of characteristics in the manner employed by Luo \citep{luo2014unifying} and Luo and Mattingly \citep{luo2017scaling} to explore the long-time behavior of solutions to Equation \ref{eq:pdeparam}. For such payoff matrices, we have the simpler PDE 
\begin{equation} \label{eq:specialPDPDE} \dsdel{f(t,x)}{t} = \dsdel{}{x} \left[x(1-x)(1+x) f(t,x)  \right] + \lambda f(t,x) \left[\left(\gamma x - x^2 \right) - \left( \gamma M_1^f - M_2^f \right) \right] \end{equation}
The contributions to the solution due solely to between-group competition are given by $$w_t(x) := \exp \left( \ds\lambda \left(\gamma \int_0^t x(s,x_0) ds  - \ds\int_0^t x(s,x_0)^2 ds - \ds\int_0^t h(s) ds  \right)\right) $$ As shown in the appendix, we can use the forward and backward solutions of the replicator dynamics to compute that  %
\begin{align} \label{eq:xint} \ds\int_0^t x(s,x_0) ds &= t + \frac{1}{2} \log\left(\frac{1-x}{1+x} \right) + \frac{1}{2} \log \left( e^{-2t} + \frac{2x^2 + 2 x \sqrt{e^{-2t} \left(1 - x^2 \right) + x^2 }}{(1-x)(1+x)} \right) \\ \label{eq:x2int} \ds\int_0^t x(s,x_0)^2 ds &= t - \frac{1}{2} \log\left(e^{-2t}(1-x^2) + x^2 \right) \end{align} and using the values of these integrals we see that $$w_t(x) = \left(\frac{1-x}{1+x} \right)^{\frac{\lambda \gamma}{2}} \left( e^{-2t} + \frac{2x^2 + 2 x \sqrt{e^{-2t} \left(1 - x^2 \right) + x^2 }}{(1-x)(1+x)} \right)^{\frac{\lambda \gamma}{2}} \left( e^{-2t} (1 - x^2) + x^2 \right)^{-\frac{\lambda}{2}} e^{\lambda ( \gamma -1) t - \int_0^1 h(s)ds}  $$

We can isolate the component of $w_t(x)$ that depends only on time by writing $w_t(x) = g_t(x) e^{\lambda (\gamma - 1) t - \int_0^1 h(s) ds}$, where %
 \begin{equation} \label{eq:gtx} g_t(x) = \left(\frac{1-x}{1+x} \right)^{\frac{\lambda \gamma}{2}} \left( e^{-2t} + \frac{2x^2 + 2 x \sqrt{e^{-2t} \left(1 - x^2 \right) + x^2 }}{(1-x)(1+x)} \right)^{\frac{\lambda \gamma}{2}} \left( e^{-2t} (1 - x^2) + x^2 \right)^{-\frac{\lambda}{2}} \end{equation}%
To describe the long-time behavior of our multilevel system, it will help to use the approach of Luo and Mattingly \citep{luo2017scaling} and characterize the behavior of the tail of our initial distribution using the Hölder exponent near $x = 1$, which is defined as follows.

\begin{definition} \label{def:Holderexponent}  The Hölder exponent of our measure $\mu_t[0,x]$, near the endpoint $x = 1$ is given by $$\theta_t = \ds\inf \left\{ \Theta \geq 0 \:\bigg| \: \exists C > 0 \: \: \mathrm{s.t.} \: \: \ds\lim_{x \to 0} \ds\frac{\mu_0[1-x,1]}{x^{\Theta}}  = C\right\} $$ \end{definition}

\begin{example} \label{ex:holderfamily}  A family of densities with Hölder exponent $\theta$ is given by $f(x) = \theta (1-x)^{\theta - 1} $. For $\theta = 1$, we recover the uniform density $f(x) = 1$. For $\theta < 1$, the density of groups blows up near full cooperation ($x=1$), while for $\theta > 1$, the density tends to $0$ near full cooperation.  \end{example}

Intuitively, we can think of the Hölder exponent near $x=1$ as describing how strongly or weakly the probability distribution is concentrated near all-cooperator groups. We see that larger $\theta$ corresponds to weaker concentration of probablity near all-cooperator groups, so stronger between-group selection is required in order to sustain groups with many cooperators. We will see in the propositions below that the ability for multilevel selection to maintain cooperation in the population is tied to competition between relative strengths $\lambda$ of between-group and within-group selection and the concentration of near-all-cooperator groups $\theta$. This phenomenon is seen in the frequency-independent model of Luo and Mattingly \citep{luo2017scaling}, and has been shown as important for emergence of cooperation in simulations of the game theoretic models of protocell evolution by Markvoort et al \citep{markvoort2014computer}.

Now we characterize the long-time behavior of our multilevel system. To develop intuition for the role of the \holder exponent of the initial distribution, we can solve our multilevel system for our family of initial densities of the form $\theta(1-x)^{\theta - 1}$ for $\theta > 0$ from Example \ref{ex:holderfamily}. 

\begin{example} \label{ex:exactsolution} For the family of payoff matrices under consideration, for which $\alpha = -1$ and $\beta = 1$, we find from Equation \ref{eq:solutionalongcharacteristics} that solutions of the multilevel dynamics satisfy
\begin{equation} \label{eq:specialPDcharacteristics} f(t,x) = f_0(x_0(t,x)) \exp \left[t + \int_0^t \left\{ \lambda \gamma x(s)  - \left( \lambda + 3\right) \alpha x(s)^2 \right\} ds - \int_0^1 h(s) ds \right] \end{equation}
We can use Equation \ref{eq:specialPDbackwardschar} to see that the initial distribution is pushed forward by within-group dynamics as follows
\begin{align*} 
f_0(x_0(t,x)) &= f_0 \left( \frac{1}{\sqrt{\left( \frac{1}{x^2} - 1\right) e^{-2t} + 1}}  \right) = \theta \left( 1 - \frac{1}{\sqrt{\left( \frac{1}{x^2} - 1\right) e^{-2t}+ 1}}  \right)^{\theta - 1} \\ &= \theta \left( \frac{e^{-2t} (1-x^2) + x^2}{e^{-2t} (1-x^2) + x^2} - \frac{x \sqrt{e^{-2t} (1-x^2) + x^2 }}{e^{-2t} (1-x^2) + x^2} \right)^{\theta - 1} \\ &= \theta e^{2(1 - \theta )t} \left(  \left(1 - x^2 \right) + e^{2t} \left(x^2 - x \sqrt{e^{-2t}\left(1 - x^2\right) + x^2} \right) \right)^{\theta - 1} \left(e^{-2t} (1-x^2) + x^2 \right)^{1 - \theta} \end{align*}
Combining this with the expressions for $\int_0^t x(s) ds$ and $\int_0^t x(s)^2 ds$ given by Equations \ref{eq:xint} and \ref{eq:x2int}, using the expression $g_t(x)$ from Equation \ref{eq:gtx}, and recalling that solutions to Equation \ref{eq:specialPDPDE} are normalized, we see that our solution is given by 
\begin{equation} \label{eq:PDspecialsolution} f(t,x) = \theta Z_f^{-1}e^{\left[\lambda  \left(\gamma - 1 \right) - 2 \theta\right] t} g_t(x) \left(  \left(1 - x^2 \right) + e^{2t} \left(x^2 - x \sqrt{e^{-2t}\left(1 - x^2\right) + x^2} \right) \right)^{\theta - 1} \left(e^{-2t} (1-x^2) + x^2 \right)^{1 - \theta}  \end{equation}
where $Z_f$ is a normalizing constant (so that $\int_0^1 f(t,y) \dy = 1$).
Heuristically, we see that if $\lambda (\gamma - 1) > 2 \theta$, then $f(t,x) \to Z_f^{-1} x^{\lambda (\gamma - 1) - 2 \theta - 1} (1-x)^{\theta - 1} (1 + x)^{- \lambda \gamma + \theta - 1}$. However, if $\lambda (\gamma - 1) < 2 \theta$, then the $\exp\left(\left[ \lambda (\gamma - 1) - 2 \theta \right] t\right)$ term will cause the decay of probability density for $x \in (0,1)$, and, as we will soon see, $f(t,x) \rightharpoonup \delta(x)$ (groups concentrate at the all-defector within-group equilibrium of the Prisoner's Dilemma). We also note that one can use Definition \ref{def:Holderexponent} to verify that the \holder exponent is preserved in time for solutions of Equation \ref{eq:specialPDPDE} for the given family of initial data, as shown in Appendix \ref{sec:Holderpreserved}.
 \end{example}

Now we are ready to characterize the long-time behavior of the multilevel PD system. The qualitative behavior is divided into two regimes by a critical level of relative strength of between-group selection ($\lambda^* := \frac{2 \theta}{\gamma - 1}$) which depends on the payoff matrix through $\gamma$ and on the initial data through the \holder exponent $\theta$ near $x=1$. For $\lambda < \lambda^*$, the distribution of groups converge to a delta-function at full-defection ($\delta(x)$) as $t \to \infty$. When $\lambda > \lambda^*$, the distribution of groups converges to a density supported at all group types as $t \to \infty$, so groups with any mix of cooperators and defectors are sustained at steady-state. We note that the payoff of all-cooperator groups is $G(1) = \gamma - 1$ and that the condition required for the survival of cooperation is $\lambda (\gamma - 1) = \lambda G(1) > 2 \theta$. This means that the ability to sustain cooperation in our multilevel selection model requires a sufficiently capable combination of between-group selection strength $\lambda$ and payoff for all-cooperator groups $G(1)$ relative to the extent to which the initial distribution fails to concentrate near all-cooperator groups $\theta$. This is reflective of the inevitable decrease of cooperators within groups due to individual birth and death events, and that the main opportunity to delay the within-group march towards defection is the reproduction of all-cooperator groups.

\begin{proposition} \label{prop:deltaspecialpd} Suppose our initial population distribution has Hölder exponent $\theta$ near $x = 1$. If $\lambda(\gamma - 1) < 2 \theta$, then $\mu_t(dx) \rightharpoonup \delta(x)$ as $t \to \infty$. \end{proposition}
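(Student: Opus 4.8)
The plan is to run everything through the implicit representation of $\mu_t$ as a weighted push‑forward of $\mu_0$ under the replicator flow $\phi_t$, Equation \ref{eq:mutpushforward}, and to reduce weak convergence to $\delta(x)$ to the single statement $M_1^{\mu_t}=\int_0^1 x\,\mu_t(dx)\to 0$ as $t\to\infty$. Indeed, once $M_1^{\mu_t}\to 0$, Markov's inequality gives $\mu_t([\epsilon,1])\le M_1^{\mu_t}/\epsilon\to 0$ for every $\epsilon>0$, and splitting $\int_0^1|\psi(x)-\psi(0)|\,\mu_t(dx)$ at $\epsilon$ then yields $\int_0^1\psi\,d\mu_t\to\psi(0)$ for every $\psi\in C([0,1])$. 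For the family at hand ($\alpha=-1$, so that $G(x)=\gamma x-x^2\ge 0$ on $[0,1]$) I would write the between‑group weight as $w_t(x)=g_t(x)\,e^{\lambda(\gamma-1)t-\lambda\int_0^t h(s)\,ds}$ with $g_t$ as in Equation \ref{eq:gtx}; taking $\psi\equiv 1$ in \ref{eq:mutpushforward} fixes the purely time‑dependent factor $e^{\lambda(\gamma-1)t-\lambda\int_0^t h}$ and cancels it, leaving the normalized identity
\[
\int_0^1\psi(x)\,\mu_t(dx)=\frac{\int_0^1\psi(\phi_t(x_0))\,g_t(\phi_t(x_0))\,\mu_0(dx_0)}{\int_0^1 g_t(\phi_t(x_0))\,\mu_0(dx_0)},\qquad g_t(\phi_t(x_0))=\exp\!\Big(\lambda\!\int_0^t G(\phi_s(x_0))\,ds-\lambda(\gamma-1)t\Big).
\]

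The technical core is a sharp two‑sided bound on $g_t(\phi_t(x_0))$. Since $G\ge 0$, we have $g_t(\phi_t(x_0))\ge e^{-\lambda(\gamma-1)t}$ for all $x_0$; and since $\lambda(\gamma-1)<2\theta$ with $\lambda,\gamma-1>0$ forces $\theta>0$ and hence $\mu_0(\{1\})=0$, we may fix $a<1$ with $\mu_0([0,a])>0$, so the denominator is $\ge\mu_0([0,a])\,e^{-\lambda(\gamma-1)t}$. For the matching upper bound near $x_0=1$, I would use the explicit flow $\phi_t(x_0)=x_0/\sqrt{x_0^2+(1-x_0^2)e^{2t}}$ and the integrals underlying Equations \ref{eq:xint}--\ref{eq:x2int} to obtain $\int_0^\infty\phi_s(x_0)\,ds=\tfrac12\log\tfrac{1+x_0}{1-x_0}$ and $\int_0^\infty\phi_s(x_0)(1-\phi_s(x_0))\,ds=\log(1+x_0)\le\log 2$, both uniformly in $x_0\in[0,1)$. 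As the integrands are nonnegative and $\gamma-1>0$,
\[
\int_0^t G(\phi_s(x_0))\,ds=(\gamma-1)\!\int_0^t\phi_s(x_0)\,ds+\int_0^t\phi_s(x_0)(1-\phi_s(x_0))\,ds\le\frac{\gamma-1}{2}\log\frac{1+x_0}{1-x_0}+\log 2
\]
for all $t\ge 0$, whence $g_t(\phi_t(x_0))\le C_1\,(1-x_0)^{-\lambda(\gamma-1)/2}\,e^{-\lambda(\gamma-1)t}$ for a constant $C_1=C_1(\lambda,\gamma)$. This is exactly where the factor $2$ in $\lambda^*=2\theta/(\gamma-1)$ appears: a trajectory started at $1-\epsilon$ stays near full cooperation for only $\sim\tfrac12\log(1/\epsilon)$ units of time, so it accrues weight $\sim\epsilon^{-\lambda(\gamma-1)/2}$.

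To conclude, take $\psi(x)=x$, split the numerator at $1-\delta$ (small $\delta>0$), use $\phi_t(x_0)\le\phi_t(1-\delta)\to 0$ on $[0,1-\delta]$ and $\phi_t(x_0)\le 1$ together with the weight bound on $[1-\delta,1)$, so that $\int_{[1-\delta,1)}g_t(\phi_t(x_0))\,\mu_0(dx_0)\le C_1 K_\delta\,e^{-\lambda(\gamma-1)t}$ with $K_\delta:=\int_{[1-\delta,1)}(1-x_0)^{-\lambda(\gamma-1)/2}\,\mu_0(dx_0)$. Dividing by the denominator bound gives
\[
M_1^{\mu_t}\le\phi_t(1-\delta)+\frac{C_1 K_\delta}{\mu_0([0,a])}.
\]
The hypothesis $\lambda(\gamma-1)<2\theta$ is precisely what makes $K_\delta$ finite with $K_\delta\to 0$ as $\delta\to 0$: with $p:=\lambda(\gamma-1)/2<\theta$ and $\Phi(r):=\mu_0[1-r,1]$, Definition \ref{def:Holderexponent} yields $\Phi(r)\le C_{\theta'}r^{\theta'}$ for small $r$ and any $\theta'\in(p,\theta)$, and integration by parts bounds $K_\delta$ by $\delta^{-p}\Phi(\delta)+p\int_0^\delta s^{-p-1}\Phi(s)\,ds=O(\delta^{\theta'-p})$. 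Letting $t\to\infty$ and then $\delta\to 0$ gives $M_1^{\mu_t}\to 0$, hence $\mu_t\rightharpoonup\delta(x)$.

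The main obstacle is securing the weight bound with the correct exponent $\lambda(\gamma-1)/2$: estimating $G(\phi_s)$ over the $O(\log(1/\epsilon))$‑long sojourn near full cooperation by $\max_{[0,1]}G$ would produce a worse exponent and establish the conclusion only on a strictly smaller range of $\lambda$, missing the sharp threshold $\lambda^*$. This forces one to evaluate $\int_0^\infty\phi_s\,ds$ exactly and to prove the \emph{uniform} boundedness of $\int_0^\infty\phi_s(1-\phi_s)\,ds$, which cancels the logarithmic divergence of $\int_0^\infty\phi_s\,ds$ as $x_0\to 1$. A secondary, routine point is reading off $\Phi(r)=O(r^{\theta'})$ for $\theta'<\theta$ from the Hölder exponent and allowing $\mu_0$ to carry no mass near $x=0$, which the free choice of $a<1$ accommodates.
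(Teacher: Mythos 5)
Your proof is correct, and it reaches the sharp threshold $\lambda(\gamma-1)<2\theta$ by a genuinely different route than the paper. The paper works directly with a test function, splits the state space at $x=\delta$ at time $t$, and pulls back to the initial interval $[\phi_t^{-1}(\delta),1]$: it uses only the crude uniform bound $g_t\leq M$ on the weight together with $e^{-\lambda\int_0^t h}\leq 1$, so the entire decay comes from the shrinking initial mass $\mu_0[\phi_t^{-1}(\delta),1]\leq CDe^{-2\theta t}$ (the factor $2$ arising from the rate $\phi_t^{-1}(\delta)\geq 1-De^{-2t}$ of the backward characteristics), set against the crude weight growth $e^{\lambda(\gamma-1)t}$. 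You instead split the \emph{initial} space at a fixed $x_0=1-\delta$, cancel the nonlocal normalization $h$ entirely by dividing by the $\psi\equiv 1$ identity, and prove a sharp $t$-uniform bound on the accumulated relative weight, $g_t(\phi_t(x_0))e^{\lambda(\gamma-1)t}\leq C_1(1-x_0)^{-\lambda(\gamma-1)/2}$, via the exact sojourn integral $\int_0^\infty\phi_s\,ds=\tfrac12\log\tfrac{1+x_0}{1-x_0}$ and the uniformly bounded $\int_0^\infty\phi_s(1-\phi_s)\,ds=\log(1+x_0)$ (both of which I have checked against the appendix computations); the hypothesis then enters as integrability of $(1-x_0)^{-\lambda(\gamma-1)/2}$ against $\mu_0$ near $x_0=1$, and your Markov-inequality reduction of weak convergence to $M_1^{\mu_t}\to 0$ is sound. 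The two arguments are dual: the paper trades time-decay of mass against time-growth of weight, while you trade the spatial singularity of the accumulated weight against the Hölder decay of $\mu_0$. Your version buys a few things — it avoids needing $h\geq 0$, it uses only the one-sided bound $\mu_0[1-r,1]\lesssim r^{\theta'}$ rather than existence of the limit in Definition \ref{def:Holderexponent}, and the quantitative estimate $M_1^{\mu_t}\leq\phi_t(1-\delta)+O(\delta^{\theta'-\lambda(\gamma-1)/2})$ yields an explicit convergence rate upon optimizing $\delta(t)$ — at the cost of the extra exact integral evaluations; both arguments rest equally on the representation formula \ref{eq:mutpushforward}, whose rigorous justification the paper defers. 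Two cosmetic points: the restriction of the denominator bound to $[0,a]$ is unnecessary since $g_t(\phi_t(x_0))\geq e^{-\lambda(\gamma-1)t}$ holds for all $x_0$ and $\mu_0$ is a probability measure, and the word ``uniformly'' is superfluous for an exact identity.
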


Here, ``$\rightharpoonup$'' denotes weak convergence of the probability measures $\{\mu_t(dx)\}_{t \geq 0}$ as $t \to \infty$ to a limit $\mu_{\infty}(dx)$, or that for any test function $\psi(x) \in C^1[0,1]$, $\int_0^1 \psi(x) \mu_t(dx) \to \int_0^1 \psi(x) \mu_{\infty}(dx)$ as $t \to \infty$. 

\begin{proof}

We will show that, for any continuous function $\psi(x)$, $\int_0^1 \psi(x) \mu_t(dx) \to \psi(0)$ as $t \to \infty$. Because $\psi(\cdot)$ is continuous, we know that $\forall \epsilon > 0$, $\exists \delta > 0$ such that $|\phi(x) - \phi(0)| < \epsilon$ when $x \in [0,\delta]$. Because $\mu_t(dx)$ is a probability distribution, we can say that \begin{align*}\bigg| \ds\int_0^1 \psi(x) \mu_t(dx) - \psi(0) \bigg| %
&\leq \int_0^{\delta} |\psi(x) - \psi(0) | \mu_t(dx) + \ds\int_{\delta}^1 |\psi(x) - \psi(0)| \mu_t(dx) \\ &< \epsilon + 2 ||\psi||_{\infty} \ds\int_{\phi_t^{-1}(\delta)}^1 w_t(\phi_t(y)) \mu_0(dy)   \end{align*}
We rewrite $$g_t(x) = \underbrace{\left( \frac{1-x}{1+x} e^{-2t} + \frac{2x^2 + 2 x \sqrt{e^{-2t} \left(1 - x^2 \right) + x^2 }}{(1+x)^2} \right)^{\frac{ \lambda \gamma}{2}} }_{\leq \Huge{7}^{(\lambda \gamma) / 2}}
  \underbrace{\left( e^{-2t} (1 - x^2) + x^2 \right)^{-\frac{\lambda}{2}} }_{\leq \large{\delta^{\lambda}} \: \: \mathrm{for} \: \: x \in [\delta,1] }  $$ and so  $\exists M > 0$ such that $\forall x \in [\phi_t^{-1}(y),1]$, $g_t(\phi_t(y)) \leq M$ and $w_t(\phi_t(y)) \leq M e^{\lambda (\gamma - 1) t - \int_0^t h(s) ds}$.

Further, for $h(s) = \gamma M_1^{\mu}(s) - M_2^{\mu}(s)$, we note that $\gamma > 1$ for the Prisoner's Dilemma and that $M_2^{\mu} \leq M_1^{\mu}$ because $x^2 \leq x$ on $\mathrm{supp}(\mu_t(dx)) \subset [0,1]$.%,  
Thus $h(s) \geq M_1^f(s) - M_2^f(s) > 0$, so $e^{- \lambda \int_0^t h(s) ds} \leq 1$ for $t \geq 0$, and therefore $w_t(\phi_t(y)) \leq M e^{\lambda (\gamma - 1) t}$ for each $t \geq 0$ and for each $x \in [\phi_t^{-1}(\delta),1]$.  Therefore
$$\bigg| \ds\int_0^1 \psi(x) \mu_t(dx) - \psi(0) \bigg| \leq \epsilon + 2 M ||\psi||_{\infty} e^{\lambda (\gamma - 1) t} \ds\int_{\phi_t^{-1}(\delta)}^{1} \mu_0(dx) = \epsilon + 2 M ||\psi||_{\infty} e^{\lambda (\gamma - 1) t} \mu_0[\phi_t^{-1}(\delta),1]   $$
Using $\phi_t^{-1}(x) = \left(\left( \tfrac{1}{x^2} - 1 \right) e^{-2t} + 1\right)^{-1/2} \in [0,1]$, we see that $$\phi_t^{-1}(x) \geq \frac{1}{\left( \frac{1}{x^2} - 1 \right) e^{-2t} + 1} = 1 - \frac{\left( \frac{1}{x^2} - 1 \right) e^{-2t}}{ \left( \frac{1}{x^2} - 1 \right) e^{-2t} + 1} \geq 1 - \left(\frac{1}{x^2} - 1 \right) e^{-2t} \: \: \textnormal{ because } \: x \in [0,1]$$
Now, for $x \in [\delta,1]$ and $\forall t \geq 0$, we see that $\exists D > 0$ such that $\phi_t^{-1}(x) \geq 1 - D d^{-2t}$, % 
so $\mu_0[\phi_t^{-1}(x),1] \leq \mu_0[1 - D e^{-2t},1]$. Using the assumption that $\lim_{x \to 0} x^{- \theta} \mu_0[1-x,1] = C $ for some constant $C \in ]0, \infty[$, we can deduce that $\mu_0[1 - D e^{-2t},1]  \leq C D\left(e^{-2t}\right)^{\theta}$. Then, because $\lambda (\gamma - 1) < 2 \theta$, we have that
 \begin{align*} \bigg| \ds\int_0^1 \psi(x) \mu_t(dx) - \psi(0) \bigg| &\leq  %\
\epsilon +  2 MCD  ||\psi||_{\infty} e^{[\lambda (\gamma - 1) - 2 \theta] t}% 
<  \epsilon \: \: \mathrm{as} \: \: t \to \infty \end{align*} and therefore $\int_0^1 \psi(x) \mu_t(dx) \to \psi(0)$ at $t \to \infty$. 
\end{proof}
\begin{proposition} \label{prop:steadystatespecialpd} If $\lambda (\gamma - 1) > 2 \theta$, then $\mu_t(dx) \rightharpoonup \mu_{\infty}^{\theta}(dx) = Z_f^{-1} x^{\lambda (\gamma - 1) - 2 \theta - 1} (1-x)^{\theta -1} (1+x)^{-\frac{\lambda \gamma}{2} + \theta - 1} \dx$, where $Z_f$ is a normalizing constant such that $\int_0^1 \mu_{\infty}^{\theta}(dx) = 1$. 
 \end{proposition}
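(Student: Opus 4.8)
The plan is to work from the push-forward representation \eqref{eq:mutpushforward} and to dispose of the unknown normalizing factor \emph{before} taking any limit. Both the $h$-dependent exponential in $w_t(x)$ (see Equation~\eqref{eq:wtxgeneral}) and the factor $e^{\lambda(\gamma-1)t}$ appearing in the decomposition $w_t = g_t\, e^{\lambda(\gamma-1)t-\int_0^t h(s)\dsint}$ are independent of $x$, so they cancel once we impose $\int_0^1 \mu_t(\dx)=1$, leaving
\[
\int_0^1 \psi(x)\, \mu_t(\dx) \;=\; \frac{\int_0^1 \psi(\phi_t(y))\, g_t(\phi_t(y))\, \mu_0(\dy)}{\int_0^1 g_t(\phi_t(y))\, \mu_0(\dy)} \qquad \text{for every } \psi \in C^1[0,1],
\]
with $g_t$ as in Equation~\eqref{eq:gtx}. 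Thus the entire problem reduces to the $t\to\infty$ asymptotics of the unnormalized weighted measures $g_t(\phi_t(y))\,\mu_0(\dy)$ on $[0,1]$, and the hypothesis $\lambda(\gamma-1)>2\theta$ will enter precisely as the integrability threshold that makes the resulting limit a bona fide probability density.

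First I would treat the reference family $f_0(x)=\theta(1-x)^{\theta-1}$, for which the explicit solution \eqref{eq:PDspecialsolution} makes everything transparent: writing $f(t,x)$ as its own numerator divided by the integral of that numerator, one sends $t\to\infty$ in the numerator factor by factor using $e^{-2t}\to 0$. Each of the three $t$-dependent factors in \eqref{eq:PDspecialsolution} converges uniformly on compact subsets of $(0,1)$, so the numerator converges pointwise on $(0,1)$ to the unnormalized density displayed in the statement, which is integrable on $[0,1]$ precisely because $\lambda(\gamma-1)-2\theta>0$ and $\theta>0$. Since the explicit product structure of \eqref{eq:PDspecialsolution} supplies a $t$-uniform integrable majorant on $[0,1]$ — one checks directly that the bracketed $(\theta-1)$-power factor is bounded below by a constant multiple of $1-x$ near $x=1$, uniformly in $t$ — dominated convergence shows that the numerator and its integral both converge; hence $f(t,\cdot)\to$ the density of $\mu_\infty^\theta$ in $L^1[0,1]$, and $L^1$-convergence of probability densities yields $\mu_t(\dx)\rightharpoonup\mu_\infty^\theta(\dx)$.

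For a general $\mu_0$ with \holder exponent $\theta$ near $x=1$, the key observation is that the weight $g_t(\phi_t(y))$ drives all of the mass into a window of width $O(e^{-t})$ about $y=1$. Split $\mu_0=\mu_0|_{[0,1-\delta]}+\mu_0|_{[1-\delta,1]}$. Using $\phi_t(y)\le C_\delta e^{-t}$ for $y\le 1-\delta$, a direct estimate of the three factors of $g_t$ at $x=\phi_t(y)$ gives $g_t(\phi_t(y))\le C_\delta' e^{-\lambda(\gamma-1)t}$ there, so the interior part contributes $O(e^{-\lambda(\gamma-1)t})$ to both integrals in the ratio. On $[1-\delta,1]$ I would rescale by $y=1-v e^{-t}$, under which $\phi_t(y)\to (1+v^2)^{-1/2}$ sweeps out all of $(0,1]$ and, by the \holder hypothesis $\mu_0[1-u,1]=Cu^{\theta}(1+o(1))$ as $u\to 0^+$, the image of $\mu_0$ in the $v$ variable becomes $C\theta\, v^{\theta-1}e^{-\theta t}(1+o(1))\,\mathrm{d}v$; the resulting limiting weighted measure then has total mass of order $Ce^{-\theta t}$, which dominates the interior $O(e^{-\lambda(\gamma-1)t})$ precisely because $\lambda(\gamma-1)>2\theta>\theta$. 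The constant $C$ cancels in the ratio, and matching this change of variables against the explicit-family computation identifies $\lim_{t\to\infty}\langle\psi,\mu_t\rangle=\langle\psi,\mu_\infty^\theta\rangle$ for all $\psi\in C^1[0,1]$, independently of $\mu_0$ apart from $\theta$. Sandwiching $\mu_0$ between $(C\pm\epsilon)$-multiples of the model tail on $[1-\delta,1]$ and then letting $\delta,\epsilon\to 0$ turns this into a proof.

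I expect the main obstacle to be the near-$x=1$ analysis: one must justify interchanging $t\to\infty$ with the integral over the rescaled window, where $g_t(\phi_t(y))$ neither decays nor remains bounded — its pointwise limit blows up like $\sigma^{\lambda(\gamma-1)}$ as the image point $\sigma\to 0$ and vanishes like a positive power of $1-\sigma$ as $\sigma\to 1$ — so both the interchange and the integrability of $\mu_\infty^\theta$ rest on the strict inequalities $\lambda(\gamma-1)>2\theta$ and $\theta>0$, i.e.\ exactly on the hypothesis. Two smaller points require care: the \holder condition controls only the distribution function $\mu_0[1-u,1]$, not a density, so the near-$1$ comparison has to be carried out at the level of Stieltjes integrals (integrating $g_t(\phi_t(\cdot))$ by parts against $\mu_0$) rather than pointwise; and when $\theta<1$ the model density $\theta(1-x)^{\theta-1}$ is unbounded at $x=1$, so in the reference-family step the a priori bound \eqref{eq:ftxbound} must be replaced by the sharper $t$-uniform majorant read directly off \eqref{eq:PDspecialsolution}.
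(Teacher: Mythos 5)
Your overall architecture is essentially the paper's: reduce to the $t\to\infty$ asymptotics of the unnormalized weighted measure $g_t(\phi_t(y))\,\mu_0(\dy)$, split it into an interior piece (which you correctly bound by $O(e^{-\lambda(\gamma-1)t})$, matching $g_t(0^+)=e^{-\lambda(\gamma-1)t}$) and a boundary layer near $y=1$, and extract the limit from the \holder tail at the level of the distribution function via integration by parts. The paper implements the same plan in one pass, integrating by parts against $1-F_0$ and normalizing by $Z_t=e^{2\theta t}$; your preliminary dominated-convergence treatment of the explicit family \eqref{eq:PDspecialsolution} is a reasonable supplement.

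There is, however, a genuine quantitative error in your boundary-layer scaling. The flow $\dot x=-x(1-x)(1+x)$ has linearization $\tfrac{d}{dt}(1-x)\approx 2(1-x)$ at the unstable point $x=1$, so the preimage of a fixed interior point satisfies $1-\phi_t^{-1}(x)\asymp e^{-2t}$; this is exactly what Equation \eqref{eq:specialPDbackwardschar} gives, namely $1-\phi_t^{-1}(x)\approx\tfrac12\left(x^{-2}-1\right)e^{-2t}$ for large $t$. The correct rescaling is therefore $y=1-v\,e^{-2t}$, with $\phi_t(1-v e^{-2t})\to(1+2v)^{-1/2}$, not $y=1-v\,e^{-t}$. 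Your $O(e^{-t})$ window comes from reading Equation \eqref{eq:specialPDchar} literally; there $(1-x_0)^2e^{2t}$ is a typo for $(1-x_0^2)e^{2t}$ (solve $\dot u=-2u(1-u)$ with $u=x^2$ to check; only with $1-x_0^2$ is \eqref{eq:specialPDchar} the inverse of \eqref{eq:specialPDbackwardschar}). The error is not cosmetic: with your window the tail mass is $Cv^\theta e^{-\theta t}$ rather than the correct $C(2v)^\theta e^{-2\theta t}$, so the boundary layer would dominate the interior already when $\lambda(\gamma-1)>\theta$, and tracing your change of variables through the final integration by parts produces a limit density with exponents $\theta/2-1$ on $(1-x)$ and $\lambda(\gamma-1)-\theta-1$ on $x$. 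Both conclusions contradict the statement being proved and the complementary threshold $\lambda(\gamma-1)<2\theta$ of Proposition \ref{prop:deltaspecialpd}. Once $e^{-t}$ is replaced by $e^{-2t}$ throughout the boundary-layer analysis (and the ratio is normalized by $e^{2\theta t}$), the remainder of your plan --- the Stieltjes-level use of the \holder hypothesis, the sandwich between $(C\pm\epsilon)$ model tails, and the identification with the explicit-family limit --- goes through and reproduces the paper's result.
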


Sample steady-states for given initial distributions and various values of $\lambda$ and $\theta$ are given in Figure \ref{fig:noghostdensity} for $\gamma = 2.5$ and in Figure \ref{fig:ghostdensity} for $\gamma = 1.5$. In Figure \ref{fig:noghostdensity}, we see that steady-state densities feature more cooperators as $\lambda$ increases, and that less relative selection strength $\lambda$ is required to sustain cooperation for initial conditions with $\theta = 1$ (left) than for initial conditions with $\theta = 2$ (right). In Figure \ref{fig:ghostdensity}, we see that groups with optimal average payoff cannot be sustained in large quantity at steady-state even for large relative between-group selection strength.

\begin{figure}[H]
    \centering
    \includegraphics[width=0.495\textwidth]{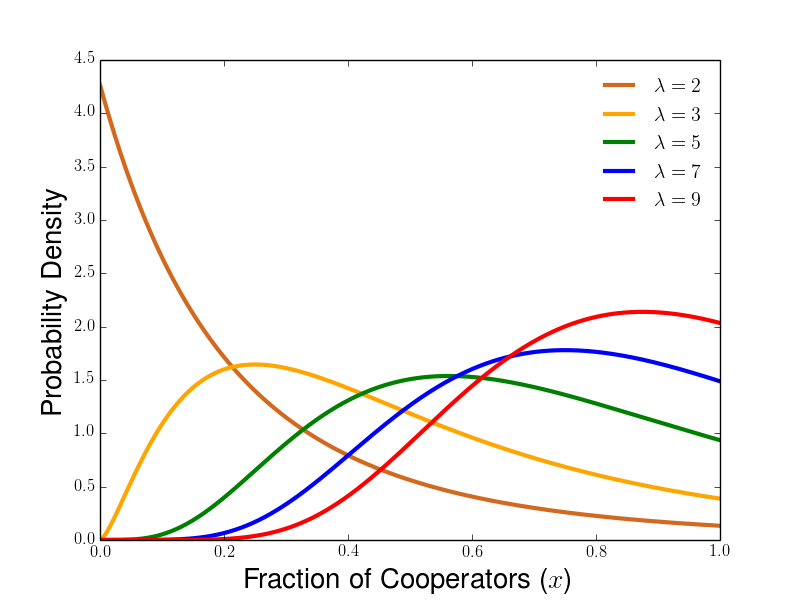}
    \includegraphics[width=0.495\textwidth]{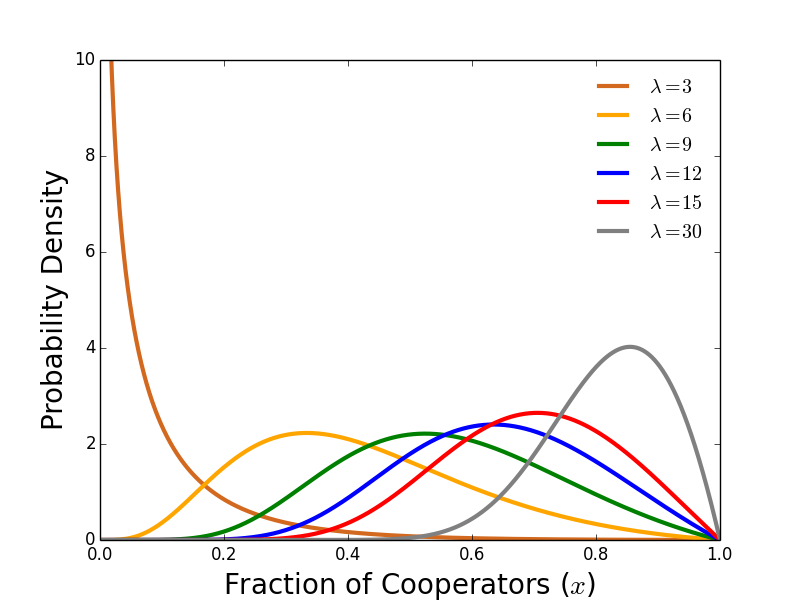}
    \caption{Steady state densities for various relative selection strengths $\lambda$ when $\gamma = 2.5$ ($G(x)$ maximized by $x=1$), computed from the result of Proposition \ref{prop:steadystatespecialpd}. (Left) Steady state corresponding to initial distribution with Hölder exponent $\theta_0 = 1$ near $x=1$. (Right) Steady state corresponding to initial distribution with Hölder exponent $\theta_0 = 1$ near $x=1$.}
    \label{fig:noghostdensity}
\end{figure}

\begin{figure}[H]
    \centering
    \includegraphics[width=0.495\textwidth]{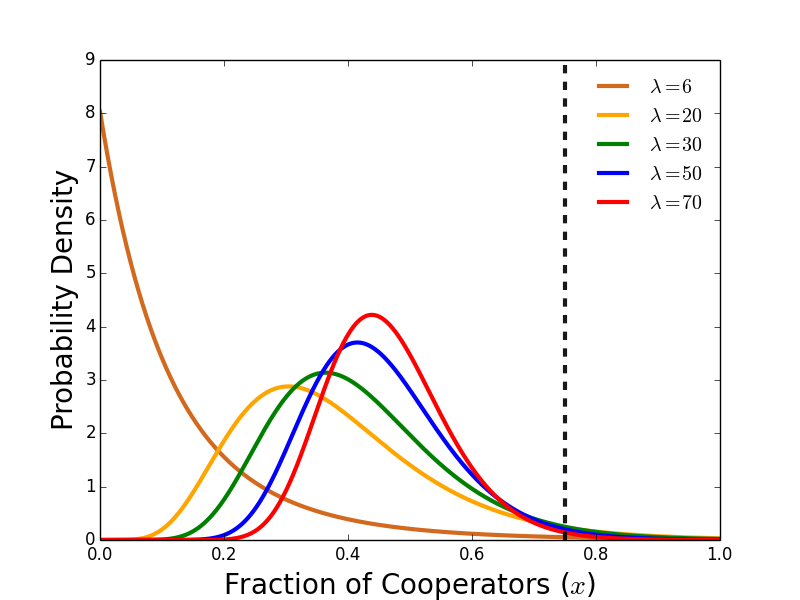}
    \includegraphics[width=0.495\textwidth]{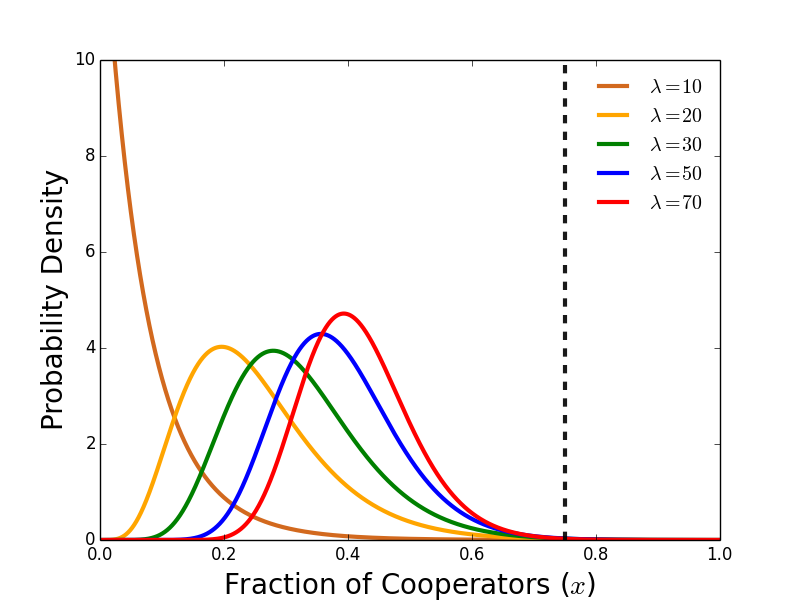}
    \caption{Steady state densities for various relative selection strengths $\lambda$ when $\gamma = 1.5$ ($G(x)$ maximized by $x=0.75$), computed from the result of Proposition \ref{prop:steadystatespecialpd}. (Left) Steady state corresponding to initial distribution with Hölder exponent $\theta_0 = 1$ near $x=1$. (Right) Steady state corresponding to initial distribution with Hölder exponent $\theta_0 = 1$ near $x=1$. Dotted line indicates that average group payoff $G(x)$ is maximized at $75$ percent cooperators.}
    \label{fig:ghostdensity}
\end{figure}

\begin{proof}  
For $\lambda (\gamma - 1) > 2 \theta$, we show, for any $\psi \in C([0,1])$, that there is a family of normalizing constants $\{Z_t\}_{t \geq 0}$ such that $Z_t \int_0^1 (\psi g_t) \circ (\phi_t(x)) d \mu_0(x) \to \int_0^1 \psi(x) f_{\theta}(x) \dx$. First we integrate by parts, obtaining $$\ds\int_0^1 [(\psi g_t) \circ \phi_t] (x) d \mu_0(x)  = \psi g_t F_0(1^-) - \psi g_t F_0(0^+) - \ds\int_0^1 \partial_x \left[(\psi g_t) \circ \phi_t  \right] (x) F_0(x) \dx  $$ where $F_0(x) :=  \mu_0[0,x]$ is the cumulative distribution function of $\mu_0$. Because $\phi_t(x)$ is continuous and satisfies $\phi_t(0) = 0$ and $\phi_t(1) = 1$, %
we know that $\int_0^1 \partial_x \left[(\psi g_t) \circ \phi_t  \right] (x) \dx = [\psi g_t](1^-) - [\psi g_t ](0^+) $. This allows us to rewrite our above equation as  $$\ds\int_0^1 [(\psi g_t) \circ \phi_t] (x) d \mu_0(x)  = [\psi g_t(1 -  F_0)](1^-) - [\psi g_t (1 - F_0](0^+) + \ds\int_0^1 \partial_x \left[(\psi g_t) \circ \phi_t  \right] (x) (1 - F_0(x)) \dx  $$ 

Because the group compositions are supported on $[0,1]$, $1 - F_0(1^-) = 0$, and we have from Equation \ref{eq:gtx} that $g_t(0^+) = e^{-\lambda (\gamma - 1)}$. This allows us to see that  $$[\psi g_t (1 - F_0)](0^+) = \psi(0) g_t(0) (1 - F_0(0^+)) = (1 - F_0(0^+)) \psi(0) e^{- \lambda (\gamma-1)t}] $$ Picking $Z_t = e^{2 \theta t}$, we see that $$\ds\lim_{t \to \infty} e^{2 \theta t} \left( \psi g_t (1 - F_0))(0^+)  \right) = \ds\lim_{t \to \infty} \left( (1 - F_0(0^+)) \psi(0) e^{- (\lambda (\gamma - 1) - 2 \theta)t} \right) \to 0 \: \: \mathrm{when} \: \: \lambda (\gamma - 1) > 2 \theta $$

For the integral term, we write $y = \phi_t(x)$ and see that $\partial_x \left[\psi (y) g_t(y) \right] = \partial_y[\psi(y) g_t(y)] \cdot \partial_x y$, so integrating with respect to $y$ gives \begin{align*}  \ds\int_0^1 \partial_x \left[(\psi g_t) \circ \phi_t  \right] (x) (1 - F_0(x)) \dx   %
&=   \ds\int_0^1 \partial_y \left[ \psi(y) g_t(y) \right]  ( 1 - F_0(\phi_t^{-1}(y)) \dy \end{align*}
Because $g_t(x) \to x^{\lambda (\gamma - 1)} (1 +x)^{- \lambda \gamma}$ as $t \to \infty$, we see that $$\ds\lim_{t \to \infty}\partial_x \left[\left(\psi g_t\right) \circ x \right]  = \partial_x \left[ x^{\lambda (\gamma -1)} (1+x)^{-\lambda \gamma} \psi(x) \right]  $$

We also observe that \begin{align*} 1 - F_0(\phi_t^{-1}(x)) &= \mu_0[\phi_t^{-1}(x),1] = \mu_0[1 - (1 - \phi_t^{-1}(x)),1] \\ &= \mu_0\left[1 - \frac{ \left( 1 - x^2 \right) e^{-2t} + x^2 -  x \sqrt{\left( 1 - x^2 \right) e^{-2t} + x^2}}{\left( 1 - x^2 \right) e^{-2t} + x^2},1 \right] \end{align*} Then, using our assumption on $\mu_0[1-x,1]$ for $x$ near $1$ to deduce that for large $t$, \begin{align*} 1 - F_0(\phi_t^{-1}(x)) &\approx % 
C e^{-2\theta t} \left( \frac{ \left( 1 - x^2 \right)  + e^{2 t } \left( x^2 -  x \sqrt{\left( 1 - x^2 \right) e^{-2t} + x^2} \right)}{\left( 1 - x^2 \right) e^{-2t} + x^2} \right)^{\theta} \end{align*} We can also compute that %
$\ds\lim_{t \to \infty} e^{2 \theta t} \left[1 - F_0( \phi_t^{-1}(x))\right] \to C 2^{-\theta} \left((1-x)(1+x) \right)^{\theta} x^{-2 \theta}$, and
$$ e^{2 \theta t} \ds\int_0^1 \left[(\psi g_t) \circ \phi_t(x) \right] \mu_0(dx) \to \frac{C}{2^{\theta}} \ds\int_0^1 \partial_x \left[ x^{\lambda (\gamma -1)} (1+x)^{-\lambda \gamma} \psi(x) \right]  \left(\frac{(1-x)(1+x)}{x^2} \right)^{\theta} \dx $$ as $t \to \infty$. We can then integrate the righthand side by parts to see that $$e^{2 \theta t} \ds\int_0^1 \left[(\psi g_t) \circ \phi_t(x) \right] \mu_0(dx) \to \frac{2  C}{2^{\theta-1}} \ds\int_0^1  x^{\lambda (\gamma - 1) - 2 \theta - 1} (1-x)^{\theta - 1} (1+x)^{-\lambda \gamma + \theta - 1} \psi(x) \dx   \eqno{\qedhere} $$
 \end{proof}
 
 \begin{remark} In the case where $\lambda (\gamma - 1) > 2 \theta$, we can use Definition \ref{def:Holderexponent} to see that the steady-state density $f_{\theta}(x)$ also has \holder exponent of $\theta$ near $x = 1$.
Because our steady-states are densities, we see that the associated steady-state measure $\mu_{\infty}^{\lambda}([0,x])$  satisfies $\mu_{\infty}([1-y,1]) = \int_{1-y}^1 f_{\theta}(x) \dx$, and can find the Hölder exponent from the calculation 
\begin{align*} \ds\lim_{y \to 0} y^{-\alpha} \int_{1-y}^1 f_{\theta}(x) \dx &= \ds\lim_{y \to 0} y^{-\alpha} \ds\int_{1-y}^1  Z_f^{-1} x^{\lambda (\gamma - 1) - 2 \theta  - 1} \left(1 - x\right)^{\theta  - 1} \left(1 + x\right)^{-\lambda \gamma + \theta - 1} \dx\\  &= \ds\lim_{y \to 0}  \frac{(1-y)^{\lambda (\gamma - 1) - 2 \theta - 1} (y)^{ \theta  - 1} \left(2 - y\right)^{-\lambda \gamma + \theta - 1}}{Z_{f}  \alpha y^{\alpha - 1}} = \frac{2^{-\lambda \gamma + \theta - 1}}{\alpha Z_f} \ds\lim_{y \to 0} y^{\theta - \alpha}
\end{align*}
Then we see that \[ \lim_{y \to 0} y^{-\alpha} \mu_{\infty}^{\lambda}([1-y,1]) = \left\{
     \begin{array}{ll}
       0 & :   \alpha  < \theta \\
       (\alpha Z_f)^{-1} \:  2^{-\lambda \gamma + \theta - 1}& : \alpha = \theta \\
       \infty & : \alpha > \theta
     \end{array}
   \right.  \]
 which allows us to conclude that the \holder exponent of the steady-state $f_{\theta}(x)$ is $\theta_0$, and therefore the long-time \holder exponent  $\theta_{\infty} = \theta_0$, i.e. the \holder exponent of the steady-state distribution $\int_{0}^{x}f_{\theta}(x) \dx$ agrees with the \holder exponent $\theta_0$ of the initial data $\mu_0[0,x]$.
 Combined with the intuition provided in Example \ref{ex:exactsolution}, we further conjecture that the \holder exponent is conserved in time by the multilevel system given by Equation \ref{eq:pdeparam}. The discrepancy between the potentially nonzero \holder exponent for finite times in Example \ref{ex:exactsolution} and the \holder exponent of $0$ the weak-limit distribution $\delta(x)$ is reminscent of the discrepancy between conserved energy and the energy of weak-limit solutions for solutions of the Becker-Döring equations with supercritical initial density \citep{ball1986becker}. We demonstrate the preservation of Hölder exponents for the explicit solutions from Equation \ref{eq:PDspecialsolution} with our specially chosen initial data in Appendix \ref{sec:Holderpreserved}.
 
 \end{remark}

 \subsection{Steady States for Prisoner's Dilemma}\label{sec:PDsteadystates}
 
 We can find the possibly steady-state densities for the multilevel Prisoner's Dilemma system by characterizing time-independent solutions of Equation \ref{eq:specialPDPDE}. We seek to solve the equation 
 \begin{equation} \label{eq:timeindependentspecialPD} 0 = \dsdel{}{x} \left(x(1-x)(1+x) f(x) \right) + \lambda f(x) \left[ \gamma x - x^2 - \gamma M_1^f + 2 M_2^f \right]\end{equation}
Density solutions to Equation \ref{eq:timeindependentspecialPD} are steady-states of the form $$f(x) = Z_f^{-1} x^{\lambda \left(\gamma M_1^f - M_2^f \right) - 1} \left(1 - x\right)^{\left(\lambda / 2\right) \left( \gamma - 1 -  (\gamma M_1^f - M_2^f)\right)  - 1} \left(1 + x\right)^{-1 + \left(\lambda ./ 2 \right) \left(1 + \gamma +  (\gamma M_1^f - M_2^f)\right)}$$
As a matter of self-consistency, the steady-state distribution has to be integrable (and, in fact, normalized). We note that the integral of $f(x)$ will blow up near $x=1$ unless the exponent of $1-x$ exceeds $-1$. This occurs when  
\[ \frac{\lambda}{2} \left( \gamma - 1 - \left(\gamma M_1^f - M_2^f \right) \right) > 0 \Longrightarrow G(1) > \int_0^1 G(x) f(x) \dx \]
where we recall that $G(1) = \gamma - 1$ and that $\int_0^1 G(x) f(x) \dx = \gamma M_1^f - M_2^f$. Therefore it is impossible to have a density steady-state for this family of Prisoner's Dilemmas for which the average payoff of the population exceeds the average payoff of a all-cooperator group. 
We can parametrize these steady-states by their H{\"o}lder coefficent near $x = 1$, which we can compute using Definition \ref{def:Holderexponent}. Definition \ref{def:Holderexponent} tell us that the Hölder exponent of our steady-state density $f^{\lambda}_{\theta}(x)$ near the endpoint $x=1$ is equal to
 \begin{equation} \label{eq:steadyHolder} \theta = \frac{1}{2} \left[\lambda (\gamma - 1) - \lambda ( \gamma M_1^f - M_2^f) \right]. \end{equation} 
We can also rearrange this to express the average payoff in the whole population ($\int_0^1 G(x) f(x) \dx = \gamma M_1^f - M_2^f$) in terms of the Hölder exponent $\theta$, for given $\lambda$ and $\theta$, as follows
\begin{equation} \label{eq:averagepopulationpayoff}\int_0^1 G(x) f(x) \dx =  \gamma M_1^f - M_2^f =  \gamma - 1  - \frac{2 \theta}{\lambda} \end{equation} %
Rewriting our steady-states in terms of $\theta$, we have \begin{equation} \label{eq:pdsteadybetatheta} f^{\lambda}_{\theta}(x) = Z_f^{-1}  x^{ \lambda (\gamma - 1) - 2 \theta - 1} \left(1 - x\right)^{\theta - 1} \left(1 + x\right)^{- \lambda \gamma + \theta- 1}  \end{equation}
We notice that the solution of the time-independent problem of Equation \ref{eq:timeindependentspecialPD} given by Equation \ref{eq:pdsteadybetatheta} with given Hölder exponent $\theta$ near $x=1$ coincides with the density from Proposition \ref{prop:steadystatespecialpd} achieved as the long-time behavior of Equation \ref{eq:specialPDPDE} for initial data with corresponding Hölder exponent $\theta$. 

 So far, we have classified the long-time behavior of our system when $\lambda (\gamma - 1) < 2 \theta$, in which case we converge to full defection, and when $\lambda (\gamma - 1) > 2 \theta$, showing that we get coexistence of many compositions of cooperators and defectors within groups. Now we can study how this coexistence between coperation and defection fares in the limit as $\lambda \to \infty$, when between-group competition is infinitely stronger than within-group selection. Denoting the steady-state which attracts initial distributions with Hölder exponent $\theta$ near $x = 1$ for given $\lambda$ (satisfying $\lambda (\gamma - 1) > 2 \theta$), we have that \begin{equation} \label{eq:specialpdflambdatheta} f^{\lambda}_{\theta}(x) = Z_f^{-1} x^{\lambda (\gamma - 1) - 2 \theta - 1} (1-x)^{\theta -1} (1+x)^{-\frac{\lambda \gamma}{2} + \theta - 1}  \end{equation}
One way of quantifying our steady-state $f^{\theta}_{\lambda}(x)$ is to describe the most abudant (or modal) group type at steady-state $\hat{x}^{\theta}_{\lambda} := \argsup_{x \in [0,1]} f^{\lambda}_{\theta}(x)$. In the following proposition, we characterize the behavior of $\hat{x}^{\theta}_{\lambda}$ for various $\lambda$, and show that the behavior of $\hat{x}^{\theta}_{\lambda}$ depends on whether average group payoff $G(x)$ is maximized by all-cooperator groups ($x^*=1$, when $\gamma \geq 2$) or by type of group with a mix of cooperators and defectors ($x^* \in (0,1)$, when $\gamma < 2$).
 
 \begin{proposition}  Suppose $\theta  \geq 1$. If  $1 \leq \gamma \leq 2$, $\ds\lim_{\lambda \to \infty} \hat{x}_{\lambda} (f^{\lambda}_{\theta}) = \gamma - 1$. If  $\gamma > 2$, $ \ds\lim_{\lambda \to \infty} \hat{x}_{\lambda} (f^{\lambda}_{\theta} ) = 1$. 

\end{proposition}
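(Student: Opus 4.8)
The plan is to locate $\hat{x}^\lambda_\theta = \argsup_{x\in[0,1]} f^\lambda_\theta(x)$ by a first-derivative test and then track the relevant critical point as $\lambda\to\infty$. Write $f^\lambda_\theta(x) = Z_f^{-1}\, x^{A}(1-x)^{B}(1+x)^{C}$ with $A = \lambda(\gamma-1) - 2\theta - 1$, $B = \theta - 1$, $C = -\lambda\gamma + \theta - 1$. For $\lambda$ large (using $\gamma>1$) we have $A>0$, while $\theta\geq 1$ gives $B\geq 0$ and $C<0$; hence $f^\lambda_\theta$ extends continuously to $[0,1]$, vanishes at $x=0$, is strictly positive on $(0,1)$, and satisfies $f^\lambda_\theta(1)=0$ when $\theta>1$ and $f^\lambda_\theta(1)>0$ when $\theta=1$. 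So the supremum is attained either at an interior critical point or, only possibly when $\theta=1$, at the endpoint $x=1$; in particular, for $\lambda$ large $\hat{x}^\lambda_\theta$ is a zero of $(\log f^\lambda_\theta)'$ in $(0,1)$ unless $\theta=1$ and $f^\lambda_\theta$ turns out to be increasing all the way to $x=1$.

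Next I would compute $(\log f^\lambda_\theta)'(x) = \frac{A}{x} - \frac{B}{1-x} + \frac{C}{1+x} = \frac{N(x)}{x(1-x)(1+x)}$, where clearing denominators and using the cancellation $A+B+C = -\lambda-3$ gives the quadratic $N(x) = (\lambda+3)x^{2} - \lambda\gamma\, x + (\lambda(\gamma-1) - 2\theta - 1)$; on $(0,1)$ the sign of $(f^\lambda_\theta)'$ equals that of $N$. The data I need are $N(0)=A>0$, $N(1) = -2(\theta-1)\leq 0$, and leading coefficient $\lambda+3>0$. If $\theta>1$, these force $N$ to have exactly one root $x^*_\lambda$ in $(0,1)$ with $N>0$ to its left and $N<0$ to its right, so $f^\lambda_\theta$ increases then decreases and $\hat x^\lambda_\theta = x^*_\lambda$; moreover, since $N$ opens upward with $N(1)<0$, its two real roots $r_-(\lambda)<r_+(\lambda)$ straddle $1$, so $x^*_\lambda = r_-(\lambda) < 1 < r_+(\lambda)$. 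If instead $\theta=1$, then $N(1)=0$ and $N(x)=(1-x)(A+(A+C)x)$ with $A+C=-\lambda-\theta-2<0$, so on $(0,1)$ the sign of $(f^\lambda_\theta)'$ is that of the decreasing linear function $A+(A+C)x$, whose unique zero is $x^\dagger_\lambda = \frac{A}{\lambda+\theta+2} = \frac{\lambda(\gamma-1)-2\theta-1}{\lambda+\theta+2}$.

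To finish, I pass to the limit $\lambda\to\infty$. Dividing $N$ by $\lambda+3$, its coefficients converge to those of $x^{2}-\gamma x + (\gamma-1) = (x-1)(x-(\gamma-1))$, so by continuous dependence of the roots of a quadratic on its coefficients $\{r_-(\lambda),r_+(\lambda)\}\to\{1,\gamma-1\}$; combined with $r_-(\lambda)<1<r_+(\lambda)$ this yields $x^*_\lambda = r_-(\lambda)\to\min(1,\gamma-1)$, which equals $\gamma-1$ when $1\leq\gamma\leq 2$ and $1$ when $\gamma>2$ (at $\gamma=2$ the two roots collide at $1=\gamma-1$) — this proves the claim for $\theta>1$. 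For $\theta=1$ one has $x^\dagger_\lambda\to\gamma-1$, and: if $1\leq\gamma\leq 2$ then $x^\dagger_\lambda\in(0,1)$ for all large $\lambda$, so $f^\lambda_\theta$ increases on $(0,x^\dagger_\lambda)$ and decreases on $(x^\dagger_\lambda,1)$ (and the endpoint value $f^\lambda_\theta(1)>0$ is not the maximum), whence $\hat x^\lambda_\theta = x^\dagger_\lambda\to\gamma-1$; if $\gamma>2$ then $x^\dagger_\lambda\geq 1$ for all large $\lambda$, so $N>0$ on $(0,1)$, $f^\lambda_\theta$ is increasing there, and $\hat x^\lambda_\theta = 1$.

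I expect the main obstacle to be this last layer of bookkeeping rather than any serious analysis: one must carefully decide which of the two critical points is the maximizer (not a local minimum), handle the endpoint $x=1$ separately in the borderline case $\theta=1$ where $f^\lambda_\theta(1)\neq 0$, and track the sign change at the transition $\gamma=2$ at which the roots of the limiting quadratic coincide. The underlying algebra — the cancellation $A+B+C=-\lambda-3$ and the factorization $x^{2}-\gamma x+\gamma-1=(x-1)(x-(\gamma-1))$ — is routine.
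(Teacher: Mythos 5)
Your proof is correct and follows essentially the same route as the paper's: differentiate $f^{\lambda}_{\theta}$, reduce to the sign of the quadratic $N(x)=(\lambda+3)x^{2}-\lambda\gamma x+\lambda(\gamma-1)-2\theta-1$ (the paper's $g(x)$), and pass to the limit of its roots, which converge to those of $(x-1)\bigl(x-(\gamma-1)\bigr)$. Your version is in fact slightly more careful than the paper's: the sign data $N(0)>0$ and $N(1)=-2(\theta-1)\leq 0$ pins down which root lies in $(0,1)$ and that it is the maximizer rather than a minimum, and you correctly treat the boundary case $\theta=1$, where $f^{\lambda}_{\theta}(1)=Z_f^{-1}2^{-\lambda\gamma}>0$ rather than $0$ as the paper's proof asserts.
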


\begin{proof}

We start by differentiating $f^{\lambda}_{\theta}$ and see that $$\dsddx{f^{\lambda}_{\theta}(x)}{x} = g(x) \left[ Z_f^{-1}  x^{\lambda \left( (\gamma - 1) - 2 \theta \right) - 2} \left(1 - x\right)^{\theta - 2} \left(1 + x\right)^{ - \lambda \gamma + \theta - 2} \right] $$ where $g(x)$ is given by $ g(x) = %
 \lambda( \gamma - 1) - 2 \theta - 1 + \left(  - \lambda \gamma \right) x +  (3 + \lambda) x^2$.
We note that $g(x)$ vanishes at the points \begin{equation} \label{eq:specialxhatlambdapm} x^{\lambda}_{\pm} = \frac{\lambda \gamma + \ds\sqrt{\left(\lambda \gamma \right)^2 - 4 \left(\lambda (\gamma - 1) - 2 \theta  - 1 \right) ( 3 + \lambda)}}{2 ( 3 + \lambda)} \end{equation} In the limit of large $\lambda$, that this simplifies to \begin{equation} \label{eq:specialxhatlambdapminfinity} x^{\infty}_{\pm} := \ds\lim_{\lambda \to \infty} x^{\lambda}_{\pm} = \frac{\gamma}{2} \pm \ds\sqrt{\frac{\gamma^2}{4} - (\gamma - 1) }  = \frac{\gamma}{2} \pm \ds\sqrt{\left(\frac{\gamma}{2} - 1\right)^2}\end{equation} Therefore the critical points of $g(x)$ depend on the relative value of $\frac{\gamma}{2}$ and $1$, and we see that 

\begin{itemize}
 
\item For $\gamma < 2$, %t
then $ x^{\infty}_{\pm} = \frac{\gamma}{2} \pm \left(1 - \frac{\gamma}{2} \right)$, and therefore $x^{\infty}_{+} = 1$ and $ x^{\infty}_{-} = \gamma - 1$. %
Further, $\gamma - 1 < \frac{\gamma}{2} < 1$, so the point $ x^{\infty}_{-} < \argmax_{x \in [0,1]} G(x)$ so the interor critical point at steady-state has fewer cooperators than the (interior) type of group that maximizes the average payoff of group members.

\item For $\gamma > 2$, %
$x^{\infty}_{\pm} = \frac{\gamma}{2} \pm \left(\frac{\gamma}{2} - 1 \right)$ and then $x^{\infty}_{-} = 1$ and $x^{\infty}_{+} = \gamma - 1 > 1$. Thus the unique critical point for $g(x)$ is $1$ for $x \in [0,1]$.    

\end{itemize}
For $\lambda > \frac{2 \theta}{\gamma - 1}$, we have from \eqref{eq:pdsteadybetatheta} that $f^{\lambda}_{\theta} (0) = f^{\lambda}_{\theta}(1) = 0$. Because $0$ and $1$ are the only possible critical points of $f^{\lambda}_{\theta}(x)$ other than those of $g(x)$, we can deduce that $\ds\lim_{\lambda \to \infty} \hat{x}_{\lambda}(f^{\lambda}_{\theta}) = \gamma - 1$ for $1 \leq \gamma \leq 2$ and that $\ds\lim_{\lambda \to \infty} \hat{x}_{\lambda}(f^{\lambda}_{\theta}) = 1$ for $\gamma \geq 1$. 
\end{proof}

\begin{remark} \label{rem:payoffphasetransition} For our simplified parameters, $\gamma = 1$ corresponds to the point where $R = P$ and our underlying game is no longer a Prisoner's Dilemma. $\gamma = 2$ corresponds to $2R = T + S$ and is the cutoff between $\gamma < 2$ where average group payoff is maximized at a fraction of cooperators $x^* \in (0,1)$ and $\gamma \geq 2$ where average group payoff is maximized all-cooperator groups. 

\end{remark}

\begin{remark} \label{rem:parameterphasetransition} An important observation is that $\hat{x}_{\lambda \to \infty}(f^{\lambda}_{\theta}(x)) < \argmax_{x \in [0,1]} G(x)$ precisely when $\gamma < 2$, meaning that peak abundance in steady-state under weak within-group selection results in fewer cooperators than in a group with maximal average payoff precisely when average payoff of group members is maximized by groups with fewer than 100 percent cooperators. When $\gamma \geq 2$, we can think of defectors as those that are purely dominant at the within-group level and cooperators are those that are purely dominant at the between-group level, so relative ratios of within-group and between-group selection strength between $\lambda = 0$ and $\lambda = \infty$ neatly interpolate between $\hat{x}_0 = 0$ and $\hat{x}_{\lambda \to \infty} = 1$. When groups with a mix of cooperators and defectors can outperform pure cooperator groups in competition at the group level, defectors can also be (at least partially) selected for at the between-group level, and the presence of defectors at steady-state exceeds the level expected by looking at optimal production at the between-group level alone.
\end{remark}
\begin{remark} \label{rem:groupx1payoff}
Using Equation \ref{eq:averagepopulationpayoff}, we see that the average payoff of the population satisfies $$\ds\lim_{\lambda \to \infty} \ds\int_0^1 G(x) f^{\lambda}_{\theta}(x) \dx = \lim_{\lambda \to \infty} \gamma M_1^f - M_2^f = \gamma - 1 = G(1).$$ %
This means that the average payoff of the whole population %
at steady-state achieve the average payoff of a all-cooperator group when group selection strength $\lambda \to \infty$ (this is also true when $\gamma \geq 2$ because $\lim_{\lambda \to \infty} \hat{x}_{\lambda} = 1$%
). This means that the population cannot, on average, achieve higher payoff than an all-coopertor group, and illustrates the critical importance of all-cooperator groups for the success of the whole population. Further, this provides additional illustration of the impossibility to achieve cooperation at steady-state for any $\lambda$ when $\gamma = 1$, where $G(x) = x(1-x)$ and the only group type with the same average payoff as the all-cooperator group is the all-defector group. 
\end{remark}
We illustrate the results of this section in three figures below. In Figure \ref{fig:pdghost}, we show, for various $\gamma$, the fraction of cooperators for groups that achieve maximal average payoff and for groups that achieve peak abundance at steady-state ($\hat{x}_{\lambda}$) when the relative strength of group-level selection $\lambda \to \infty$. In Figure \ref{fig:lambdagammaheatmapspecial}, we plot $\hat{x}_{\lambda}$ for various levels of $\lambda$ and $\gamma$. In Figure \ref{fig:lambdapeak}, we plot the mean fraction of cooperators at steady-state $M_1^{f_{\theta}}$ and fraction of cooperators in most abundant group type $\hat{x}_{\lambda}$ for different relative intensities of selection $\lambda$, which highlights the discrepancy between most fit group type and most abundant group type as $\lambda to \infty$ when $\gamma < 2$ and groups are best off with a mix of cooperators and defectors.

\begin{figure}[H]   
  \centering
    \includegraphics[width=0.6\textwidth]{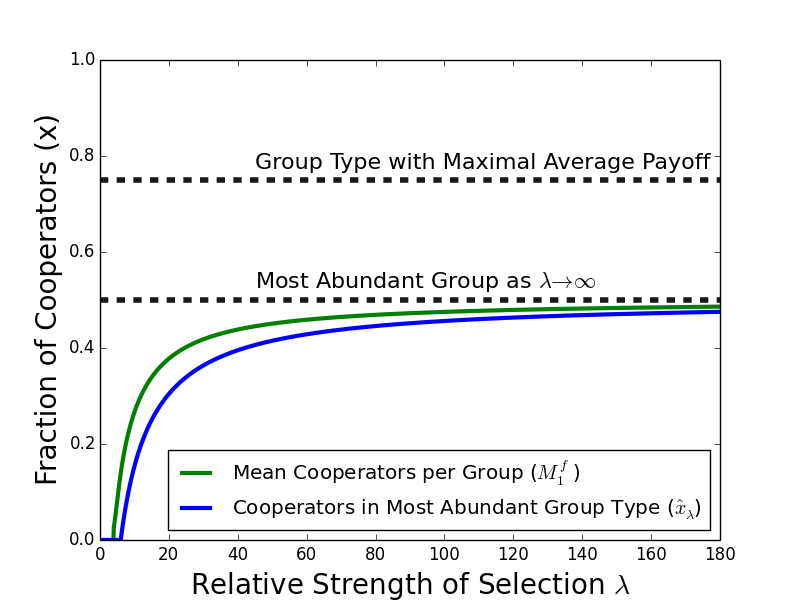}
      \caption{The mean fraction of cooperators $M_1^f$ and the fraction of cooperators in the most abudant group composition for the steady-state $\hat{x}_{\lambda}$ (as calculated in Equation \ref{eq:specialxhatlambdapminfinity}) for various values of  $\lambda$. Other parameters are fixed as $\gamma = \frac{3}{2}$, $\beta = 1$, $\alpha = 1$, so group average payoff is maximized at $x^* = \tfrac{3}{4}$ and peak abundance satisfies $\hat{x}_{\lambda} \to \frac{1}{2}$ as $\lambda \to \infty$. We note that the mean $M_1^{f_{\theta}}$ is initially larger than $\hat{x}_{\lambda}$ but also tends to $\tfrac{1}{2}$ as $\lambda \to \infty$. }
       \label{fig:lambdapeak}
    
\end{figure}

\begin{figure}[H]
  \centering
    \includegraphics[width=0.65\textwidth]{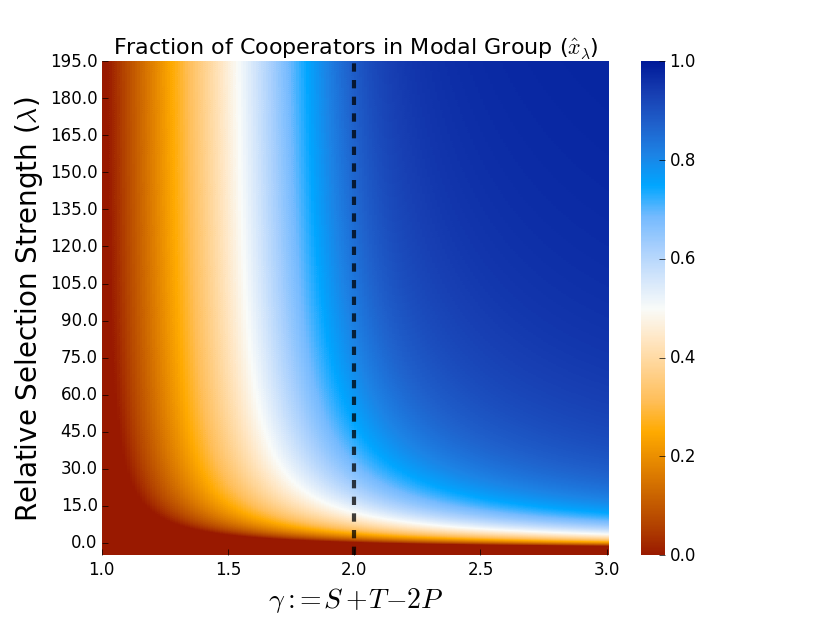}
      \caption{Color indicates the fraction of cooperators in the group type with peak abundance for the steady-state $\hat{x}_{\lambda}$ (as calculated in Equation \ref{eq:specialxhatlambdapm}) for various values of  $\lambda$ and $\gamma$. Other parameters are fixed as $\beta = 1$ and $\alpha = 1$, so group average payoff is maximized at $x^* = \min\left(\tfrac{\gamma}{2},1\right)$ and peak abundance satisfies $\hat{x}_{\lambda} \to \min\left(\gamma - 1,1\right)$ as $\lambda \to \infty$. Vertical slices of heatmap can be interpreted in the same way as the green curve for $\hat{x}_{\lambda}$ in Figure \ref{fig:lambdapeak} for a fixed value of $\gamma$.}
       \label{fig:lambdagammaheatmapspecial}
\end{figure}

\begin{figure}[H]
  \centering
    \includegraphics[width=0.6\textwidth]{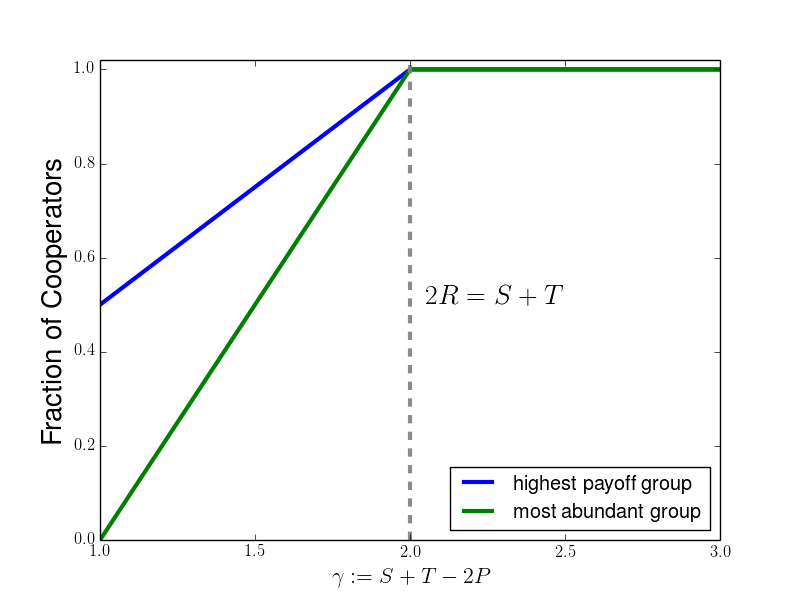}
      \caption{Comparison of the type of group composition $x$ with maximum group reproduction rate $G(x)$ to the peak abundance for the steady-state $\hat{x}_{\lambda}$ as $\lambda \to \infty$, plotted in terms of the parameter $\gamma = S + T - 2P$. For $\gamma \geq 2$ both peak group fitness and most abundant group type are all-cooperator groups, while for $\gamma < 2$, the most fit group type $\tfrac{\gamma}{2}$ exceeds the most abudant group type at steady-state $\gamma - 1$. }
       \label{fig:pdghost}
\end{figure}

\subsubsection{Luo-Mattingly Model} \label{subsec:lmmodel}

For Case II, corresponding to a scaled version of the frequency independent model governed by Equation \ref{eq:luomattingly}, it was shown by Luo and Mattingly that steady-state solutions take the form $f(x) = Z_f^{-1} x^{\lambda - \theta - 1} (1-x)^{\theta - 1} $ \citep{luo2017scaling} for $\lambda > \theta > 0$. They showed that these steady-states are achieved as the long-time behavior of Equation \ref{eq:luomattingly} for initial data with Hölder exponent $\theta$ near $x=1$. To find the most abundant group type at steady-state, we compute that $f'(x) = Z_f^{-1} \left[(2 - \lambda x) + (\lambda - \theta - 1) \right] x^{\lambda - \theta - 2} (1-x)^{\theta - 2}$. Then the three possible global maxima for $f(x)$ are the interior critical point $x = \frac{\lambda - \theta - 1}{\lambda - 2 } = 1 - \frac{\theta - 1}{\lambda - 2 }$ and the interval endpoints $x=0,1$. 
We then find that peak abundance is achieved by
\begin{displaymath}
   \ds\argmax_{x \in [0,1]} f(x) = \left\{
     \begin{array}{lr}
       0 & : \lambda < \theta + 1\\
       1 - \frac{\theta - 1}{\lambda - 2} & : \lambda > \theta + 1
     \end{array}
   \right.
\end{displaymath} 
and see that $\hat{x}_{\lambda} \to 1$ as $\lambda \to \infty$, and that $\hat{x}_{\lambda}$ interpolates between $0$ when $\lambda \leq \theta + 1$ to $1$ as $\lambda \to \infty$. Compared to Case I, we see that there is no discrepancy between the most fit group and the most abundant group type at steady-state in the limit of strong between-group selection ($\lambda \to \infty$), but there is also no possibility of having an intermediate group fitness optimum within this family of payoff matrices.

\section{Hawk-Dove} \label{sec:HD}

In this section, we consider the multilevel dynamics of the Hawk-Dove game. In Section \ref{sec:HDsolvable}, we consider a special family of payoff matrices which have solvable replicator dynamics, allowing us to apply the method of characteristics to characterize the long-time behavior of the system. In Section \ref{sec:HDsteadystate}, we %
analyze the steady-states of our multilevel PDE and consider the behavior of steady-states in the limit of large $\lambda$. 

\subsection{Hawk-Dove Game with Solvable Replicator Dynamics} \label{sec:HDsolvable}

We consider a special case for HD, where our payoffs satisfy $T = R + 1$ and $S = P + 1$, corresponding to the parameters $\beta = 1$, $\alpha = -2$. These games are characterized by payoff matrices of the form
\begin{equation} \label{eq:specialhdpayoffmatrix}
\begin{blockarray}{ccc}
& C & D \\
\begin{block}{c(cc)}
C & \gamma + P - 2 &    P + 1 \\
D & \gamma + P  - 1 & P \\
\end{block}
\end{blockarray}
\end{equation}
where, we need $\gamma > 3$ for such a payoff matrix to describe an HD game (so that $R > S$ and corespondingly $\gamma + P -2 > P +1$). The group average payoff function for this family of games is $G(x) = \gamma x - 2 x^2$, which is maximized by groups with a compostion of cooperators of $x^* = \min(\tfrac{\gamma}{4},1)$. In other words, groups are best off with $\frac{\gamma}{4}$ cooperators when $3 < \gamma < 4$, by all-cooperator groups when $\gamma \geq 4$. In the edge case when $\gamma = 3$ and the payoff matrix of Equation \ref{eq:specialhdpayoffmatrix} no longer corresponds to an HD game, groups have highest average payoff with $\tfrac{3}{4}$ cooperators.  %
From the payoff matrix, we have that our multilevel dynamics are given by 
\begin{equation} \label{eq:hawkdovePDE} \dsdel{f(t,x)}{t} = \dsdel{}{x} \left( x (1-x) (2x - 1) f(t,x)  \right) + \lambda f(t,x) \left[\left(\gamma x - 2x^2\right) - \left( \gamma M_1^f - 2M_2^f \right) \right], \end{equation}
which has characteristic curves satisfying the replicator dynamics \begin{equation} \label{eq:specialHDrep}\dsddt{x(t)} = x(1-x)(1-2x) \end{equation} which has a stable interior fixed point with a fifty-fifty mix of cooperators and defectors ($x = \frac{1}{2}$), and has unstable fixed points with full defector ($x=0$) and all-cooperator ($x=1$) groups. These replicator dynamics can be solved forwards in time with initial condition $x_0$ as \begin{equation}  \label{eq:hdcharacteristics}  x(t,x_0) = \phi_t(x_0)  =  \left\{
     \begin{array}{lr}
       \frac{1}{2} \left(1 + \frac{(2 x_0 - 1)}{\sqrt{(2x_0 - 1)^2 + (1 - (2x_0 - 1)^2) e^t}} \right) & : x_0 > \frac{1}{2}\\
       \frac{1}{2} \left(1 - \frac{(1 - 2 x_0)}{\sqrt{(1 - 2x_0)^2 + (1 - (1 - 2x_0)^2) e^t}} \right) & : x_0 < \frac{1}{2}
     \end{array}
   \right.  \end{equation} and can also be solved backwards in time for initial condition $x_0$ given $(t,x(t))$ as \begin{equation} \label{eq:hdcharbackwards} x_0(t,x(t)) = \phi_t^{-1}(x) = %
 \left\{
     \begin{array}{lr}
       \frac{1}{2} \left(1 + \frac{(2 x - 1)}{\sqrt{(2x - 1)^2 + (1 - (2x - 1)^2) e^{-t}}} \right) & : x > \frac{1}{2}\\
       \frac{1}{2} \left(1 - \frac{(1 - 2 x)}{\sqrt{(1 - 2x)^2 + (1 - (1 - 2x)^2) e^{-t}}} \right) & : x < \frac{1}{2}
     \end{array}
   \right. 
\end{equation} 

The component of our solution describing solely the effect of between-group competition is \begin{equation} \label{eq:wexpress} w_t(x) = \exp \left( \lambda \left( \gamma \ds\int_0^t x(s,x_0) ds - 2 \ds\int_0^t x(s,x_0)^2 ds \right) - \lambda \ds\int_0^t h(s) ds \right) \end{equation} where $h(s) := \gamma M_1^f(s) - 2 M_2^f(s)$. Using the integrals computed in the appendix, we have that 

 \begin{displaymath}
   w_t(x)  = \left\{
     \begin{array}{lr}
       \ds\frac{\left( \sqrt{(2x-1)^2 + (1 - (2x-1)^2)e^{-t}} + (2x -1)\right)^{\lambda ( \gamma - 2)} }{(2x)^{\lambda (\gamma - 2)} \left( (2x-1)^2 + (1 - (2x-1)^2)e^{-t} \right)^{ \frac{\lambda}{2}}} e^{\lambda ( \gamma - 2) t  - \lambda \int_0^t h(s) ds}  & : x >\frac{1}{2}\\
        \ds\frac{\left( \sqrt{(1-2x)^2 + (1 - (1-2x)^2)e^{-t}} + (1-2x)\right)^{\lambda ( 2 - \gamma)} }{(2x)^{\lambda (2 - \gamma)} \left( (1-2x)^2 + (1 - (1-2x)^2)e^{-t} \right)^{ \frac{\lambda}{2}}} e^{-\lambda \int_0^t h(s) ds} & : x < \frac{1}{2}
     \end{array}
   \right.
\end{displaymath}
For $a \in [0,\frac{1}{2}]$, we note that 
\begin{equation} \label{eq:aequation} \lambda (\gamma - 2)t - \int_0^t h(s) ds = \lambda (1-a) \left[ \gamma - 2 (1 + a) \right] - \lambda \gamma \ds\int_0^t \left(M_1^f(s) - a\right) ds + 2 \lambda 
\int_0^t \left( M_2^f(s) - a^2 \right)  ds \end{equation}
where %
$M_1^f(s) - a$ and $M_2^f(s) - a^2$ measure the deviation of the moments of our distribution $f(t,x)$ from the moments of $\delta(x - a)$. %
In our subsequent analysis, it is useful to denote  \[j_a(s):=  \gamma \left(M_1^f(s) - a \right)  - 2 \left(M_2^f(s) - a^2 \right). \]
We isolate the $x$-dependence by writing $w_t(x) = k(t) g_t(x) $ with $g_t(x)$ given by

 \begin{equation} \label{eq:hdgtx}
   g_t(x)  = \left\{
     \begin{array}{lr}
       \ds\frac{\left( \sqrt{(2x-1)^2 + (1 - (2x-1)^2)e^{-t}} + (2x -1)\right)^{\lambda ( \gamma - 2)} }{(2x)^{\lambda (\gamma - 2)} \left( (2x-1)^2 + (1 - (2x-1)^2)e^{-t} \right)^{ \frac{\lambda}{2}}}  & : x >\frac{1}{2}\\
        \ds\frac{\left( \sqrt{(1-2x)^2 + (1 - (1-2x)^2)e^{-t}} + (1-2x)\right)^{\lambda ( 2 - \gamma)} }{(2x)^{\lambda (2 - \gamma)} \left( (1-2x)^2 + (1 - (1-2x)^2)e^{-t} \right)^{ \frac{\lambda}{2}}}  & : x < \frac{1}{2}
     \end{array}
   \right.
\end{equation}
For the Hawk-Dove game, where there can be groups with fewer cooperators than at the within-group equilibrium, we define the \holder exponent describing the behavior near the $x = 0$ endpoint as $$\zeta = \ds\inf \left\{ \rho \: \bigg| \: \ds\lim_{x \to 0} \frac{\mu_0[0,x]}{x^{\rho}} > 0\right\} $$
To characterize the long-time behavior of the system, we use the following two lemmas. Lemma \ref{lem:lefthalf} shows us that groups with fewer cooperators than present in the within-group Hawk-Dove equilibrium at $x=\tfrac{1}{2}$ are not present in the long-run steady-state. We show that probability of having groups with fewer cooperators than at the interior Hawk-Dove equilibrium decays to $0$ exponentially quickly as $t \to \infty$.  %
Lemma \ref{lem:momentinequality} makes a technical point, telling us that%
$\int_0^{\infty} j_a(s) ds > -\infty$ for $a \in [0,\tfrac{1}{2})$, which allows us to simplify our analysis for characterizing when the distribution $f(t,x)$ can converge to a delta-function at the Hawk-Dove equilibrium $\delta(x - \tfrac{1}{2})$. 

For these lemmas, and for the subsequent characterization of the long-time behavior of Equation \ref{eq:hawkdovePDE}, we will assume that the initial distribution has positive Hölder exponent near $\zeta > 0$ near $x = 0$, so there is no initial mass concentrated at all-defector groups. We note that this assumption is made for convenience, as is allows us to make direct use of the method of characteristics. However, similar behavior should be expected with initial probably concentrated at all defector groups, as such groups should lose out to groups with initial nonzero fractions of cooperators due to between-group competition and groups with initially positive fractions of cooperators are expected to be attracted to the Hawk-Dove equilibrium at $x=\tfrac{1}{2}$ due to within-group competition.

\begin{lemma} \label{lem:lefthalf} Suppose we have an initial distribution $\mu_0$ with Hölder exponent $\zeta > 0$ near $x = 0$. Then for each $\delta > 0$,  $\ds\lim_{t \to \infty} \mu_t[0,\frac{1}{2} - \delta] = 0$.  \end{lemma}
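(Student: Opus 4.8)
The plan is to argue entirely at the level of the representation formula \eqref{eq:mutpushforward}--\eqref{eq:mutdx}, exploiting that for this special Hawk--Dove family the replicator flow $\phi_t$ is strictly increasing on $[0,\tfrac12]$ (since $\dot x = x(1-x)(1-2x)>0$ on $(0,\tfrac12)$ by \eqref{eq:specialHDrep}) and fixes the endpoints $0$ and $\tfrac12$; hence backward transport by $\phi_t$ compresses the fixed interval $[0,\tfrac12-\delta]$ into a shrinking neighbourhood of the unstable point $x=0$, where $\mu_0$ carries no mass. Concretely, since $\mu_t(dx)=w_t(x)\,P_t\mu_0(dx)$, evaluating against $\mathbf 1_{[0,\frac12-\delta]}$ and substituting $x=\phi_t(y)$ gives
\[
\mu_t\!\left[0,\tfrac12-\delta\right]\;=\;\int_{0}^{\;\phi_t^{-1}(\frac12-\delta)} w_t\bigl(\phi_t(y)\bigr)\,\mu_0(dy),
\]
(one may instead use a $C^1$ cutoff above $\mathbf 1_{[0,\frac12-\delta]}$, at the cost of replacing $\tfrac12-\delta$ by $\tfrac12-\tfrac\delta2$ on the right). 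From the explicit backward characteristic \eqref{eq:hdcharbackwards}, $e^{-t}\to0$ forces $\phi_t^{-1}(\tfrac12-\delta)\to0$ as $t\to\infty$; so it suffices to bound $w_t(\phi_t(y))$ uniformly in $t$ over the relevant $y$'s and then use that $\mu_0(\{0\})=0$ because $\zeta>0$.

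For the uniform bound, write $w_t=k(t)\,g_t$ as in the excerpt: on the left half $k(t)=\exp\!\bigl(-\lambda\int_0^t h(s)\,ds\bigr)$ with $h(s)=\gamma M_1^f(s)-2M_2^f(s)=\int_0^1 G(x)\,\mu_s(dx)$. Since $\gamma>3$ here, $G(x)=\gamma x-2x^2=x(\gamma-2x)\ge0$ on $[0,1]$, so $h(s)\ge0$ and $k(t)\le1$ for all $t\ge0$. For $g_t$, use its closed form \eqref{eq:hdgtx} on $x<\tfrac12$: putting $u=1-2x$ and $q=\sqrt{u^2+(1-u^2)e^{-t}}\in[u,1]$, it equals $\bigl((q+u)/(1-u)\bigr)^{\lambda(2-\gamma)}q^{-\lambda}$. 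On $x\in[0,\tfrac12-\delta]$ one has $u\ge2\delta$, hence $q\ge u\ge2\delta$ and $(q+u)/(1-u)\ge 2u/(1-u)\ge 4\delta/(1-2\delta)>0$; because $2-\gamma<0$, both factors are then bounded above by a constant $M_\delta$ depending only on $\delta$ (and $\lambda,\gamma$) and not on $t$, while at $x=0$ the value of $g_t$ is $0$, so the bound extends to the closed interval. Consequently $w_t(\phi_t(y))\le M_\delta$ whenever $\phi_t(y)\le\tfrac12-\delta$.

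Combining the two displays, $\mu_t[0,\tfrac12-\delta]\le M_\delta\,\mu_0\!\left[0,\phi_t^{-1}(\tfrac12-\delta)\right]$, and the right-hand side tends to $M_\delta\,\mu_0(\{0\})=0$ as $t\to\infty$, since $\phi_t^{-1}(\tfrac12-\delta)\to0$ and $\mu_0[0,x]\to0$ as $x\to0$ by the hypothesis $\zeta>0$; this is the claim. The main obstacle is precisely the uniform bound on $w_t$: the weight $w_t(x)$ genuinely \emph{diverges} as $t\to\infty$ for $x$ near the within-group equilibrium $x=\tfrac12$ (this divergence is exactly what drives the eventual concentration of mass there), so no bound valid on all of $[0,\tfrac12]$ exists, and one is forced to exploit the gap $\delta$ — on $[0,\tfrac12-\delta]$ the factor $g_t$ stays bounded uniformly in time, and the backward flow does the rest by draining that interval into the mass-free point $x=0$. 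The remaining points that require care are the nonnegativity of $h$ (where $\gamma>3$ enters, ensuring $k(t)\le1$) and the sign bookkeeping of the exponents in $g_t$.
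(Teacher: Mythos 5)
Your argument is correct and follows essentially the same route as the paper: the same representation $\mu_t = w_t\,P_t\mu_0$ restricted to $[0,\phi_t^{-1}(\tfrac12-\delta)]$, the same split $w_t=k(t)g_t$ with $k(t)\le 1$ from $h(s)=\int_0^1 G(x)\,\mu_s(dx)\ge 0$, and the same uniform-in-$t$ bound on $g_t$ over $[0,\tfrac12-\delta]$ (which you actually justify in more detail than the paper does). The one substantive difference is the final step: you conclude by continuity of measures, using only that $\zeta>0$ forces $\mu_0(\{0\})=0$, so you obtain $\mu_t[0,\tfrac12-\delta]\to 0$ with no rate; the paper instead expands $\phi_t^{-1}$ explicitly and invokes the quantitative Hölder hypothesis to get the bound $\mu_t[0,\tfrac12-\delta]\le C M \delta^{-2\zeta}2^{-\zeta}e^{-\zeta t}$. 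For the lemma as stated your weaker conclusion suffices, but be aware that the exponential rate is what Lemma \ref{lem:momentinequality} actually consumes (it needs $\int_0^\infty \mu_s[0,a]\,ds<\infty$ to bound $\int_0^\infty j_a^1(s)\,ds$ from below), so if you were writing the full argument you would need to keep the quantitative version rather than the purely qualitative one.
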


\begin{proof}  From the measure-valued formulation%
, we write for continuous test-function $\psi(x)$ that $$\ds\int_0^{\frac{1}{2} - \delta} \psi(x) d \mu_t (x)  = \ds\int_0^{\phi_t^{-1}(\frac{1}{2} - \delta)} \psi (\phi_t(x)) w_t(\phi(x)) d \mu_0(x) \leq ||\psi||_{\infty} \ds\int_0^{\phi_t^{-1}(\frac{1}{2} - \delta)} w_t(\phi_t(x)) d\mu_0(x)$$ Because $g_t(x)$ is bounded on $[0,\tfrac{1}{2} - \delta]$, there is some $M < \infty$ such that $w_t(x) \leq M e^{ -\lambda \int_0^t h(s) ds}$, and thus  $$\ds\int_0^{\frac{1}{2} - \delta} \psi(x) d \mu_t (x) \leq M e^{  -\lambda \int_0^t h(s) ds} \mu_0[0,\phi_t^{-1}(\tfrac{1}{2} - \delta)] \leq M \mu_0[0,\phi_t^{-1}(\tfrac{1}{2} - \delta)] $$
where the last inequality follows because $h(s) := \gamma M_1^{\mu} - 3 M_2^{\mu} \geq 3(M_1^{\mu} - M_2^{\mu}) \geq 0$.
For sufficiently large $t$, we see from Equation \ref{eq:hdcharbackwards} and our assumption on $\mu_0[0,x]$, that \begin{align*} \mu_0[0,\phi_t^{-1}(x)] &\approx \frac{C}{2^{\zeta}}  \left( 1 - \frac{(1 - 2 x)}{\sqrt{(1 - 2x)^2 + (1 - (1 - 2x)^2) e^{-t}}} \right)^{\zeta}  \\ &=
 \frac{C}{2^{\zeta}} e^{-\zeta t} \left( \frac{4x(1-x) +  e^t \left((1 - 2x)^2  - (1-2x) \sqrt{(1 - 2x)^2 + (1 - (1-2x)^2) e^{-t}}\right)}{(1 - 2x)^2 + (1 - (1-2x)^2) e^{-t}} \right)^{\zeta}  \end{align*}
For the term in parenthesis, we see that the numerator is $\leq 4x(1-x)$ and the denominator is $\geq (1-2x)^2$, so we have that
$$ \mu_0[0,\phi_t^{-1}(x)] \leq \frac{C}{2^{\zeta}} e^{-\zeta t} \left(4x(1-x) \right)^{\zeta} \left(1 - 2x\right)^{- 2 \zeta} \leq \frac{C \delta^{-2 \zeta}}{2^{\zeta}} e^{-\zeta t}$$ 
and choosing the test function $\psi(x) \equiv 1$, we conclude that 
$$\mu_t[0,\tfrac{1}{2} - \delta] \leq \frac{CM \delta^{- 2 \zeta}}{2^{\zeta}} e^{-\zeta t} \to 0 \: \: \mathrm{as} \: \: t \to \infty \eqno{\qedhere}  $$
  \end{proof}

  \begin{lemma} \label{lem:momentinequality} If $\mu_0[0,x]$ has Hölder exponent $\zeta$ near $ x = 0$, then for $a \in [0,\tfrac{1}{2})$, $\int_0^{\infty} j_a(s) ds > - \infty$. %
  \end{lemma}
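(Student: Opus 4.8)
The plan is to recognize $j_a(s)$ as the $\mu_s$-average of the group-payoff deficit $G(x)-G(a)$, to observe that this deficit is nonnegative except on the short left interval $[0,a)$, and then to annihilate the left-interval contribution using the \emph{exponential} decay of mass near $x=0$ that comes out of the proof of Lemma \ref{lem:lefthalf}.

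First I would record the algebraic identity. Since $\alpha=-2$ for this family, $G(x)=\gamma x-2x^{2}$, and writing the moments of $\mu_s$ as $M_1^f(s),M_2^f(s)$ one has
\[ j_a(s)=\gamma\big(M_1^f(s)-a\big)-2\big(M_2^f(s)-a^{2}\big)=\int_0^1\big(G(x)-G(a)\big)\,\mu_s(dx), \]
using the factorization $G(x)-G(a)=(x-a)\big(\gamma-2x-2a\big)$. Next I would fix $a\in[0,\tfrac12)$ and determine the sign of the integrand on $[0,1]$: because $\gamma>3$, the factor $\gamma-2x-2a\ge\gamma-2-2a>\gamma-3>0$ for every $x\in[0,1]$, so $G(x)-G(a)$ carries the sign of $x-a$. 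Hence the integrand is $\ge 0$ on $[a,1]$, while on $[0,a)$ it is bounded below by a finite constant $-K=-K(\gamma,a)$ (explicitly $-a(\gamma-2a)$ works, or crudely any lower bound for the continuous function $G-G(a)$ on $[0,1]$). Splitting the integral at $x=a$ and discarding the nonnegative part gives the pointwise bound
\[ j_a(s)\ \ge\ -K\,\mu_s[0,a]. \]

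The last step is to integrate this in $s$, and here I would invoke not merely the conclusion of Lemma \ref{lem:lefthalf} but the quantitative estimate inside its proof, namely $\mu_s[0,\tfrac12-\delta]\le C'e^{-\zeta s}$ valid for $s$ beyond some $s_0$, with $C'=C'(\delta)$. Applying it with $\delta=\tfrac12-a>0$, and using the trivial bound $\mu_s[0,a]\le 1$ on $[0,s_0]$, I obtain
\[ \int_0^\infty j_a(s)\,ds\ \ge\ -K\int_0^\infty\mu_s[0,a]\,ds\ \ge\ -\Big(K s_0+\tfrac{K C'}{\zeta}\,e^{-\zeta s_0}\Big)\ >\ -\infty . \]
The edge case $a=0$ is immediate separately, since $G(x)=x(\gamma-2x)\ge 0$ on $[0,1]$ forces $j_0(s)\ge 0$.

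I do not foresee a genuine obstacle; the only point requiring care is that the bound on $\mu_s[0,a]$ must be the exponential one (which is exactly where the hypothesis $\zeta>0$ enters, making it integrable on $[0,\infty)$), together with the remark that the finite initial window before that estimate takes effect contributes only a finite amount. A pleasant feature worth flagging is that the positive contribution from $[a,1]$ never needs to be estimated at all — it is simply thrown away — which is why the argument is this short.
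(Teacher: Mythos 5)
Your proof is correct and follows essentially the same route as the paper's: split $j_a$ at $x=a$, discard the nonnegative contribution from $[a,1]$, bound the $[0,a]$ piece by a constant times $\mu_s[0,a]$, and integrate the exponential decay $\mu_s[0,\tfrac12-\delta]\lesssim e^{-\zeta s}$ extracted from the proof of Lemma \ref{lem:lefthalf}. The only difference is cosmetic: you verify nonnegativity on $[a,1]$ pointwise via the factorization $G(x)-G(a)=(x-a)(\gamma-2x-2a)$, which is cleaner than the paper's change-of-variables/moment computation for $j_a^2$, but it is the same argument in substance.
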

\begin{proof} We start by writing $$j_a(s)  = \int_0^a \left[\gamma \left( x - a \right) - 2\left( x^2 - a^2 \right)  \right]\mu_t(dx) + \int_a^1 \left[\gamma \left( x - a \right) - 2\left( x^2 - a^2 \right)  \right] \mu_t(dx)  := j_a^1(s) + j_a^2(s)$$ We denote the intervals $\mc{I}_1 = [0,a]$ and $\mc{I}_2 = (a,1]$ and correspondingly define $p_1 = \int_0^a \mu_t(dx)$ and $p_2 = \int_{a}^1 \mu_t(dx)$.

First we analyze $ \int_a^1 \left[ \gamma \left( x - a \right)  - 2 \left(x^2 - a^2\right) \left( \frac{1}{p_2} \right) \right] \mu_t(dx)$. Using the change of variable $z = \frac{x-a}{1-a}$, we can convert our description of $\frac{1}{p_2} \mu_t[0,x]$ as a probability measure on $\mc{I}_2 = [a,1]$ into a probability measure $\nu_t[0,z]$ with support on $[0,1]$. Because we require normalization when integrating with respect to $z$, we desire
$$1 = \ds\int_{a}^{1} \frac{1}{p_2} \mu_t(dx) = \ds\int_0^1 \frac{1-a}{p_2} \mu_t(dz) \Longrightarrow \nu_t(dz) := \frac{1-a}{p_2} \mu_t(dz) $$
Using this, we can now compute that \[ \ds\int_{a}^1 \frac{\left(x - a \right)}{p_2}  \mu_t(dx) = (1-a) \ds\int_0^1 z \nu_t(dz) = (1-a) M_1^{\nu} \] %
where $M_j^{\nu} := \int_0^1 z^j \nu_t(dz)$. Similarly, we find that
 \[\ds\int_{a}^1 \frac{\left(x^2 - a^2 \right)}{p_2} \mu_t(dx) =  \int_0^1 \left[\left((1 - a) z + a\right)^2 - a^2\right] \nu_t(dz) = (1-a) \left[2a M_1^{\nu} + (1-a) M_2^{\nu} \right]
\]
And we can further see that \begin{align*} j_a^2(s) &= p_2 (1-a) \left[ \gamma  \ds\int_{a}^1 \frac{\left(x - a \right)}{p_2}  \mu_t(dx) - 2  \ds\int_{a}^1 \frac{\left(x^2 - a^2 \right)}{p_2}  \mu_t(dx)  \right] \\ &= p_2 (1-a) \left[ \left( \gamma - 4a \right) M_1^{\nu} - 2 (1-a) M_2^{\nu} \right] \\ &= p_2 (1-a) \left( \gamma - 2(1+a) \right) M_1^{\nu} + 2(1-a) (M_1^{\nu} - M_2^{\nu}) \geq 0 \end{align*}
where we obtain the final inequality because $p_2 \geq 0$, $a < \frac{1}{2}$ by assumption, $\gamma \geq 3$ for the Hawk-Dove game and that $M_1^{\nu} \geq M_2^{\nu}$ because $\supp(\nu_t(dz)) \subseteq [0,1]$. 

Now we can estimate that\begin{align*} j_a^1 & = \ds\gamma \int_0^a \left(x - a \right) \mu_t(dx)  - 2 \underbrace{\int_0^a \left(x^2 - a^2 \right) \mu_t(dx)}_{\leq 0} \geq -\gamma a \int_0^a \mu_t(dx)   = - \gamma a \mu_t[0,a] \end{align*}
Because $a < \frac{1}{2}$, $\exists \delta_a > 0$ such that $a = \tfrac{1}{2} - \delta_{a}$. Using this and the result of Lemma \ref{lem:lefthalf}, we find that there are $C, M < \infty$ such that
\begin{align*} j_a^1(s) &\geq  - \gamma a \mu_t[0,\tfrac{1}{2} - \delta_a] \geq -  \frac{CM \gamma a \delta_a^{- 2 \zeta }}{2 ^{\zeta}} e^{- \zeta t} 
\end{align*} 
Combining this with our finding that $j_a^2(s) \geq 0$, we can see that $$\ds\int_0^t j_a(s) ds = \int_0^t \left( j_a^1(s) + j_a^2(s) \right) ds \geq  - \ds\int_0^t  \left(  \frac{CM \gamma a \delta_a^{- 2 \eta }}{2 ^{\zeta }} e^{- \zeta s}  \right) ds =   \frac{CM \gamma a \delta_a^{- 2 \zeta}}{\zeta 2 ^{\zeta }} \left(  1 - e^{-\zeta t}\right)  $$
and therefore we see that $\ds\int_0^{\infty} j_a(s) ds \geq -  \frac{CM \gamma a \delta_a^{- 2 \zeta}}{\zeta 2 ^{\zeta}} > - \infty$.  
\end{proof}

Now we are ready to characterize the long-time behavior of our multilevel Hawk-Dove system in terms of its initial distribution. The qualitative behavior is divided into two regimes by a critical level of relative strength of between-group selection ($\lambda^* := \frac{2 \theta}{\gamma - 3}$) which depends on the payoff matrix through $\gamma$ and on the initial data through the \holder exponent $\theta$ near $x=1$. For $\lambda < \lambda^*$, the distribution of groups converge to a delta-function at the within-group HD equilibrium ($\delta(x - \tfrac{1}{2})$) as $t \to \infty$. When $\lambda > \lambda^*$, the distribution of groups converges as $t \to \infty$ to a density which is supported at all group types between the within-group HD equilibrium at $x = \tfrac{1}{2}$ and full-cooperation at $x=1$. Notably, between-group selection can only increase the fraction of cooperators sustained in the group-structured population at steady-state.

\begin{proposition} \label{prop:deltahd} Suppose $\mu_0[0,x]$ has Hölder exponents $\zeta$ near $x=0$ and $\theta$ near $x=1$. Then if $\lambda (\gamma - 3) < 2 \theta$, $\mu_t(dx) \rightharpoonup \delta(x - \frac{1}{2})$. %
 \end{proposition}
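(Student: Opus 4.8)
The plan is to follow the template of Proposition~\ref{prop:deltaspecialpd}: I will show that $\int_0^1 \psi(x)\,\mu_t(dx) \to \psi(\tfrac12)$ as $t \to \infty$ for every $\psi \in C[0,1]$. Given $\epsilon > 0$, continuity of $\psi$ furnishes $\delta > 0$ with $|\psi(x) - \psi(\tfrac12)| < \epsilon$ on $[\tfrac12-\delta,\tfrac12+\delta]$, and since $\mu_t$ is a probability measure,
\[
\left| \int_0^1 \psi\,d\mu_t - \psi(\tfrac12) \right| \;\le\; \epsilon \;+\; 2\|\psi\|_\infty \left( \mu_t[0,\tfrac12-\delta] + \mu_t[\tfrac12+\delta,1] \right).
\]
By Lemma~\ref{lem:lefthalf} the mass $\mu_t[0,\tfrac12-\delta] \to 0$, so everything reduces to showing $\mu_t[\tfrac12+\delta,1] \to 0$ under the hypothesis $\lambda(\gamma-3) < 2\theta$.

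For the right tail I would use the push-forward representation \eqref{eq:mutpushforward}. Since the characteristic flow $\phi_t$ is order-preserving with $\phi_t(1) = 1$, only initial data $x_0 \ge \phi_t^{-1}(\tfrac12+\delta)$ land in $[\tfrac12+\delta,1]$, so
\[
\mu_t[\tfrac12+\delta,1] = \int_{\phi_t^{-1}(1/2+\delta)}^{1} w_t(\phi_t(x_0))\,\mu_0(dx_0).
\]
Writing $w_t(x) = k(t)\,g_t(x)$ with $g_t$ given by \eqref{eq:hdgtx} and $k(t) = \exp\!\big(\lambda(\gamma-2)t - \lambda\int_0^t h(s)\,ds\big)$, I would first note from the explicit formula for $g_t$ that it is bounded on $[\tfrac12+\delta,1]$ by a constant $M_\delta$ independent of $t$: the base of the numerator never exceeds $2$ (the quantity under the root is at most $1$ since $e^{-t}\le 1$), so as $\lambda(\gamma-2) > 0$ the numerator is $\le 2^{\lambda(\gamma-2)}$, while the denominator is $\ge (2\delta)^\lambda$. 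To control $k(t)$ I would invoke identity \eqref{eq:aequation}: for any $a \in [0,\tfrac12)$ it rewrites $\lambda(\gamma-2)t - \lambda\int_0^t h(s)\,ds$ as $\lambda(1-a)\big(\gamma-2(1+a)\big)t - \lambda\int_0^t j_a(s)\,ds$, and the proof of Lemma~\ref{lem:momentinequality} actually gives a uniform-in-$t$ lower bound $\int_0^t j_a(s)\,ds \ge -K_a > -\infty$. Hence $k(t) \le e^{\lambda K_a}\exp\!\big(\lambda(1-a)(\gamma-2(1+a))\,t\big)$.

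Next I would estimate $\mu_0[\phi_t^{-1}(\tfrac12+\delta),1]$. The backward-characteristic formula \eqref{eq:hdcharbackwards} shows $1 - \phi_t^{-1}(\tfrac12+\delta) \le C_\delta e^{-t}$ for large $t$, so the Hölder hypothesis near $x=1$ gives $\mu_0[\phi_t^{-1}(\tfrac12+\delta),1] \le C' (C_\delta e^{-t})^{\theta}$. Combining the three bounds,
\[
\mu_t[\tfrac12+\delta,1] \;\le\; M_\delta\, e^{\lambda K_a}\, C' C_\delta^{\theta}\,\exp\!\Big[\big(\lambda(1-a)(\gamma-2(1+a)) - \theta\big)t\Big].
\]
Because $\lambda(1-a)(\gamma-2(1+a)) \to \tfrac{\lambda(\gamma-3)}{2}$ as $a \uparrow \tfrac12$ and $\tfrac{\lambda(\gamma-3)}{2} < \theta$ by hypothesis, I can fix $a < \tfrac12$ close enough to $\tfrac12$ to make the exponent strictly negative, whence $\mu_t[\tfrac12+\delta,1] \to 0$. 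Letting $\epsilon \downarrow 0$ then yields $\int_0^1 \psi\,d\mu_t \to \psi(\tfrac12)$, i.e.\ $\mu_t \rightharpoonup \delta(x-\tfrac12)$.

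The step I expect to be the main obstacle is the handling of the time factor $k(t)$. One cannot take $a = \tfrac12$ directly in \eqref{eq:aequation}, since the constant $K_a$ coming from Lemma~\ref{lem:momentinequality} blows up as $a \uparrow \tfrac12$ (its proof controls $\mu_t[0,a] = \mu_t[0,\tfrac12-\delta_a]$ with a rate constant proportional to $\delta_a^{-2\zeta}$). The resolution is precisely that the hypothesis is a strict inequality, so there is slack to choose $a$ slightly below $\tfrac12$; the remaining verifications --- the uniform bound on $g_t$ over $[\tfrac12+\delta,1]$ and the exponential rate $1 - \phi_t^{-1}(\tfrac12+\delta) = O(e^{-t})$ --- are routine computations from the explicit formulas \eqref{eq:hdgtx} and \eqref{eq:hdcharbackwards}.
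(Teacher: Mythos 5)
Your proof is correct and follows essentially the same route as the paper's: the same three-interval decomposition, Lemma \ref{lem:lefthalf} for the left tail, the backward characteristic \eqref{eq:hdcharbackwards} plus the Hölder hypothesis to get $\mu_0[\phi_t^{-1}(\tfrac12+\delta),1]=O(e^{-\theta t})$, and identity \eqref{eq:aequation} together with the uniform lower bound from Lemma \ref{lem:momentinequality} to absorb the factor $e^{\lambda(\gamma-2)t-\lambda\int_0^t h(s)\,ds}$. The only (cosmetic) difference is that by always taking $a$ close to $\tfrac12$ you dispatch in one stroke the two cases the paper treats separately ($\lambda(\gamma-2)<\theta$ versus $\tfrac{\lambda}{2}(\gamma-3)<\theta<\lambda(\gamma-2)$), and your remark that the constant $K_a$ degenerates as $a\uparrow\tfrac12$ correctly identifies why the strict inequality in the hypothesis is what makes the choice of $a$ possible.
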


\begin{proof}

For $\lambda (\gamma - 3) < 2 \theta$, 
we show that, for any continuous function $\psi(x)$, $\int_0^1 \psi(x) \mu_t(dx) \to \psi(\tfrac{1}{2})$ as $t \to \infty$. Because $\phi(\cdot)$ is continuous, we know that $\forall \epsilon > 0$, $\exists \delta > 0$ such that $|\phi(x) - \phi(\tfrac{1}{2})| < \epsilon$ when $|x - \tfrac{1}{2}| < \delta$. Because $\mu_t(dx)$ is a probability distribution, we can say that \begin{align*}\bigg| \ds\int_0^1 \psi(x) \mu_t(dx) - \psi(\tfrac{1}{2}) \bigg| %
&\leq \int_0^{\tfrac{1}{2} - \delta} |\psi(x) - \psi(\tfrac{1}{2}) | \mu_t(dx) +  \int_{\tfrac{1}{2} - \delta}^{\tfrac{1}{2} + \delta} |\psi(x) - \psi(\tfrac{1}{2}) | \mu_t(dx)+ \ds\int_{\tfrac{1}{2} + \delta}^1 |\psi(x) - \psi(\tfrac{1}{2})| \mu_t(dx)  \\ &\leq \epsilon + 2 ||\psi||_{\infty} \int_{0}^{\phi_t^{-1}(\tfrac{1}{2} - \delta)} w_t(\phi_t(x)) \mu_0(dx) % 
+ 2 ||\psi||_{\infty} \ds\int_{\phi_t^{-1}(\tfrac{1}{2} + \delta)}^1  w_t(\phi_t(x)) \mu_0(dx) \end{align*}
We have that $\exists M > 0$ such that $g_t(\phi_t(x)) \leq M$ for each $x \in [0, \phi^{-1}_t(\tfrac{1}{2} - \delta)] \cup [\phi^{-1}_t(\tfrac{1}{2} + \delta),1]$, so  \begin{align*}\bigg| \ds\int_0^1 \psi(x) \mu_t(dx) - \psi\left(\tfrac{1}{2}\right) \bigg| &\leq \epsilon + %
2  M ||\psi||_{\infty} e^{-\lambda \int_0^t h(s) ds} \mu_t[0,\tfrac{1}{2} - \delta] + 2 M ||\psi||_{\infty} e^{\lambda (\tfrac{\gamma - 3}{2}) t - \int_0^t h(s) ds} \mu_0[\phi_t^{-1}(\tfrac{1}{2} + \delta),1]% 
  \end{align*}
  From Lemma \ref{lem:lefthalf}, %
we know that $2 ||\psi||_{\infty} e^{-\lambda \int_0^t h(s) ds} \mu_t[0,\tfrac{1}{2} - \delta] \to 0$ as $t \to \infty$. %
Now we examine $x \in [\tfrac{1}{2} + \delta,1]$. We see from Equation \ref{eq:hdcharacteristics} that \begin{align*} \phi_t^{-1}(x) = \frac{1}{2} \left( 1 +\frac{1}{\sqrt{1 + \left(\frac{1}{(2x-1)^2} - 1 \right)e^{-t}}} \right) \geq 1 - \frac{1}{2} \left( \frac{\left(\frac{1}{(2x-1)^2} - 1 \right)e^{-t}}{1 + \left(\frac{1}{(2x-1)^2} - 1 \right)e^{-t}} \right) \geq 1 - \frac{1}{2} \left(\frac{1}{(2x-1)^2} - 1\right) e^{-t} \end{align*}  so there exists $D>0$ such that for $x \in [\frac{1}{2} + \delta, 1]$, $$\mu_0[\phi_t^{-1}(x),1] \leq  \mu_0\left[1 - De^{-t},1 \right] \leq C D^{\theta} e^{-\theta t}$$ %
for sufficiently large $t$ %
due to our assumption on % 
$\mu_0[1-y,1]$ for $y$ near $0$. Thus we can write that
\begin{equation} \label{eq:usefulinequality}  2 M ||\psi||_{\infty} e^{\lambda (\gamma - 2) t - \int_0^t h(s) ds} \mu_0[\phi_t^{-1}(\tfrac{1}{2} + \delta),1]  \leq 2 M C D^{\theta} ||\psi||_{\infty} e^{[\lambda (\gamma - 2) - \theta] t - \int_0^t h(s) ds } \end{equation}
Because $\tfrac{\lambda}{2} \left( \gamma - 3 \right) < \lambda (\gamma -2)$, we can separate our analysis into cases where $\lambda (\gamma - 2) < \theta$ and where $ \tfrac{\lambda}{2} \left( \gamma - 3 \right) < \theta < \lambda (\gamma - 2)$. In the case where $\lambda (\gamma - 2) < \theta$, then we use the fact that $h(s) := \gamma M_1^{\mu} - 2 M_2^{\mu} \geq 0$ (because $\gamma \geq 3$ for the Hawk-Dove game and because $\supp\left(\mu_t(dx)\right) \subseteq [0,1]$) and Equation \ref{eq:usefulinequality} to see that 
\begin{equation} \label{eq:righthalftozero} 2 M ||\psi||_{\infty} e^{\lambda (\gamma - 2) t - \int_0^t h(s) ds} \mu_0[\phi_t^{-1}(\tfrac{1}{2} + \delta),1]  \leq 2 M C D^{\theta} ||\psi||_{\infty} e^{[\lambda (\gamma - 2) - \theta] t } \to 0 \: \: \mathrm{as} \: \: t \to \infty   \end{equation}% 
If $\tfrac{\lambda}{2} \left( \gamma -3 \right) < \theta < \lambda \left( \gamma - 2 \right)$, we use Equation \ref{eq:aequation}. Because $H(a) := (1-a)\left[\gamma - 2 (1+a) \right]$ is a continuous and decreasing function of $a$, our assumption on $\lambda$ tells us that $\exists a^* \in (0,\tfrac{1}{2})$ such that $\lambda H(a^*) = \theta$ and $\lambda H(a) < \theta$ for $a \in (a^*,\tfrac{1}{2}]$. Therefore, for $a^{**} \in (a^*,\tfrac{1}{2}]$ we can rewrite Equation \ref{eq:usefulinequality} as
\[ 2 M ||\psi||_{\infty} e^{\lambda (\gamma - 2) t - \int_0^t h(s) ds} \mu_0[\phi_t^{-1}(\tfrac{1}{2} + \delta),1]  \leq 2 M C D^{\theta} ||\psi||_{\infty} e^{[\lambda H(a^{**}) - \theta] t - \int_0^t j_{a^{**}}(s) ds }  \]
From Lemma \ref{lem:momentinequality}, we know that $\exists A < \infty$ such that $\forall t \geq 0$,  $\int_0^t j_{a^{**}}(s) ds \geq - A$, so, combined with our choice of $a^{**}$ so that $\lambda H(a^{**}) < \theta$, we can deduce that 
\[ 2 M ||\psi||_{\infty} e^{\lambda (\gamma - 2) t - \int_0^t h(s) ds} \mu_0[\phi_t^{-1}(\tfrac{1}{2} + \delta),1]  \leq 2 M C D^{\theta} ||\psi||_{\infty} e^{A}  e^{[\lambda H(a^{**}) - \theta] t } \to 0 \: \: \mathrm{as} \: \: t \to \infty \]
Knowing that $2 M ||\psi||_{\infty} e^{\lambda (\gamma - 2) t - \int_0^t h(s) ds} \mu_0[\phi_t^{-1}(\tfrac{1}{2} + \delta),1] \to 0$ as $ t \to \infty$ when $\lambda (\gamma - 3) < 2 \theta$, we can combine this with our deduction for $\mu[0,\tfrac{1}{2} - \delta]$ from Lemma \ref{lem:lefthalf} to conclude that % 
$$\bigg| \ds\int_0^1 \psi(x) \mu_t(dx) - \psi(\tfrac{1}{2}) \bigg| <  \epsilon \: \: \mathrm{as} \: \: t \to \infty $$ and it follows that $\mu_t(dx) \rightharpoonup \delta(x - \tfrac{1}{2})$ as $t \to \infty$ when $\lambda (\gamma - 3) < 2 \theta$. \end{proof}

\begin{proposition} \label{prop:steadystatespecialhd} Suppose $\mu_0[0,x]$ has Hölder exponents $\zeta$ near $x=0$ and  $\theta$ near $x=1$. %
If $\lambda (\gamma - 3) > 2 \theta$, then  \begin{displaymath}
   \mu_t(dx) \to f_{\infty}(x) \dx   = \left\{
     \begin{array}{lr}
     Z_f^{-1} \left(2x -1 \right)^{\lambda (\gamma - 3) - 2 \theta - 1} (1-x)^{\theta - 1} x^{-\lambda(\gamma-2) + \theta - 1} \dx  & : x \geq \frac{1}{2}\\
       0 & : x < \frac{1}{2}
     \end{array}
   \right.
\end{displaymath}  %
 \end{proposition}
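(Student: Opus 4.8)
The plan is to adapt the argument of Proposition~\ref{prop:steadystatespecialpd} to the interval $[\tfrac12,1]$, the region on which the Hawk--Dove characteristics (Equation~\ref{eq:specialHDrep}) run from near $x=1$ down to the within-group equilibrium $x=\tfrac12$. Starting from the pushforward representation (Equation~\ref{eq:mutpushforward}), $\int_0^1\psi(x)\,\mu_t(dx)=\int_0^1\psi(\phi_t(x))\,w_t(\phi_t(x))\,\mu_0(dx)$, I would split the integral at $x=\tfrac12$. From the formulas for $w_t$ and $g_t$ (Equation~\ref{eq:hdgtx}), $w_t(x)=g_t(x)\,e^{\lambda(\gamma-2)t-\lambda\int_0^t h}$ for $x>\tfrac12$ and $w_t(x)=g_t(x)\,e^{-\lambda\int_0^t h}$ for $x<\tfrac12$, so the common factor $e^{-\lambda\int_0^t h}$ cancels in the ratio $\int_0^1\psi\,\mu_t=\bigl(\int_0^1\psi(\phi_t)w_t(\phi_t)d\mu_0\bigr)\big/\bigl(\int_0^1 w_t(\phi_t)d\mu_0\bigr)$. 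Setting $A_t[\psi]:=\int_0^{1/2}\psi(\phi_t(x))g_t(\phi_t(x))\mu_0(dx)$ and $B_t[\psi]:=\int_{1/2}^1\psi(\phi_t(x))g_t(\phi_t(x))\mu_0(dx)$, this ratio equals $\bigl(A_t[\psi]+e^{\lambda(\gamma-2)t}B_t[\psi]\bigr)\big/\bigl(A_t[1]+e^{\lambda(\gamma-2)t}B_t[1]\bigr)$; multiplying numerator and denominator by $e^{(\theta-\lambda(\gamma-2))t}$, it suffices to show, for every $\psi\in C([0,1])$, that $e^{(\theta-\lambda(\gamma-2))t}A_t[\psi]\to0$ and $e^{\theta t}B_t[\psi]\to C\theta\int_{1/2}^1\psi(x)\widetilde f(x)\,dx$, where $C$ is the Hölder constant of $\mu_0$ at $x=1$ and $\widetilde f(x):=(2x-1)^{\lambda(\gamma-3)-2\theta-1}(1-x)^{\theta-1}x^{-\lambda(\gamma-2)+\theta-1}$. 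The proposition then follows, with $Z_f=\int_{1/2}^1\widetilde f(x)\,dx$ and $f_\infty=Z_f^{-1}\widetilde f$ on $[\tfrac12,1]$ (and $f_\infty\equiv0$ on $[0,\tfrac12)$).

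For the left half, one checks directly from Equation~\ref{eq:hdgtx} that $g_t$ is increasing on $[0,\tfrac12]$ with $g_t(\tfrac12^-)=e^{\lambda(\gamma-1)t/2}$, hence $g_t\le e^{\lambda(\gamma-1)t/2}$ there, so $e^{(\theta-\lambda(\gamma-2))t}A_t[\psi]\le||\psi||_\infty\,\mu_0[0,\tfrac12]\,e^{(\theta-\lambda(\gamma-3)/2)t}\to0$ since $\lambda(\gamma-3)>2\theta$. For $B_t[\psi]$: since $\phi_t$ fixes $\tfrac12$ and $1$, I would integrate $\int_{1/2}^1(\psi g_t)\circ\phi_t\,d\mu_0$ by parts against the tail $1-F_0$ and change variables $y=\phi_t(x)$, exactly as in Proposition~\ref{prop:steadystatespecialpd}, obtaining $B_t[\psi]=\psi(\tfrac12)g_t(\tfrac12^+)\,(1-F_0(\tfrac12))+\int_{1/2}^1\partial_y[\psi(y)g_t(y)]\,\mu_0[\phi_t^{-1}(y),1]\,dy$. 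The boundary term is killed by $e^{\theta t}$ because $g_t(\tfrac12^+)=e^{-\lambda(\gamma-3)t/2}$ and $\lambda(\gamma-3)>2\theta$ --- the exact analogue of the role of $g_t(0^+)=e^{-\lambda(\gamma-1)t}$ in the Prisoner's Dilemma proof. In the integral, $g_t(y)\to g_\infty(y):=(2y-1)^{\lambda(\gamma-3)}y^{-\lambda(\gamma-2)}$ pointwise on $(\tfrac12,1)$, while Equation~\ref{eq:hdcharbackwards} gives $1-\phi_t^{-1}(y)\sim\frac{y(1-y)}{(2y-1)^2}e^{-t}$, so the Hölder hypothesis yields $e^{\theta t}\mu_0[\phi_t^{-1}(y),1]\to C\bigl(\frac{y(1-y)}{(2y-1)^2}\bigr)^\theta$. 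Passing to the limit gives $e^{\theta t}B_t[\psi]\to C\int_{1/2}^1\partial_y[\psi(y)g_\infty(y)]\,\bigl(\frac{y(1-y)}{(2y-1)^2}\bigr)^\theta dy$, and a final integration by parts --- using the identity $\partial_y\bigl[\frac{y(1-y)}{(2y-1)^2}\bigr]=-(2y-1)^{-3}$ and observing that the boundary terms vanish at $1$ (since $(1-y)^\theta\to0$) and at $\tfrac12$ (since $g_\infty(y)\bigl(\frac{y(1-y)}{(2y-1)^2}\bigr)^\theta\sim(2y-1)^{\lambda(\gamma-3)-2\theta}\to0$) --- collapses this to $C\theta\int_{1/2}^1\psi(y)\widetilde f(y)\,dy$, as claimed.

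The step I expect to be the main obstacle is justifying the limit passage in $e^{\theta t}B_t[\psi]$: one needs a $t$-uniform, $\mu_0$-integrable majorant for $\partial_y[\psi(y)g_t(y)]\,\mu_0[\phi_t^{-1}(y),1]$ on $(\tfrac12,1)$, controlling simultaneously the $y\to\tfrac12^+$ end --- where $g_t$ and its $y$-derivative blow up but are dominated by the Hölder decay of the tail mass precisely when $\lambda(\gamma-3)>2\theta$ --- and the $y\to1^-$ end. This is the same delicate estimate underlying the Prisoner's Dilemma case, so I would reuse the bounds developed there; the monotonicity bound for $g_t$ on $[0,\tfrac12]$ and the pointwise limit of $g_t$ are routine. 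As in Proposition~\ref{prop:steadystatespecialpd}, the computation also shows that $f_\infty$ has Hölder exponent $\theta$ at $x=1$, consistent with the conjectured conservation of the Hölder exponent along the multilevel flow.
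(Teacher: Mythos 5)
Your proposal is correct and follows essentially the same route as the paper: integrate $\int_{1/2}^1(\psi g_t)\circ\phi_t\,d\mu_0$ by parts against the tail $1-F_0$, kill the boundary term at $x=\tfrac12$ using $g_t(\tfrac12^+)=e^{-\lambda(\gamma-3)t/2}$ together with $\lambda(\gamma-3)>2\theta$, change variables along the characteristics, use the Hölder hypothesis via $1-\phi_t^{-1}(y)\sim\tfrac{y(1-y)}{(2y-1)^2}e^{-t}$ to identify $\lim_t e^{\theta t}\mu_0[\phi_t^{-1}(y),1]$, and integrate by parts once more to recover $f_\infty$. Your two refinements --- normalizing by the ratio so that the unknown factor $e^{-\lambda\int_0^t h}$ cancels, and bounding the $[0,\tfrac12]$ contribution directly via $g_t\leq e^{\lambda(\gamma-1)t/2}$ instead of citing Lemma \ref{lem:lefthalf} --- are sound and, together with your explicit flagging of the dominated-convergence step, actually tighten points the paper leaves implicit.
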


Sample steady-states for given initial distributions with $\theta = 2$ and various values of $\lambda$ are given in Figure \ref{fig:hdghostdensity} for payoff matrices with $\gamma = 2.5$ (left) and for $\gamma = 1.5$ (right). In both panels, we see that the fraction of cooperators  at steady-state increases as $\lambda$ increases, and we see from the right panel that even large relative selection strength of between-group selection $\lambda$ cannot sustain a substantial density of groups with optimal average payoff.

\begin{figure}[H]
    \centering
    \includegraphics[width=0.495\textwidth]{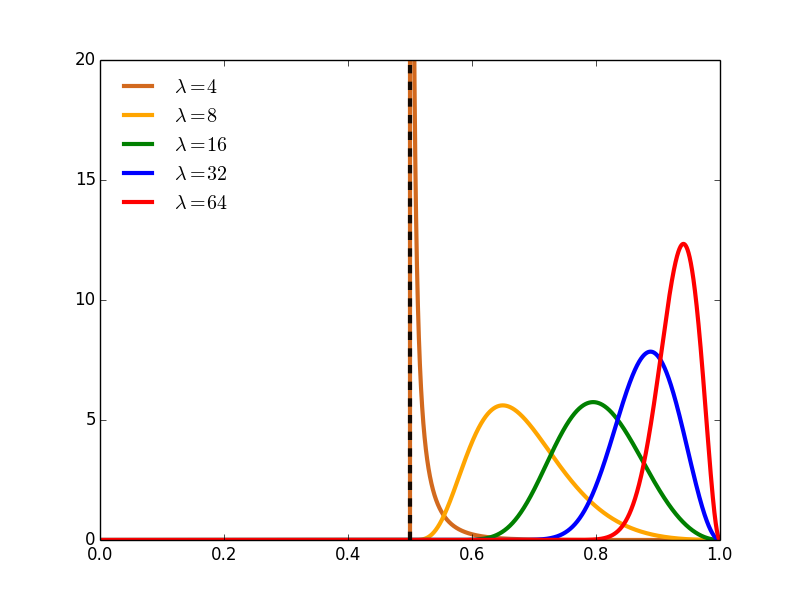}
    \includegraphics[width=0.495\textwidth]{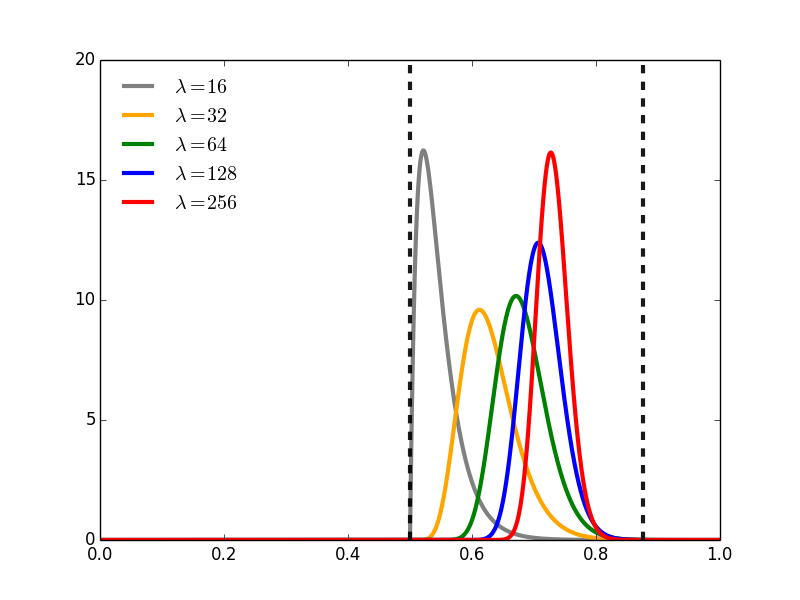}
    \caption{Steady state densities for various relative selection strengths $\lambda$ when initial distribution has Hölder exponent $\theta = 3$ near $x=1$, computed from the result of Proposition \ref{prop:steadystatespecialhd}. Dotted lines at $x = \tfrac{1}{2}$ indicates location of within-group Hawk-Dove equilibrium. (Left) Steady states for games with $\gamma = 4.5$ ($G(x)$ maximized by $x=1$).  (Right) Steady states for games with $\gamma = 3.5$ ($G(x)$ maximized by $x=\tfrac{7}{8}$). Dotted line at $x = \frac{7}{8}$ in right panel indicates maximal group payoff $G(x)$ at that composition $x$.}
    \label{fig:hdghostdensity}
\end{figure}

\begin{proof}
For $\lambda ( \gamma - 3) > 2 \theta$, we demonstrate, for some family of normalizing constants $\{Z_t\}_{t \geq 0}$%
, that $Z_t \int_{\frac{1}{2}}^1 \left[(\psi g_t) \circ \phi_t(x)\right]  \mu_t(dx) \to \int_{\frac{1}{2}}^1 \psi(x) f_{\theta}(x) \dx$. We first integrate by parts, obtaining $$\ds\int_{\frac{1}{2}}^1 [(\psi g_t) \circ \phi_t] (x) d \mu_0(x)  = \psi g_t F_0(1^-) - \psi g_t F_0(\tfrac{1}{2}^+) - \ds\int_{\frac{1}{2}}^1 \partial_x \left[(\psi g_t) \circ \phi_t  \right] (x) F_0(x) \dx  $$ where $F_0(x) :=  \mu_0[0,x]$ is the cumulative distribution function of $\mu_0$. Because $\phi_t(x)$ is continuous and satisfies $\phi_t(\frac{1}{2}) = \frac{1}{2}$ and $\phi_t(1) = 1$, we can also say that %
$\int_0^1 \partial_x \left[(\psi g_t) \circ \phi_t(x)  \right] \dx = [\psi g_t](1^-) - [\psi g_t ](\tfrac{1}{2}^+) $. This allows us to rewrite our above equation as  $$\ds\int_0^1 [(\psi g_t) \circ \phi_t] (x) d \mu_0(x)  =  \psi g_t(1 -  F_0(1^-)) - \psi g_t (1 - F_0(\tfrac{1}{2}^+)) + \ds\int_0^1 \partial_x \left[( \psi g_t) \circ \phi_t  \right] (x) (1 - F(x)) \dx  $$ 
We see that $1 - F_0(1^-) = 0$ because $\supp(\mu_t(dx)) \subseteq [0,1]$ and we can use Equation \ref{eq:hdgtx} to find that $g_t(\tfrac{1}{2}) = e^{-\tfrac{ \lambda (\gamma - 3)}{2}}$. Choosing $Z_t = e^{\theta t}$, we find that $$\ds\lim_{t \to \infty} e^{\theta t} [\psi g_t(1 - F_0)](\tfrac{1}{2}^+) = \ds\lim_{t \to \infty} \left( 1 - F_0(\tfrac{1}{2}) \right) \psi (\tfrac{1}{2}) e^{- (\frac{\lambda (\gamma - 3)}{2} - \theta) t} \to 0 \: \: \mathrm{when} \: \: \lambda (\gamma - 3) > 2 \theta $$
For the integral term, we write $y = \phi_t(x)$ and see that $\partial_x[\psi(y) g_t(y)] = \partial_y [\psi(y) g_t(y))] \cdot \partial_x y$, and integrating with respect to $y$ gives \begin{align*}  \ds\int_{\frac{1}{2}}^1 \partial_x \left[(\psi g_t) \circ \phi_t(x)  \right] (1 - F_0(x)) \dx   %
 &=   \ds\int_{\frac{1}{2}}^1 \left[\partial_y \left(\psi g_t\right) \circ y \right]  ( 1 - F_0(\phi_t^{-1}(y)) \dy \end{align*}
Because $g_t(x) \to (2x - 1)^{\lambda (\gamma - 3)} x^{- \lambda (\gamma - 2)}$, $\lim_{t \to \infty}\partial_x \left[\left(\psi g_t\right) \circ x \right]  = \partial_x [ \, (2x - 1)^{\lambda (\gamma - 3)} x^{- \lambda (\gamma - 2)} \psi(x) \,] $.
We also observe that \begin{align*} 1 - F_0(\phi_t^{-1}(x)) &= %
= \mu_0\left[1 -  \frac{1}{2} \left(1 - 
\frac{2 x - 1}{\sqrt{(2 x - 1)^2 + (1 - (2x - 1)^2) e^{-t}}} \right)   ,1 \right] \end{align*} Then, using our assumption on $\mu_0[1-x,1]$ for $x$ near $1$ to deduce that for large $t$, \begin{align*} 1 - F_0(\phi_t^{-1}(x)) &\approx %
\frac{C}{2^{\theta}} e^{-\theta t}  \left( \frac{  4x(1-x) + e^{t} \left((2x - 1)^2 - (2x - 1) \sqrt{(2x - 1)^2 + (1 - (2x-1)^2)e^{-t}}\right)}{(2x - 1)^2 + (1 - (2x-1)^2)e^{-t}} \right)^{\theta}   \end{align*} 
We can compute that %
$\ds\lim_{t \to \infty} e^{ \theta t} \left[1 - F_0( \phi_t^{-1}(x))\right]  = C 2^{1 - \theta} \left( x(1-x) \right)^{\theta} (2x-1)^{-2 \theta}$, and 
$$ e^{\theta t} \ds\int_{\frac{1}{2}}^1 \left[(\psi g_t) \circ \phi_t(x) \right] \mu_0(dx) \to \frac{C}{2^{\theta-1}} \ds\int_0^1 \partial_x \left[ (2x - 1)^{\lambda (\gamma - 3)} x^{- \lambda (\gamma - 2)} \psi(x) \right]  \left(\frac{x(1-x)}{(2x-1)^2} \right)^{\theta} \dx $$ and we can integrate the righthand side by parts to see that $$e^{ \theta t} \ds\int_0^1 \left[(\psi g_t) \circ \phi_t(x) \right] \mu_0(dx) \to \frac{\theta C}{2^{\theta-1}} \ds\int_0^1  (2x-1)^{\lambda (\gamma - 3) - 2 \theta - 1} (1-x)^{\theta - 1} x^{-\lambda (\gamma - 2) + \theta  - 1} \psi(x) \dx $$
Using Lemma \ref{lem:lefthalf}, we know that $\ds\lim_{t \to \infty} \mu_t[0,x) \to 0 $ for $x \leq  \frac{1}{2}$, and we can deduce that $$\mu_t(dx) \rightharpoonup \left\{
     \begin{array}{lr}
         Z_f^{-1} (2x-1)^{\lambda (\gamma - 3) - 2 \theta - 1} (1-x)^{\theta - 1} x^{-\lambda (\gamma - 2) + \theta  - 1} \dx & : x  \geq \frac{1}{2}\\
       0 & : x < \frac{1}{2}
     \end{array}
   \right.   \eqno{\qedhere}  $$
\end{proof}

\subsection{Steady States for Hawk-Dove Game} \label{sec:HDsteadystate}

We can find the possible steady-state behaviors for the multilevel Hawk Dove system by characterizing time-independent solutions of Equation \ref{eq:hawkdovePDE}. We seek to solve the equation
\begin{equation} \label{eq:reducedhd}0 = \dsdel{}{x} \left(x(1-x)(2x - 1) f(x) \right) +  \lambda f(x) \left[ \gamma x - 2 x^2 - \gamma M_1^f + 2 M_2^f \right]  \end{equation}

Due to the result of Lemma \ref{lem:lefthalf}, we expect no groups with fewer cooperators than in the within-group HD equilibrium at $x = \tfrac{1}{2}$, so we look for for steady-state densities of the form  %
 $$f(x) = \left\{
     \begin{array}{cr}
       0 & :  0 \leq x \leq \frac{1}{2} \\
       p(x) & : \frac{1}{2} < x \leq 1
     \end{array}
   \right.$$
   In other words, we look for steady-state densities with support between groups at the HD equilibrium and all-cooperator groups. Satisfactory densities $p(x)$ are of the form 
{ \large \[ \hspace{-10mm} p(x) = Z_f^{-1} x^{-\lambda \left(\gamma M_1^f - 2 M_2^f \right) - 1} \left( 1 - x \right)^{[\lambda (\gamma - 2)  - \lambda (\gamma M_1^f - 2 M_2^f)]-1} \left( 2x - \beta \right)^{ \left( \lambda ( 1 - \gamma) +2\lambda (\gamma M_1^f - 2 M_2^f) \right)-1} \] } As a test of self-consistency for our steady-state formula, in order for this density to actually correspond to a probability distribution, we require that the density is integrable. As a result, we need the power of $1-x$ to be greater than $-1$, which is satisfied when 
\[ \lambda (\gamma -2) -  \lambda (\gamma M_1^f - 2 M_2^f)  > 0 \Longrightarrow G(1) > \ds\int_0^1 G(x) f_{\infty}(x) \dx \]
where we recall that $G((1) = \gamma - 2$ and that $\int_0^1 G(x) f_{\infty}(x) \dx = \gamma M_1^f - 2 M_2^f$. Therefore it is impossible to have a density steady-state for this Hawk-Dove game for which the average payoff of the population $ \int_0^1 G(x) f_{\infty}(x) \dx$ exceeds the average payoff for members in a all-cooperator group $G(1)$. Because there are Hawk-Dove games for which $G(x)$ can exceed $G(1)$ (namely, when $3 < \gamma < 4$), this means that the long-time steady-state cannot fully take advantage of the optimal payoff achieved by groups with average payoff exceeding all-cooperator groups. To explore this, we aim to compute $\hat{x}_{\lambda} = \argsup_{x \in [0,1]}\{(f_{\theta}(x)\}$, the modal group type in the steady-state distribution.
We can parametrize our steady-states by the Hölder exponent $\theta$ of $p(x)$ near $x=1$, which is related to the exponent of the $(1-x)$ term by $$\theta = (\gamma - 2) \lambda - \lambda (\gamma M_1^f - 2 M_2^f)\Longrightarrow  \lambda \left(\gamma M_1^f - 2 M_2^f \right) = (\gamma - 2) \lambda -  \theta $$ %
allowing us to rewrite our steady-states as {\large \[ p_{\theta}(x)  = Z_f^{-1} x^{(2 - \gamma) \lambda + \theta  - 1} \left(1-x\right)^{\theta -1} \left( 2x - 1 \right)^{\lambda (\gamma -3) - 2 \theta - 1} \] }
We observe that the only steady-state of the multilevel HD PDE with given Hölder exponent $\theta$ near $x=1$ agrees with the steady-state shown in Proposition \ref{prop:steadystatespecialhd} to be achieved as the long-time behavior of Equation \ref{eq:hawkdovePDE} for initial distributions with the corresponding Hölder exponent. We also note that such a steady-state is integrable if and only if $\lambda (\gamma - 3) > 2 \theta $, and that if $\gamma = 3$, it is not possible to have an integrable steady-state of this form, regardless of the value of $\lambda$. 
 To determine the fraction of cooperators $x$ at which the density $p_{\theta}(x)$ is maximized, we compute $$p'_{\theta}(x) = Z_f^{-1} g(x) x^{ \left((2 - \gamma) \lambda + \theta \right) - 2} \left(1-x\right)^{\theta -2} \left( 2x - 1\right)^{ \left(\lambda (\gamma - 3) - 2 \theta \right) - 2} $$ where $$g(x) = (\gamma -2) \lambda - \theta + 1 - \left[\lambda \gamma + 4 \right] x+ \left[ 2 \lambda + 6 \theta \right] x^2$$ For sufficiently large $\lambda$ and $\theta$, we know from the above proposition that $p_{\theta}(\tfrac{\beta}{2}) = p_{\theta}(1) = 0$, so $f_{\theta}(x)$ is maximized at a root of $g(x)$. These roots are given by $$x^{\lambda}_{\pm} = \frac{\lambda \gamma + 4}{4 \lambda + 12} \pm \sqrt{\frac{\left( \lambda \gamma + 4\right)^2}{4 (2 \lambda + 6)^2} - \frac{(\gamma - 2) \lambda -  \theta + 1}{2 \lambda + 6}}$$
For large $\lambda$, we see that \begin{equation} \label{eq:hdpeakinfinity} x^{\infty}_{\pm} := \ds\lim_{\lambda \to \infty} x^{\lambda}_{\pm} = \frac{\gamma}{4} \pm \sqrt{\frac{\gamma^2}{16} - \frac{\left( \gamma - 2\right) }{2}}  = \frac{\gamma}{4} \pm \sqrt{\left( \frac{\gamma}{4} - 1\right)^2} \end{equation} We now break our analysis into two cases: 
\begin{itemize}
\item If $\gamma > 4$, then $x^{\infty}_{\pm} = \frac{\gamma}{4} \pm \left( \frac{\gamma}{4} - 1\right)$, so $x^{\infty}_+ = \frac{\gamma}{2} - 1 > 1$ and $x^{\infty}_{-} = 1$, so the most abundant group type in the steady-state $f_{\theta}(x)$ has cooperator fraction $x$ approach $1$ as $\lambda \to \infty$. 
\item If $\gamma \leq 4$, then $x^{\infty}_{\pm} = \frac{\gamma}{4} \pm \left( 1 - \frac{\gamma}{4} \right)$, so $x^{\infty}_{+} = 1$ and $x^{\infty}_{-} = \frac{\gamma}{2} - 1 \in (0,1)$. In this case, the most abundant group type in steady-state approaches an intermediate fraction $\frac{\gamma}{2} - 1 $ as $\lambda \to \infty$. 
\end{itemize}
In the extreme case of the HD game where $\gamma = 3$, Equation \ref{eq:hdpeakinfinity} yields $x^{\infty}_{-} |_{\gamma = 3} = \frac{3}{2} - 1 = \frac{1}{2}$. This tells us that as $\gamma$ approaches a value at which the density $p(x)$ cannot be integrable, %
then the most abundant group type has cooperator composition approaching $x^{\infty}_{-} |_{\gamma = 3}  = \frac{1}{2}$ even though the most fit group has $x^* = \tfrac{3}{4}$ cooperators, consistent with the prediction from Proposition \ref{prop:deltahd} for convergence to the within-group HD equilibrium $\delta(x - \frac{1}{2})$ when $\gamma = 3$.

In Figure \ref{fig:hdghost}, we illustrate the discrepancy between the most fit group type $x^*$ and the most abundant group type at steady-state $\hat{x}_{\lambda}$ as $\lambda \to \infty$. The figure shows how fewer cooperators are present than optimal for group reproduction $G(x)$ at steady-state even in the infinite $\lambda$ limit when groups are best off with an intermediate fraction of cooperators $x^* \in (0,1)$.

\begin{figure}[H] 
  \centering
    \includegraphics[width=0.6\textwidth]{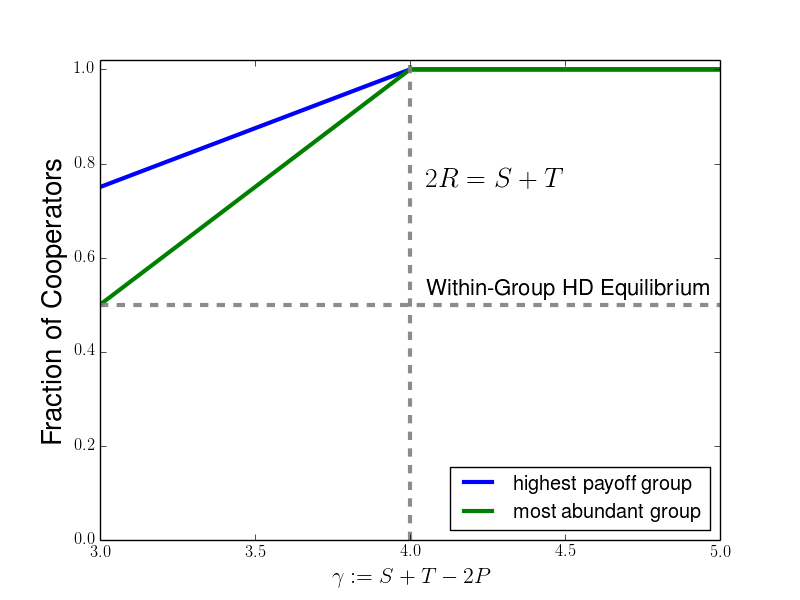}
      \caption{Comparison of the type of group composition $x$ with maximum group reproduction rate $G(x)$ to the peak abundance for the steady-state $\hat{x}_{\lambda}$ as $\lambda \to \infty$, plotted in terms of the parameter $\gamma = S+T-2P$. For $\gamma \geq 4$ both peak group fitness and most abundant group type are all-cooperatorx groups, while for $\gamma < 4$, the most fit group type $\tfrac{\gamma}{4}$ exceeds the most abundant group type at steady-state $\tfrac{\gamma}{2} - 1$. }
      \label{fig:hdghost}
\end{figure}

\addtocontents{toc}{\protect\setcounter{tocdepth}{1}}

\section{Discussion}

In this paper, we considered a model of multilevel selection in which within-group and between-group competition depend on payoffs from either the Prisoner's Dilemma (PD) or the Hawk-Dove (HD) game.
We studied a continuous-time stochastic process with explicit individual and group-level replication events, and derived a nonlocal PDE description of the multilevel system in the limit of large population size. We studied the steady-state behavior of this PDE, and characterized the critical ratio of relative selection strength $\lambda$ between the two levels which separate a regime in which defection dominates in all groups from a regime in which cooperation and defection can coexist in every group. 

\myindent In particular, we saw a qualitatively different behavior in two different scenarios: when all-cooperator groups have maximal average payoff (when ``Many hands make light work'') and when average payoff is maximized by groups with a mix of cooperators and defectors (``Too many cooks spoil the broth''). In the former case, we showed that defectors are most abundant at steady-state in the Prisoner's Dilemma 
when selection is strongest at the within-group level $(\lambda \to 0)$, while groups with many cooperators are most abundant at steady-state when selection is strongest at the between-group level ($\lambda \to \infty$). In the latter case, the effect of within-group selection can be observed for any relative strength of selection at the two levels, as the mean and modal groups at steady-state feature fewer cooperators than is optimal for between-group competition. This effect is still felt even in the limit as between-group competition becomes much stronger than within-group competition (i.e. $w_G \gg w_I$ or $\lambda \to \infty$). A similar phenomenon is present in the Hawk-Dove game, although the effects of within-group selection are realized through groups achieving a fraction of cooperators near the mixed-strategy equilibrium $x^{eq} = - \tfrac{\beta}{\alpha}$ for the replicator dynamic in a well-mixed population. 

\myindent This discrepancy between the behavior of this multilevel selection model when group reproductive success is optimized at an intermediate or extremal cooperator frequency could shed light on questions related to the emergence of higher levels of biological organization.  For situations in which one type of individual is purely dominant at a lower level of organization (type-$I$ in Luo's terminology) and another type is purely dominant at a higher level of organization (type-$G$), our analysis has shown that the long-time distribution of the two types is determined by the relative strength of the two levels.  Further, strong enough selection at the between-group level is sufficient to achieve abundant cooperation at steady-state when all-cooperator groups have the highest average payoff. However, when mixed groups are most fit at the between-group level, it turns out that this form of deterministic multilevel selection is not sufficient to achieve the emergence of optimal group average payoff at steady-state, and the multilevel system can never eliminate the ghost of lower-level selection. As such, problems with intermediate group payoff optima are not as simple to solve. In the extreme case of group payoff function $G(x) = x(1-x)$ for the Prisoner's Dilemma, no cooperation could ever succeed for any relative selection strength at the two levels, and so alternative mechanisms are necessary to achieve emergence of cooperation in such scenarios. 

\myindent The threshold criteria in Propositions \ref{prop:steadystatespecialpd} and \ref{prop:steadystatespecialhd} for the level of relative between-group selection strength $\lambda$ needed to exhibit long-time cooperation highlight the key importance of all-cooperator groups in sustaining cooperative outcomes for the whole metapopulation. These criteria also provide some analytical support for the numerical intuition provided by Markvoort et al \citep{markvoort2014computer} for the necessity of initial all-cooperator groups to promote evolutionary coexistence of cooperators and defectors. The success of all-cooperator groups also plays an important role in Traulsen-Nowak stochastic model of multilevel selection \citep{traulsen2008analytical}, in which there is a separation of time scales such that selection of cooperators or defectors within groups acts much faster than selection between groups. That this effect appears even in the continuum limit and in cases where either within-group selection or between-group selection most favors a mix of cooperators and defectors indicates that the success of all-cooperator groups is truly necessary for the viability of the whole population.

\myindent In this paper, we have primarily focused on one family of Prisoner's Dilemmas and one family of Hawk Dove games. For these families, the solvability of the within-group replicator dynamics allowed us to use the method of characteristics to fully characterize the long-time behavior of the corresponding special cases of Equation \ref{eq:generalPDE}, and helped to provide insight into the properties of the steady-state behavior as $\lambda$ ranges from $0$ to $\infty$. However, it is interesting to wonder how general these results are, particularly as to whether one can similarly characterize the asymptotic behavior of Equation \ref{eq:generalPDE} without solvable characteristic curves, and whether there is a similar discrepancy in qualitative behavior when group payoff $G(x)$ is maximized by all-cooperator groups and when groups are best off with a mix of cooperators and defectors. In addition, one could explore the breadth of biological phenomena that have as a natural description a competition between the effects at two levels of selection.

\myindent Given the gap in cooperation displayed by the mean and modal groups at steady-state relative to intermediate group-level optima even for strong between-group competition, another problem which arises is how even higher levels of selection could emerge if the effects of the lowest level of selection can never be forgotten. Is there a way to overcome this ghost of lower level selection? Perhaps the conflict between the two initial levels of selection must be resolved through an alternative mechanism which eliminates the misalignment of incentives at the two levels. Pruitt et al have observed that individual-level regulation of birth rates of cheaters and cooperators can stabilize optimal trait fractions in nest-structured spider populations \citep{pruitt2014site,pruitt2017intense}, while Haig has suggested mechanisms for intracellular and intercellular selection which allow mitochondrial DNA to prevent the uncontrolled replication that diminishes cellular function \citep{haig2016intracellular}. Aktipis et al characterize five mechanisms by which organisms can maintain healthy ratios of somatic cells, and demonstrate that the failure of such mechanisms and the dominance of ``defector'' cells constitute the onset of cancer \citep{aktipis2015cancer}. With these examples in mind, a natural future direction for our multilevel selection model is to explore the potential mechanisms by which groups can overcome this ghost of lower level selection and potentially evolve towards higher levels of biological complexity.

%renewcommand{\abstract}{Acknowledgements}
\renewcommand{\abstractname}{Acknowledgments}
\begin{abstract} 
 This research was supported by NSF grants DMS-1514606 and GEO-1211972. I would like to thank Carl Veller for initial discussions and advice on the problem of multilevel selection in evolutionary games. I am grateful to Carl Veller, Simon Levin, Joshua Plotkin, and Chai Molina for helpful comments on the manuscript and to Peter Constantin, Robin Pemantle, Fernando Rossine, Dylan Morris, George Constable, and Gergely Boza for helpful discussions. % 

\end{abstract}

\bibliography{multilevelselection}
\bibliographystyle{ieeetr}

%\nocite{*}

\appendix

\section{Integrals for Prisoner's Dilemma}

In this section, we will use the solutions to Equation \ref{eq:specialPDrep}, the replicator dynamics for the special PD from Section \ref{sec:PDsolvable}, in order to express the solution of Equation \ref{eq:specialPDPDE} along characteristics in terms of the integrals $\int_0^t x(x_0,s) ds$ and $\int_0^t x(x_0,s)^2 ds$ . We use the formulas for both the forward solution $x(t,x_0) = \phi_t(x_0)$ given by Equation \ref{eq:specialPDchar} and the backward solution $\phi^{-1}_t(x) = x_0(t,x)$ given by Equation \ref{eq:specialPDbackwardschar}.

The easier integral for us to compute is \begin{align*} \ds\int_0^t x(s)^2 ds &= \ds\int_0^t \frac{ds}{e^{-2s} \left( \frac{1}{x_0^2} - 1 \right) + 1} = \frac{1}{2} \ds\int_1^{e^{2t}} \frac{du}{u\left(u \left(\frac{1}{x_0^2} - 1\right) + 1\right)} = \frac{1}{2}\ds\int_1^{e^{2t}}  \left( \frac{1}{u} + \frac{\left(\frac{1}{x_0^2} - 1\right)}{u \left(\frac{1}{x_0^2} - 1\right)+ 1}\right) du \\ &= \frac{1}{2} \left( \log(u) + \log\left(\left(\frac{1}{x_0^2} - 1\right) u + 1 \right) \right) \bigg|_1^{e^{2t}} = t + \frac{1}{2} \log\left( (1 - x_0^2) e^{2t} + x_0^2 \right) \end{align*}
Then, plugging in for $x_0^2 = \left[\left( \frac{1}{x^2} - 1 \right)e^{-2t} + 1\right]^{-1}$ gives us that \begin{equation} \label{eq:xssquaredintegral} \ds\int_0^t x(s)^2 ds = t + \frac{1}{2} \log \left( \left( 1 - \left( \frac{1}{x^2} - 1 \right) \right) e^{2t} + \left( \frac{1}{x^2} - 1 \right) \right)  = t - \frac{1}{2} \log \left(e^{-2t} \left( 1 - x^2 \right) + x^2 \right) \end{equation}
The more involved integral is $$\ds\int_0^t x(s) ds  =  \ds\int_0^t \frac{ds}{\sqrt{e^{-2s} \left( \frac{1}{x_0^2} - 1 \right) + 1}} = \frac{1}{2} \ds\int_{\frac{1}{x_0^2}}^{e^{2t}\left(\frac{1}{x_0^2} - 1 \right) + 1} \frac{du}{(u-1) u^{1/2}}$$ where $u = e^{2s} \left( \frac{1}{x_0^2} - 1 \right) + 1$. Partial fraction decomposition gives us \begin{align*}  \ds\int_0^t x(s) ds &= \frac{1}{4} \ds\int_{\frac{1}{x_0^2}}^{e^{2t}\left(\frac{1}{x_0^2} - 1 \right) + 1} du \left(\frac{2}{u^{1/2}} + \frac{1}{1 - u^{1/2}} - \frac{1}{1 + u^{1/2}} \right) \\ &= \frac{1}{2} \left( - \log \left(1 + u^{1/2} \right) + \log \left(|1 - u^{1/2} |\right) \right)\bigg|_{\frac{1}{x_0^2}}^{e^{2t}\left(\frac{1}{x_0^2} - 1 \right) + 1} \\ &=   \frac{1}{2} \left( - \log \left(1 - \sqrt{e^{2t} \left( \frac{1}{x_0^2} - 1\right) + 1} \right) + \log\left(1 + \frac{1}{x_0} \right)\right) \\ &+ \frac{1}{2} \left( \log\left( - 1 + \sqrt{e^{2t} \left( \frac{1}{x_0^2} - 1\right) + 1}  \right) - \log\left(\frac{1}{x_0} - 1 \right) \right) \end{align*}

Using $x_0^2 = \frac{1}{e^{-2t}\left(\frac{1}{x^2} - 1\right) + 1}$, we see that $1 - \sqrt{e^{2t} \left( \frac{1}{x_0^2} - 1\right) + 1} = 1 + \frac{1}{x}$ and $-1  + \sqrt{e^{2t} \left( \frac{1}{x_0^2} - 1\right) + 1} = 1 - \frac{1}{x}$, and we can deduce that $$\ds\int_0^t x(s) ds = \frac{1}{2} \left(\log\left(\frac{1-x}{1+x} \right)  + \log \left( \frac{1 + x_0}{1 - x_0} \right) \right) =  \frac{1}{2} \left(\log\left(\frac{1-x}{1+x} \right)  + \log \left( \frac{(1 + x_0)^2}{1 - x_0^2} \right) \right)  $$ Again using the definition of $x_0$ we see that 

\begin{align*} \frac{(1+x_0)^2}{1 - x_0^2} &= \left[\frac{\left(\sqrt{e^{-2t} \left(\frac{1}{x^2} - 1 \right) + 1} + 1 \right)^2 }{e^{-2t} \left(\frac{1}{x^2} - 1 \right) + 1} \right] \bigg/  \left[1 - \frac{1}{e^{-2t} \left(\frac{1}{x^2} - 1 \right) + 1} \right] =
\frac{\left(\sqrt{e^{-2t} \left(\frac{1}{x^2} - 1 \right) + 1} + 1 \right)^2}{e^{-2t} \left(\frac{1}{x^2} - 1\right)} \\ &=
e^{2t} \left(e^{-2t} + \frac{2 x^2 + 2x^2 \sqrt{e^{-2t} \left(\frac{1}{x^2} - 1 \right) + 1}}{(1-x)(1+x)} \right) \end{align*} %
which allows us to deduce that \begin{equation} \label{eq:xsintegral} \ds\int_0^t x(s) ds = t + \frac{1}{2} \log \left(\frac{1+x}{1-x} \right) + \frac{1}{2} \log \left( \left(e^{-2t} + \frac{2 x^2 + 2x^2 \sqrt{e^{-2t} \left(\frac{1}{x^2} - 1 \right) + 1}}{(1-x)(1+x)}  \right)  \right) \end{equation}

\section{Integrals for Hawk-Dove}

In this section, we will use the solutions to Equation \ref{eq:specialHDrep}, the replicator dynamics for the special PD from Section \ref{sec:HDsolvable}, in order to express the solution of Equation \ref{eq:hawkdovePDE} along characteristics in terms of the integrals $\int_0^t x(x_0,s) ds$ and $\int_0^t x(x_0,s)^2 ds$ . We use the formulas for both the forward solution $x(t,x_0) = \phi_t(x_0)$ given by Equation \ref{eq:hdcharacteristics} and the backward solution $\phi^{-1}_t(x) = x_0(t,x)$ given by Equation \ref{eq:hdcharbackwards}.

\subsection{$x(t),x_0 > \frac{1}{2}$}
Here we compute $\ds\int_0^t x(s) ds$ for $x(t),x_0 > \tfrac{1}{2}$. Making the substitution $p = 2x_0 - 1 > 0$, we have \begin{align*} \ds\int_0^t x(s) ds &= \ds\int_0^t \left[ \frac{1}{2} \left(1 + \frac{p}{\sqrt{(p^2 + (1 - p^2) e^s}} \right) \right] ds \\ &= \frac{t}{2} + \frac{p}{2} \ds\int_0^t \frac{ds}{\sqrt{p^2 + (1-p^2) e^s}} \\ &= \frac{t}{2} +  \frac{p}{2} \ds\int_1^{p^2 + (1-p^2)e^t} \frac{du}{(u - p^2) \sqrt{u}} \; \: \: \textnormal{where  } u = p^2 + (1-p^2)e^s \\ &= \frac{t}{2} + \frac{1}{2p} \ds\int_1^{p^2 + (1-p^2)e^t} \left(\frac{-1}{u^{1/2}} + \frac{1}{2} \frac{1}{u^{1/2} - p} + \frac{1}{2} \frac{1}{u^{1/2} + p} \right) du \\ &= \frac{t}{2} + \frac{1}{2} \left[ \log\left(|u^{1/2} - p|\right) - \log \left( |u^{1/2} + p| \right) \right] \bigg|_1^{p^2 + (1-p^2) e^t} \\ &= \frac{t}{2} + \frac{1}{2} \left(\log \left(\frac{1+p}{1-p}\right) + \log \left(\frac{\sqrt{p^2 + (1-p^2)e^t} - p}{p + \sqrt{p^2 + (1-p^2)e^t}} \right) \right) \end{align*}

Now we can write $p = 2x_0 - 1$ as a function of $x$ to write our integral in terms of $x$. We have that $$p^2 = (2x_0 - 1)^2 = \frac{(2x-1)^2}{(2x-1)^2 + (1 - (2x-1)^2)e^{-t}}  \Longrightarrow p^2 + (1-p^2) e^t = \frac{1}{(2x-1)^2 + (1 - (2x-1)^2)e^{-t}}$$ and we can compute that $$\frac{\sqrt{p^2 + (1-p^2) e^t} - p}{\sqrt{p^2 + (1-p^2)e^t} + p} = \left(\frac{1 - (2x -1)}{\sqrt{(2x-1)^2 + (1 - (2x-1)^2)e^{-t}}} \right)  \left(\frac{\sqrt{(2x-1)^2 + (1 - (2x-1)^2)e^{-t}}}{1 + (2x - 1)} \right)= \frac{1-x}{x} $$

We can also find that \begin{align*} \frac{1+p}{1-p} = \frac{(1+p)^2}{1 - p^2} &=  \left( 1 + \frac{2x-1}{\sqrt{(2x-1)^2 + (1 - (2x-1)^2)e^{-t}}} \right)^2  \bigg/ \left(\frac{(1 - (2x-1)^2) e^{-t}}{(2x-1)^2 + (1 - (2x-1)^2)e^{-t}} \right) \\ &= e^t \left[ \frac{\left(\sqrt{(2x-1)^2 + (1 - (2x-1)^2)e^{-t}} + (2x -1)\right)^2}{4 x(1-x)} \right]\end{align*} and we can use the two equations above to find that \begin{equation} \label{hdxequation} \ds\int_0^t x(s) ds =  t + \log \left( \frac{\sqrt{(2x-1)^2 + (1 - (2x-1)^2)e^{-t}} + (2x -1)}{2 x} \right) \end{equation}

Next we would like to integrate $x(s)^2$, which we can see that \begin{align*} \ds\int_0^t x(s)^2 ds &= \frac{1}{4} \ds\int_0^t \left( 1 + \frac{p}{\sqrt{p^2 + (1-p^2)e^{s}}}  \right)^2 ds  \\&= \frac{t}{4} + \frac{p}{2} \ds\int_0^t  \frac{ds}{\sqrt{p^2 + (1-p^2)e^{s}}} + \frac{p^2}{4} \ds\int_0^t \frac{ds}{p^2 + (1-p^2) e^s} \\ &= \frac{t}{4} + \left(t + \log \left( \frac{\sqrt{(2x-1)^2 + (1 - (2x-1)^2)e^{-t}} + (2x -1)}{2 x} \right)  \right) + \frac{p^2}{4} \ds\int_1^{p^2 + (1-p^2)e^t} \frac{1}{u(u - p^2)} du \end{align*}

For the last term, we see that \begin{align*} \frac{p^2}{4} \ds\int_1^{p^2 + (1-p^2)e^t} \frac{1}{u(u - p^2)} du  &= \frac{1}{4} \ds\int_1^{p^2 + (1-p^2)e^t} \left(\frac{1}{u - p^2} - \frac{1}{u} \right) du %
  = \frac{1}{4} \left[t + \log \left( (2x-1)^2 + (1 - (2x-1)^2)e^{-t} \right) \right] \end{align*} Putting all of this together, we find that 

\begin{equation} \label{hdx2equation} \ds\int_0^t x(s)^2 ds = t + \log \left( \frac{\sqrt{(2x-1)^2 + (1 - (2x-1)^2)e^{-t}} + (2x -1)}{2 x \left( (2x-1)^2 + (1 - (2x-1)^2)e^{-t}  \right) ^{-1/4}} \right) \end{equation}

\subsection{$x(s),x_0 < \frac{1}{2}$}

For $x(s) < \frac{1}{2}$ (and correspondingly $x_0 < \frac{1}{2}$), we have that $$x(s) = \frac{1}{2} \left(1 - \frac{1 - 2x_0}{\sqrt{(1 - 2x_0)^2 + (1 - (1-2x_0)^2}e^{-t}} \right)  = \frac{1}{2} \left( 1 - \frac{q}{\sqrt{q^2 + (1-q^2)e^{-t}}}\right)$$ where $q = 1 - 2x_0 > 0$ when $x_0 < \frac{1}{2}$. Then we can the integrals computed above to see that \begin{align*} \ds\int_0^t x(s) ds &= \frac{1}{2} \ds\int_0^t  \left( 1 - \frac{q}{\sqrt{q^2 + (1-q^2)e^{-s}}}\right)  ds  \\ &= \frac{t}{2} - \frac{p}{2} \ds\int_0^t \frac{1}{\sqrt{q^2 + (1-q^2)e^{-s}}}  ds \\ &= \frac{t}{2} -  \left( \frac{t}{2} + \log \left( \frac{\sqrt{(1-2x)^2 + (1 - (1-2x)^2)e^{-t}} + (1-2x)}{2 x} \right) \right) \\ &= - \log \left( \frac{\sqrt{(1-2x)^2 + (1 - (1-2x)^2)e^{-t}} + (1-2x )}{2 x} \right) \end{align*}

Similarly, we can compute \begin{align*} \ds\int_0^t x(s)^2 ds &= \frac{1}{4} \ds\int_0^t \left( 1 - \frac{q}{\sqrt{q^2 + (1-q^2)e^{-s}}}\right)^2 ds \\ &= \frac{t}{4} - \frac{q}{2} \ds\int_0^t  \frac{1}{\sqrt{q^2 + (1-q^2)e^{-s}}}  ds + \frac{q^2}{4} \ds\int_0^t \frac{ds}{q^2 + (1-q^2)e^{-s}} \\ &= \frac{t}{4} - \left( \frac{t}{2} + \log \left( \frac{\sqrt{(1-2x)^2 + (1 - (1-2x)^2)e^{-t}} + (1-2x)}{2 x} \right)\right) \\ &+ \frac{t}{4} + \frac{1}{4}  \log \left( (1-2x)^2 + (1 - (1-2x)^2)e^{-t} \right) \\ &=  \frac{1}{4} \log \left( (1-2x)^2 + (1 - (1-2x)^2)e^{-t} \right) - \log \left( \frac{\sqrt{(1-2x)^2 + (1 - (1-2x)^2)e^{-t}} + (1-2x)}{2 x}  \right) \end{align*} 

In particular, we observe that our expressions for $\int_0^t x(s) ds$ and $\int_0^t x(s)^2 ds$ do not have any linear factors of $t$ when $x(s) < \frac{1}{2}$, which are notably present in our expressions when $x(s) > \frac{1}{2}$. We can use this to show how the probability of group compositions below the Hawk-Dove equilibrium vanishes as $t \to \infty$.

\section{Preservation of Hölder exponent near $x=1$} \label{sec:Holderpreserved}

In this section, we study how the Hölder exponent of the population state $f(t,x)$ evolves in time. Using analytically-solvable special cases of Equation \ref{eq:generalPDE} with illustrative choices of initial data, we demonstrate examples in which the Hölder exponent of the distribution of cooperators near $x=1$ is preserved by the dynamics of our multilevel system. A proof that the Hölder exponent near $x=1$ is preserved for solutions to more general cases to Equation \ref{eq:generalPDE} will be presented in future work.

\subsection{Frequency Independent (Luo-Mattingly) Case}

For the frequency independent model, described by Equation \ref{eq:luomattingly}, we can use the reasoning of Example \ref{ex:exactsolution} and the exact solution for uniform initial data given by Luo and Mattingly \citep{luo2017scaling} to calculate exact solutions for our choice of initial data. For density-valued initial data of the form $f(0,x) = \theta (1-x)^{\theta - 1}$, we see that Equation \ref{eq:luomattingly} has density-valued solutions given by 
\begin{equation} \label{eq:lmexactsolution} f_{\theta}(t,x) = \frac{\theta (\lambda - 1) \left(1 - e^{-t}\right)}{1 - e^{- (\lambda - 1) t}} \left[ \left( 1 - x \right)e^{-t} + x \right]^{\lambda - \theta - 1} \left( 1 -x \right)^{\theta - 1} e^{- (\theta - 1) t}  \end{equation}

Then we use L'H\^opital's rule to see that 
\[ \ds\lim_{x \to 0} \frac{\ds\int_{1-x}^1 f_{\theta}(t,x)}{x^{\Theta}} = \ds\lim_{x \to 0} \frac{f_{\theta}(t,1-x)}{\Theta x^{\Theta - 1}} = \frac{\theta}{\Theta} \left[\ds\lim_{x \to 0} x^{\theta - \Theta}\right] \left[ \ds\lim_{x \to 0} \left( \frac{(\lambda - 1) e^{-(\theta -1) t} \left(1 - e^{-t}\right)}{1 - e^{- (\lambda - 1) t}} \left[x e^{-t} + (1-x) \right]^{\lambda - \theta - 1} \right) \right] \] 
Further noting that \[  \ds\lim_{x \to 0} \left( \frac{(\lambda - 1) e^{-(\theta -1) t} \left(1 - e^{-t}\right)}{1 - e^{- (\lambda - 1) t}} \left[x e^{-t} + (1-x) \right]^{\lambda - \theta - 1} \right) = \frac{(\lambda - 1) e^{-(\theta -1) t} \left(1 - e^{-t}\right)}{1 - e^{- (\lambda - 1) t}} := K > 0  \]
Then we can see that 
 \begin{displaymath}
    \ds\lim_{x \to 0} \frac{\ds\int_{1-x}^1 f_{\theta}(t,x)}{x^{\Theta}} = \left\{
     \begin{array}{ll}
       0 & : \Theta < \theta \\
       K & : \Theta = \theta  \\
     \infty &: \Theta > \theta \end{array}
   \right.
\end{displaymath} 
so we can deduce that the Hölder exponent of the measure $f_{\theta}(t,x) \dx$ with corresponding density $f_{\theta}(t,x)$ is equal to $\theta$ for all times $t \geq 0$. Thus, for the family of initial conditions $f_0(x) = \theta(1-x)^{\theta -1}$ (with initial Hölder exponent $\theta$), the Hölder exponent for the population near $x=1$ is preserved in time.

\subsection{Prisoner's Dilemma with $\beta = -1$, $\alpha = -1$}

We now consider the special family of Prisoner's Dilemmas whose behavior is the subject of Section \ref{sec:PDsolvable}. Recalling Equation \ref{eq:PDspecialsolution}, we have that 
\[ f(t,x) = \theta Z_f^{-1}e^{\left[\lambda  \left(\gamma - 1 \right) - 2 \theta\right] t} g_t(x) \left(  \left(1 - x^2 \right) + e^{2t} \left(x^2 - x \sqrt{e^{-2t}\left(1 - x^2\right) + x^2} \right) \right)^{\theta - 1} \left(e^{-2t} (1-x^2) + x^2 \right)^{1 - \theta}  \]
We can then compute that 
\begin{align}\ds\lim_{x \to 0} \left[  \frac{\int_{1-x}^1 f(t,y)\dy}{x^{\Theta}} \right] &= \ds\lim_{x \to 0} \frac{f(t,1-x)}{\Theta x^{\Theta - 1}} \nonumber \\ \label{eq:Holderdensity} &= \frac{\theta}{\Theta} Z_f^{-1} e^{\left[\lambda  \left(\gamma - 1 \right) - 2 \theta\right] t} \ds\lim_{x \to 0} \left[  \frac{1}{ x^{\Theta -1}}  g_t(1-x)  \left(e^{-2t} (2x - x^2) + (1-x)^2 \right)^{1 - \theta} \left(A(x,t)\right)^{\theta - 1} \right]\end{align}
where $A(x,t) :=  \left(2x - x^2 \right) + e^{2t} \left((1-x)^2 - (1-x) \sqrt{e^{-2t}\left(2x - x^2\right) + (1-x)^2} \right) $.
To simplify the expression in Equation \ref{eq:Holderdensity}, we first note that $\lim_{x \to 0} \left[e^{-2t} (2x - x^2) + (1-x)^2 \right]^{1 - \theta}  =  1$. Further, recalling Equation \ref{eq:gtx}, we have that 
\begin{equation}   \label{eq:gt1-x}  g_t(1-x) =% 
 \left( \frac{x e^{-2t}}{2-x} + \frac{2(1-x)^2 + 2 (1-x) \sqrt{e^{-2t} \left(2x - x^2 \right) + (1-x)^2 }}{(2-x)^2} \right)^{\frac{\lambda \gamma}{2}} \left( e^{-2t} (2x - x^2) + (1-x)^2 \right)^{-\frac{\lambda}{2}} \end{equation}
and we see that $\lim_{x \to 0} g_t(1-x) = 1$.
This allows us to say that \begin{align*}\ds\lim_{x \to 0} \left[  \frac{\int_{1-x}^1 f(t,y)\dy}{x^{\Theta}} \right] &= \frac{\theta}{\Theta} Z_f^{-1} e^{\left[\lambda  \left(\gamma - 1 \right) - 2 \theta\right] t} \ds\lim_{x \to 0} \left[ \frac{\left(A(x,t)\right)^{\theta -1}}{x^{\Theta -1}} \right] \\ &= \frac{\theta}{\Theta} Z_f^{-1} e^{\left[\lambda  \left(\gamma - 1 \right) - 2 \theta\right] t}  \left( \ds\lim_{x \to 0} \left[ \frac{A(x,t)}{x}\right]\right)^{\theta -1} \ds\lim_{x \to 0} \left[ x^{\theta - \Theta} \right] \end{align*} 
Then we see that 
\begin{align*} \ds\lim_{x \to 0} \frac{A(x)}{x} &= \ds\lim_{x \to 0} \left[ \frac{2x - x^2}{x} + \frac{e^{2t}}{x} \left((1-x)^2 - (1-x) \sqrt{e^{-2t} (2x - x^2) + (1-x)^2} \right) \right] \\  &= 2 + e^{2t}\ds\lim_{x \to 0} \left[ - 2 (1-x) + \sqrt{e^{-2t}(2x-x^2) + (1-x)^2} - (1-x)^2 \frac{ e^{-2t} - 1 }{ \sqrt{e^{-2t}(2x-x^2) + (1-x)^2}} \right]  =  1 \end{align*}
and therefore we see that 
\begin{displaymath}
    \ds\lim_{x \to 0} \frac{\ds\int_{1-x}^1 f(t,y) \dy}{x^{\Theta}} = \left\{
     \begin{array}{lr}
       0 & : \Theta < \theta\\
     Z_f^{-1} e^{\left[\lambda  \left(\gamma - 1 \right) - 2 \theta\right] t}  & : \Theta = \theta  \\
     \infty &: \Theta > \theta \end{array}
   \right.
\end{displaymath} 
and so we see that the Hölder exponent of $f(t,x)$ near $x=1$ agrees with the Hölder exponent of the initial distribution when initial distributions are chosen to have the form $\theta (1-x)^{\theta - 1}$. 
%

%%\end{appendices}

\end{document}